\documentclass[11pt]{article}

\usepackage[a4paper,margin=1in]{geometry}
\usepackage{amsmath,amssymb,amsthm,mathtools}
\numberwithin{equation}{section}
\usepackage{booktabs}
\usepackage{enumitem}
\usepackage{microtype}
\usepackage{algorithm}
\usepackage{algpseudocode}
\usepackage{xcolor}
\usepackage[hidelinks]{hyperref}
\usepackage[nameinlink,capitalise,noabbrev]{cleveref}

\usepackage{aliascnt}
\usepackage{xparse} 

\makeatletter
\let\AC@oldnewtheorem\newtheorem
\RenewDocumentCommand{\newtheorem}{s m o m o}{%
  \IfBooleanTF{#1}{%
    \AC@oldnewtheorem*{#2}{#4}%
  }{%
    \IfNoValueTF{#3}{%
      \IfNoValueTF{#5}{%
        \AC@oldnewtheorem{#2}{#4}%
      }{%
        \AC@oldnewtheorem{#2}{#4}[#5]%
      }%
    }{%
      \newaliascnt{#2}{#3}%
      \AC@oldnewtheorem{#2}[#2]{#4}%
      \aliascntresetthe{#2}%
    }%
  }%
}
\makeatother

\newtheorem{theorem}{Theorem}[section]
\newtheorem{lemma}[theorem]{Lemma}
\newtheorem{proposition}[theorem]{Proposition}

\theoremstyle{definition}
\newtheorem{definition}[theorem]{Definition}
\theoremstyle{remark}

\newcommand{\C}{\mathbb C}
\newcommand{\D}{D_{\mathrm{tr}}}
\newcommand{\TV}{d_{\mathrm{TV}}}
\newcommand{\Tr}{\operatorname{Tr}}

\newcommand{\poly}{\operatorname{poly}}
\newcommand{\diag}{\operatorname{diag}}

\newcommand{\op}{\mathrm{op}}

\newcommand{\Id}{I}

\newcommand{\ket}[1]{|#1\rangle}
\newcommand{\bra}[1]{\langle#1|}
\newcommand{\proj}[1]{|#1\rangle\!\langle#1|}
\newcommand{\abs}[1]{\left|#1\right|}
\newcommand{\norm}[1]{\left\lVert#1\right\rVert}

\title{Approximating the Trace Distance Between \\ Product Quantum States}
\author{%
Kun He\thanks{Renmin University of China. Email: \href{mailto:hekun2023@ruc.edu.cn}{\texttt{hekun2023@ruc.edu.cn}}.}
\and
Dimitrios Myrisiotis\thanks{Great Bay University. Email: \href{mailto:dimyrisiotis@gbu.edu.cn}{\texttt{dimyrisiotis@gbu.edu.cn}}.}
\and
Junhong Nie\thanks{Shandong University. Email: \href{mailto:niejunhong@sdu.edu.cn}{\texttt{niejunhong@sdu.edu.cn}.}}
\and
Zongqi Wan\thanks{Great Bay University. Email: \href{mailto:zqwan@gbu.edu.cn}{\texttt{zqwan@gbu.edu.cn}}.}}
\date{}
\begin{document}
\maketitle

\begin{abstract}
We study the trace distance
\[
  \D(\rho,\sigma)
  =\frac12\norm{\rho-\sigma}_1,
  \qquad
  \rho=\bigotimes_{i=1}^n\rho_i,\quad
  \sigma=\bigotimes_{i=1}^n\sigma_i,
\]
when the two exponentially large states are specified by their local factors.
We give a deterministic approximation within a universal constant factor for rational product inputs.
Its running time is polynomial in the number of factors, the local dimension, and the input bit length.
In the opposite direction, exact computation is $\#\mathsf P$-hard even for diagonal qubit states, by the corresponding hardness of total variation distance between product distributions.

The proof uses local Uhlmann-optimal purifications to reduce the problem to estimating the product-fidelity defect and the trace norm of a structured first-order operator.
Although this operator acts on an exponentially large space, we approximate its trace norm by a local convex surrogate that admits a polynomial-size classical conic formulation.
A square-function estimate shows that the surrogate upper-bounds this trace norm.
Conversely, duality and local dephasing reduce the reverse comparison to a head--tail inequality for independent centered random variables, showing that the surrogate is at most a dimension-free constant times the same norm.

\end{abstract}

\newpage

\section{Introduction}

Trace distance answers one of the most basic operational questions in quantum information: given two quantum states, how well can any measurement distinguish them?
For density matrices \(\rho\) and \(\sigma\),
\[
    \D(\rho,\sigma)
    =\frac12\norm{\rho-\sigma}_1
    =\max_{0\le M\le I}
      \left|\Tr(M\rho)-\Tr(M\sigma)\right|.
\]
Thus \(\D(\rho,\sigma)\) is exactly the largest statistical separation produced by a quantum measurement.
Equivalently, if the two states are given with equal prior probabilities, Helstrom's theorem gives the optimal discrimination probability~\cite{Watrous2018}:
\[
    p_{\mathrm{succ}}^\star
    =\frac12\bigl(1+\D(\rho,\sigma)\bigr).
\]
Approximating trace distance therefore means approximating an operational optimum, rather than merely evaluating a convenient surrogate for distinguishability.

For explicitly represented states, trace distance can be computed by standard spectral linear algebra.
The computational picture changes when the states are specified succinctly.
In this paper, we study the trace distance between two product states of the form
\[
    \rho=\bigotimes_{i=1}^n\rho_i,
    \qquad
    \sigma=\bigotimes_{i=1}^n\sigma_i,
\]
where each local factor is an \(m\times m\) density matrix.
The input contains only \(n\) local pairs, but the two global states act on a space of dimension \(m^n\).
Even writing down \(\rho-\sigma\), let alone diagonalizing it or constructing its positive spectral projector, requires exponential space.
This produces a particularly sharp tension: each input state is a product state and hence contains no entanglement, yet the optimal measurement comparing the two states may be collective across all \(n\) sites.
The difficulty is created by taking the difference and optimizing over global measurements, not by correlations within either input state.

The same tension appears in matrix-analytic terms.
The trace norm is the Schatten-\(1\), or nuclear, norm, a canonical unitarily invariant norm and the noncommutative analogue of the \(\ell_1\)-norm.
Nuclear-norm computation and optimization play a central role in low-rank recovery, matrix completion, and related convex formulations \cite{recht2010guaranteed,candes2012exact}.
The trace norm tensorizes on a single Kronecker product:
\[
    \left\|\bigotimes_{i=1}^n A_i\right\|_1
    =\prod_{i=1}^n\norm{A_i}_1.
\]
Subtraction before taking the norm destroys this factorization.
Indeed, the Frobenius distance remains locally computable through
\[
    \norm{\rho-\sigma}_2^2
    =\prod_i\Tr(\rho_i^2)
     +\prod_i\Tr(\sigma_i^2)
     -2\prod_i\Tr(\rho_i\sigma_i),
\]
whereas no analogous formula is available for
\[
    \left\|
        \bigotimes_{i=1}^n\rho_i
        -
        \bigotimes_{i=1}^n\sigma_i
    \right\|_1.
\]
The trace norm depends on the global positive and negative spectral subspaces of \(\rho-\sigma\), and these subspaces need not respect the tensor-product description of the inputs.
Product-state trace distance is therefore a basic example of a global operational quantity hidden inside a succinct local representation.

The commuting case shows both why approximation may be possible and why the noncommutative problem requires new ideas.
For classical product distributions, the corresponding quantity is total variation distance.
Its exact computation is \(\#\mathsf P\)-hard, but randomized and deterministic fully polynomial-time approximation schemes are known \cite{FengGuoJerrumWang2023,FengLiuLiu2024,BhattacharyyaEtAl2025}.
When every local pair \((\rho_i,\sigma_i)\) commutes, simultaneous diagonalization reduces our problem exactly to this classical setting.
The classical algorithms can organize the computation around a scalar likelihood ratio.
For noncommuting local pairs there is no shared sample space, no common eigenbasis, and no scalar likelihood ordering to sparsify; moreover, the Helstrom projector can remain genuinely global.
The question is whether the succinct product description nevertheless contains enough structure to approximate the global trace norm without constructing either the matrix or its optimal measurement.

This problem provides a clean test of whether succinct product structure can be exploited in computing quantum distinguishability.
Although trace distance has a direct operational interpretation, the represented matrices have dimension \(m^n\), and the optimal measurement may still be collective.
When the local pairs commute, the problem reduces to total variation distance between product distributions, for which efficient relative-approximation schemes are known.
The noncommuting case retains the same succinct input structure but lacks a common eigenbasis, thereby highlighting the algorithmic role of noncommutativity without correlations within either state.
An efficient relative approximation would show that exponential ambient dimension and collective measurements are not, by themselves, barriers for this structured family, whereas a hardness result specific to noncommuting product states would identify a genuine distinction from the commuting case.

\subsection{Our Results}

Let each \(\rho_i,\sigma_i\) be an \(m\)-dimensional density matrix whose entries have rational real and imaginary parts, each encoded using at most \(B\) bits.
We prove the following two results.

\begin{theorem}[Informal statement of the main theorem]
\label{thm:intro-main}
There is a deterministic algorithm running in \(\poly(n,m,B)\) time that outputs a rational number \(\widehat D\) satisfying
\[
  \frac{1}{4.73}\D(\rho,\sigma)
  \le \widehat D
  \le 4.73\,\D(\rho,\sigma).
\]
The output is zero exactly when \(\rho=\sigma\).
\end{theorem}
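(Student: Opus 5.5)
The plan follows the route indicated in the abstract: reduce to pure states via local Uhlmann-optimal purifications, split the trace distance into a fidelity-defect part and a first-order part, and approximate each within a constant factor.

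\textbf{Step 1: purification.} For each site $i$, compute (numerically, to polynomial precision) purifications $\ket{\psi_i},\ket{\phi_i}$ of $\rho_i,\sigma_i$ that attain the local fidelity, so that $\langle\psi_i|\phi_i\rangle=F_i:=F(\rho_i,\sigma_i)\ge 0$. Set $\ket\psi=\bigotimes_i\ket{\psi_i}$ and $\ket\phi=\bigotimes_i\ket{\phi_i}$. Then $\langle\psi|\phi\rangle=\prod_iF_i=:F$, and $1-F$ is the product-fidelity defect, which is computable exactly up to rounding.

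\textbf{Step 2: decomposition.} Write $\ket{\phi_i}=F_i\ket{\psi_i}+\ket{e_i}$ with $\ket{e_i}\perp\ket{\psi_i}$. Expand $\phi$ to first order in the $e_i$ and take partial traces. This gives
\[
\rho-\sigma = (1-F^2)\,(\cdots) \;-\; \Delta \;+\; R,
\]
where $\Delta=\sum_i \Tr_{\text{purif}}\bigl(\ket{\psi}\bra{\psi_1\cdots e_i\cdots\psi_n}+\mathrm{h.c.}\bigr)$ is the structured first-order operator, and the remainder $R$ has trace norm $O(1-F)$. Fuchs--van de Graaf gives $1-F\le \D(\rho,\sigma)\le\sqrt{1-F^2}$. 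Combining this with the triangle inequality should yield
\[
\D(\rho,\sigma)\asymp (1-F)+\min\bigl(1,\norm{\Delta}_1\bigr)
\]
up to explicit constants. In the regime $1-F\gtrsim$ const, the fidelity term alone already gives a constant-factor estimate.

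\textbf{Step 3: surrogate for $\norm{\Delta}_1$.} Each summand of $\Delta$ is a product operator with one "defect" site. Define the local convex surrogate
\[
S=\inf\Bigl\{\sum_i \text{(local cost of a splitting } X_i\text{)}\Bigr\},
\]
a square-function-type quantity such as $\Tr\bigl(\sum_i X_i^*X_i\bigr)^{1/2}$ restricted to product-structured factorizations, chosen so that it is an SDP/SOCP of size $\poly(n,m)$. The upper bound $\norm{\Delta}_1\le S$ will follow from a noncommutative square-function (Khintchine/Cauchy--Schwarz) estimate applied to $\Delta=\sum_i A_i^*B_i$ with $\Delta_i$ supported at a single site.

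For the reverse inequality, take the dual of the conic program. Pair $\Delta$ with a dual certificate, then dephase locally in the eigenbasis of $\rho_i$ to reduce the certificate's pairing to an expectation over independent centered random variables $Y_i$ (the local first-order fluctuations under $\rho$). The inequality needed is then a head--tail bound $\mathbb E\abs{\sum_iY_i}\ge c\,(\text{surrogate})$, obtained by splitting the coordinates into a few large "head" terms and a Gaussian-like "tail", handling the tail with Khintchine-type $L^1$--$L^2$ comparison and the head by conditioning. I expect this reverse comparison to be the main obstacle. It must be dimension-free, and it is where the constant $4.73$ gets set.

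\textbf{Step 4: assembly.} Output $\widehat D=\max\bigl(1-F,\ \min(1,S)\bigr)$, suitably scaled. Solve the conic program and evaluate the fidelities to relative precision $1/\poly$ in $\poly(n,m,B)$ time, using the fact that a nonzero $\D(\rho,\sigma)$ is at least $2^{-\poly(n,m,B)}$ by a root-separation bound. Rounding then costs only a $(1+o(1))$ factor, and $\widehat D=0$ iff $F=1$ and $\Delta=0$, that is, iff $\rho=\sigma$.
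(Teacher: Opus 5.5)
Your outline follows the paper's route at the level of the abstract. However, Step~3, which carries the theorem, is never actually constructed, and the one concrete choice you make there fails.

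\textbf{The surrogate and the reverse bound.} The surrogate is never defined. The quantity $\Tr(\sum_iX_i^*X_i)^{1/2}$ lives on the $m^n$-dimensional space and is not a polynomial-size conic program. Each purely local candidate fails one direction: $\sum_i\norm{x_i}_1$ loses a factor $\sqrt n$ to cancellation between sites, and a weighted $L_2$ cost alone loses $1/\sqrt\gamma$ when $x_i$ has weight on an eigenvector of the reference state with tiny eigenvalue $\gamma$.

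What is needed is the local $K$-functional, in which \emph{every} tangent is split by size. A partition of the sites into head and tail terms does not suffice. Concretely, write $x_i=u_i+v_i$ with $v_i=\frac12(\Omega_ia_i+a_i\Omega_i)$ and $\Tr(\Omega_ia_i)=0$, at cost $\sum_i\norm{u_i}_1+(\sum_i\Tr(\Omega_ia_i^2))^{1/2}$. The centering removes the cross terms of $\Tr(\Omega A^2)$ for $A=\sum_ia_i^{(i)}$, so the upper bound is just Cauchy--Schwarz applied to $\frac12(\Omega A+A\Omega)$.

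For the reverse bound, the eigenbasis of $\rho_i$ is the wrong dephasing basis. It does not preserve the pairing with a non-diagonal dual witness, and it can annihilate $\Delta$. Take $n=1$ and $\sigma_1=e^{-i\epsilon H}\rho_1e^{i\epsilon H}$ with $[H,\rho_1]\ne0$. Then $\rho_1-\sigma_1=i\epsilon[H,\rho_1]+O(\epsilon^2)$ has diagonal $O(\epsilon^2)$ in the eigenbasis of $\rho_1$, and $1-F=O(\epsilon^2)$. So your $\mathbb E\abs{\sum_iY_i}$ is $O(\epsilon^2)$ while $\norm{\Delta}_1=\Theta(\epsilon)$, and no bound $\mathbb E\abs{\sum_iY_i}\ge c\,S$ with $S\ge\norm{\Delta}_1$ can hold.

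The paper instead dephases in the eigenbasis of the witness $h_i$ itself, chosen anew for each feasible witness. Labels are drawn from the diagonal of $\Omega_i$ in that basis, which need not diagonalize $\Omega_i$. Set $\xi_i=\bra{r}x_i\ket{r}/\bra{r}\Omega_i\ket{r}$ and $\eta_i=\lambda_i(r)-\Tr(\Omega_ih_i)$. Then both $\Tr(h_ix_i)=\mathbb E(\xi_i\eta_i)$ and $\beta_i(h_i)^2=\mathbb E\eta_i^2$ depend only on diagonal entries there. The dual constraints become exactly the hypotheses of Proposition~\ref{prop:centered-scalar-pairing}. The scalar Johnson--Schechtman bound with $\theta=4+4\log2$, together with contractivity of dephasing, finishes the argument.

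\textbf{The constant and the computational claims.} Your Step~2 is sound up to constants. Your $\Delta$ has the same first-order product form with reference states $\rho_j$, so the machinery above would apply after regularization. But the constant $4.73$ is not derived, and your expansion does not give it.

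With $\ket{\phi_i}=F_i\ket{\psi_i}+\ket{e_i}$, let $\ket w$ be the one-excitation part of $\ket\phi$ and $\Delta=F\,\Tr_E(\ket\psi\bra w+\ket w\bra\psi)$. The remainder then keeps $(1-F^2)\rho$, $\Tr_E\proj{w}$ and the higher-order terms. The bounds this yields directly are $\norm{(\rho-\sigma)+\Delta}_1\le2(1-F)(1+F)^2\le8(1-F)$ and $\norm{\Delta}_1\le5\norm{\rho-\sigma}_1$. That gives a ratio of $4+5\theta$ and a symmetric factor of about $6.15$.

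The paper's bisector coordinates $\ket{0_i}\propto\ket{r_i}+\ket{s_i}$, $\ket{1_i}\propto\ket{r_i}-\ket{s_i}$ make the two purifications differ only in the sign of the deviations. The vacuum and single--single blocks therefore cancel exactly. This yields the $4\delta_F$ and factor-$3$ bounds of Theorem~\ref{thm:first-order-bridge}, the ratio $2+3\theta\approx22.3$, and $4.73$ after rescaling.

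Several computational claims also need repair:
\begin{itemize}
\item The $L_2$ cost needs $\Omega_i\succ0$, and the conic program needs polynomially bounded radii. The paper gets both by mixing in white noise at level $\tau=L/(2000n)$, at a relative cost of $1/1000$.
\item Outputting $0$ exactly when $\rho=\sigma$ requires an exact rational test $\rho_i=\sigma_i$, since computed $F$ and $S$ are never exactly $1$ and $0$.
\item Precision $1/\poly$ on the $F_i$ does not control $1-F$, which can be exponentially small in $B$. You need additive precision of order $(1-F)/n$, justified by a lower bound of order $L^2$ on $1-F$.
\item A root-separation bound for the $m^n$-dimensional $\rho-\sigma$ costs exponentially many bits. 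You must first pass to one site via $\D(\rho,\sigma)\ge\D(\rho_k,\sigma_k)\ge L$.
\end{itemize}
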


\begin{theorem}[Informal hardness statement]
\label{thm:intro-hardness}
Exact computation of \(\D(\rho,\sigma)\) is \(\#\mathsf P\)-hard under polynomial-time Turing reductions, even when every local state is a rational diagonal qubit state.
\end{theorem}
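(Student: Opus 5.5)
The statement immediately above is \cref{thm:intro-hardness}, so the plan is a direct reduction from the classical problem. When every $\rho_i=\diag(p_i,1-p_i)$ and $\sigma_i=\diag(q_i,1-q_i)$ with rational $p_i,q_i\in[0,1]$, both $\rho$ and $\sigma$ are diagonal in the computational basis of $(\C^2)^{\otimes n}$. Their diagonals are the product Bernoulli distributions $P=\bigotimes_i\mathrm{Ber}(p_i)$ and $Q=\bigotimes_i\mathrm{Ber}(q_i)$. The trace norm of a diagonal matrix is the $\ell_1$ norm of its diagonal, so
\[
  \D(\rho,\sigma)=\tfrac12\sum_{x\in\{0,1\}^n}\abs{P(x)-Q(x)}=\TV(P,Q).
\]
Any algorithm computing $\D$ exactly on such inputs therefore computes total variation distance between product Bernoulli distributions exactly. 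The encoding is polynomial: each local state is written with $O(B)$ bits and has zero imaginary parts.

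The remaining work is to invoke the known $\#\mathsf P$-hardness of exact TV distance between product distributions under Turing reductions \cite{FengGuoJerrumWang2023,BhattacharyyaEtAl2025}. Composing that reduction with the identity map above gives the claim.

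If a self-contained argument were needed, I would reduce from $\#$\textsc{Knapsack}, that is, counting $x\in\{0,1\}^n$ with $\sum_i w_ix_i\le T$. I would pick $p_i,q_i$ so that $\log(P(x)/Q(x))$ is an affine function of $\sum_i w_ix_i$. Then
\[
  \TV(P,Q)=\sum_{x:\,P(x)>Q(x)}\bigl(P(x)-Q(x)\bigr)
\]
is a threshold sum. By varying a shift parameter, for example adding an extra coordinate or rescaling the ratio, polynomially many oracle calls would let me solve for the threshold counts by interpolation.

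The main obstacle in this route is keeping the ratios $p_i/q_i$ rational while still making the log-likelihood exactly affine in integer weights. The standard fix is to use powers of a common rational base, with bit length polynomial in the weights. In the paper I would simply cite the existing hardness result rather than redo this step.
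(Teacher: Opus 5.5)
Your proof is correct and matches the paper's argument: the diagonal embedding identifies $\D(\rho,\sigma)$ with $\TV(P,Q)$ for the product Bernoulli distributions on the diagonals, and the $\#\mathsf P$-hardness of exact total variation distance between product distributions \cite{BhattacharyyaEtAl2025} completes the reduction (the paper cites \cite{FengGuoJerrumWang2023} for the approximation algorithm, not for hardness). Your optional \#\textsc{Knapsack} fallback would fail as written, because an exactly affine log-likelihood forces local ratios $r^{w_i}$ whose bit length is linear in the weights themselves, hence exponential for binary weights, and with polynomially bounded weights \#\textsc{Knapsack} is solvable by dynamic programming; a multiplicative source problem such as counting subsets with a prescribed product (e.g.\ $q_i=1/2$, $p_i=a_i/(1+a_i)$, so that $P(x)/Q(x)$ is a constant times $\prod_{x_i=1}a_i$) avoids this issue.
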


Together, these theorems locate the problem between exact computation and fine relative approximation.
Exact evaluation is already \(\#\mathsf P\)-hard in the classical diagonal subclass, yet arbitrary noncommuting product inputs admit a deterministic polynomial-time approximation with a universal constant factor.
Although \(\D(\rho,\sigma)\le 1\), the distance can be exponentially small in the succinct input size.
A multiplicative approximation must therefore remain accurate even when the distance is exponentially small, while returning zero exactly when the two states coincide.
Theorem~\ref{thm:intro-main} provides this guarantee uniformly over all inputs, including in the nearly indistinguishable regime where standard fidelity bounds lose multiplicative control.
Whether a \((1+\varepsilon)\)-relative approximation can be achieved in polynomial time remains open.

\subsection{Technical Overview}
\label{subsec:technical-overview}

Although the input consists of only \(2n\) local \(m\times m\) matrices, the difference
\[
  \rho-\sigma
  =
  \bigotimes_{i=1}^n\rho_i
  -
  \bigotimes_{i=1}^n\sigma_i
\]
acts on a space of dimension \(m^n\).
Its trace norm does not factor across sites, and the optimal distinguishing measurement may be collective.
When all local pairs commute, simultaneous diagonalization reduces the problem to total variation distance between product distributions.
In general, the local pairs need not share eigenbases, so this classical reduction is unavailable.

The root fidelity
\[
  F(a,b):=\norm{\sqrt a\sqrt b}_1
\]
is a natural starting point because it is locally computable and tensorizes.
However, the bounds
\[
  1-F(\rho,\sigma)
  \le
  \D(\rho,\sigma)
  \le
  \sqrt{1-F(\rho,\sigma)^2}
\]
have an unbounded ratio when \(F(\rho,\sigma)\) approaches one.
This is also the regime in which the trace distance may be exponentially small, so an additive approximation is insufficient.

Our algorithm is deterministic and classical, but its main structural reduction is quantum inspired.
At a high level, it replaces the global trace norm by two quantities: a fidelity defect that can be estimated from the local matrices, and the trace norm of a structured first-order operator.

\paragraph{Isolating the first-order contribution.}

For each local pair, choose Uhlmann-optimal purifications, that is, purifications having the largest possible overlap.
In suitable coordinates, the two purifications consist of the same common component and opposite local deviations.
Their tensor-product expansion can therefore be grouped according to the number of sites at which a deviation occurs.
Let \(X_1\) denote the contribution involving a single local deviation.

Uhlmann optimality relates the total contribution of all higher-order terms to the global fidelity defect.
We prove
\[
  \norm{(\rho-\sigma)-X_1}_1
  \le
  4\bigl(1-F(\rho,\sigma)\bigr).
\]
Since fidelity tensorizes, the quantity \(1-F(\rho,\sigma)\) is determined by the local fidelities and can be evaluated to certified precision from the local data.
Together with the companion bound \(\norm{X_1}_1\le3\norm{\rho-\sigma}_1\) and the Fuchs--van de Graaf inequalities, the truncation estimate gives
\[
  \D(\rho,\sigma)
  \le
  2\bigl(1-F(\rho,\sigma)\bigr)
  +\frac12\norm{X_1}_1
  \le
  5\D(\rho,\sigma).
\]
Thus fidelity controls the discarded higher-order terms, while \(X_1\) retains the first-order information that fidelity alone misses.

\paragraph{Estimating the first-order term from local data.}

The operator \(X_1\) still acts on an exponentially large space.
After factoring out a locally computable scalar, however, it becomes a sum of \(n\) structured terms: each term changes one site while the remaining sites carry local reference states.
Bounding these terms separately would lose the cancellations between them.

Our main analytic result constructs a convex objective involving only the local matrices. Its design follows a simple head--tail principle: large local deviations are charged directly in trace norm, while the remaining fluctuations are aggregated through a quadratic cost. We prove that, after multiplying its optimal value by the locally computable scalar factor extracted above, the resulting quantity approximates \(\norm{X_1}_1\) within the universal factor \(4+4\log 2\), independently of both \(n\) and \(m\).

One direction follows from a matrix square-function estimate.
For the more difficult reverse direction, duality selects local bases in which dephasing reduces the noncommutative comparison to an inequality for independent centered scalar variables.
A scalar head--tail estimate supplies the universal constant, and contractivity of dephasing transfers the result back to the original operator.
Dephasing is used only in this proof and is not implemented by the algorithm.

Because the convex objective contains only local \(m\times m\) variables, it can be evaluated by a polynomial-size classical conic program.
The exponentially large operator \(X_1\) is never constructed.

\paragraph{Handling singular inputs and completing the algorithm.}

The local reference states arising from the purification construction may be singular.
Exact comparison of the rational input matrices first determines whether the two product states are identical.
Otherwise, it provides an explicit positive lower bound on their trace distance.
We use this bound to choose a small amount of maximally mixed noise to add to each local state.

The perturbation changes the trace distance by only a fixed relative amount, while giving the relevant local states explicit minimum-eigenvalue bounds.
Consequently, all local matrix computations and the conic program require only polynomially many bits of precision, even when the local dimension is part of the input.

On the regularized instance, the algorithm estimates the fidelity defect from the local fidelities and the first-order norm through the local conic program.
Combining these estimates, accounting for regularization and finite-precision errors, and applying a fixed rescaling yields the symmetric approximation factor \(4.73\).

All exponentially large operators in this argument are used only analytically.
The implemented algorithm uses local linear algebra, certified arithmetic, and polynomial-size classical conic optimization.

\subsection{Related Works}

\paragraph{Classical product distributions.}

The classical analogue of our problem is total variation distance between product distributions.
Feng, Guo, Jerrum, and Wang gave a randomized polynomial-time relative approximation \cite{FengGuoJerrumWang2023}.
Feng, Liu, and Liu subsequently obtained a deterministic FPTAS based on likelihood-ratio sparsification \cite{FengLiuLiu2024}.
Bhattacharyya et al.\ proved that exact evaluation is \(\#\mathsf P\)-hard \cite{BhattacharyyaEtAl2025}.
Through the diagonal embedding, the latter result immediately gives our exact hardness theorem.
The approximation algorithms, by contrast, rely on scalar likelihood-ratio structure that has no direct noncommutative counterpart.

\paragraph{Trace distance and efficiently computable surrogates.}

Trace distance, fidelity, optimal binary state discrimination, and Uhlmann's theorem are standard tools in quantum information; see \cite{Watrous2018} for background.
Under the root-fidelity convention
$F(\rho,\sigma)
  \triangleq \bigl\lVert \sqrt{\rho}\sqrt{\sigma}\bigr\rVert_1$,
fidelity tensorizes on product states:
\[
  F(\rho,\sigma)
  = \prod_{i=1}^n F(\rho_i,\sigma_i).
\]
It can therefore be evaluated efficiently from the local factors.
The Fuchs--van de Graaf inequalities~\cite{FuchsVanDeGraaf1999} give
\[
  1-F(\rho,\sigma)
  \le \D(\rho,\sigma)
  \le \sqrt{1-F(\rho,\sigma)^2}.
\]
These inequalities provide a natural efficiently computable benchmark, but not a uniform multiplicative approximation.
The ratio between the upper and lower bounds is
\[
  \sqrt{\frac{1+F(\rho,\sigma)}{1-F(\rho,\sigma)}},
\]
which diverges as \(F(\rho,\sigma)\to1\).
This is exactly the nearly-indistinguishable regime in which a relative approximation must be most sensitive.

\paragraph{Quantum-access algorithms.}

Trace-distance estimation has also been studied in models with quantum access to the input states.
Variational quantum algorithms for estimating trace distance and fidelity were proposed in \cite{ChenEtAl2022,RethinasamyEtAl2023}.
For states supplied through preparation circuits, Wang et al.\ developed block-encoding algorithms for additive-error estimation of trace distance \cite{WangEtAl2024}, and Wang and Zhang obtained improved dimension-independent query and sample complexities under a low-rank assumption \cite{WangZhang2024}.
Optimal query bounds are also known for pure states \cite{Wang2024}.
These results address a different input and resource model: they assume state-preparation, purification, or copy access, provide additive-error guarantees, and generally depend polynomially on the global rank.
They therefore do not give a classical polynomial-time multiplicative approximation from succinct local matrix descriptions when the distance can be arbitrarily small and the global rank can be exponential.

\paragraph{Matrix and tensor methods.}

For explicitly accessed matrices, randomized matrix-function methods estimate spectral sums and Schatten norms \cite{UbaruChenSaad2017}, while streaming and sketching algorithms provide space--approximation tradeoffs for the Schatten-\(1\) norm \cite{LiWoodruff2017}.
Their complexity is measured in the ambient matrix dimension or in the number of entries, matrix--vector products, or streaming updates; the ambient dimension here is \(m^n\).
Closer to the present input model, tensor-network methods approximate global functions and trace norms of matrix product operators \cite{AugustBanuls2018,LeeMoon2026}.
These methods do not presently provide worst-case polynomial bounds on the required bond dimension, variational optimization, and multiplicative error for the restricted difference of two product states considered here.
Finally, the Johnson--Schechtman disjointification theorem for sums of independent random variables \cite{JohnsonSchechtman1989} provides the analytic background for the centered scalar head--tail estimate used in this paper.

\subsection{Organization}

\Cref{sec:prelim} gives the required preliminaries.
\Cref{sec:constant} proves the \(4.73\)-approximation.
\Cref{sec:hardness} proves exact \(\#\mathsf P\)-hardness.
\Cref{sec-conclusion} concludes the paper.

\section{Preliminaries}
\label{sec:prelim}


\subsection{Notions and Tools for Quantum Information}

In this paper, all Hilbert spaces are finite-dimensional.
For density matrices $a,b$, their trace distance and fidelity are defined as follows,
\[
  \D(a,b)=\frac12\norm{a-b}_1,
  \qquad
  F(a,b)=\norm{\sqrt a\sqrt b}_1.
\]

The following folklore inequalities build a relationship between these two quantities.

\begin{lemma}[Fuchs-van de Graaf inequalities~\cite{FuchsVanDeGraaf1999}]
\label{lem:fvdg}
For any density matrices $a,b$, it holds
\begin{equation}
  1-F(a,b)\le \D(a,b)
  \le\sqrt{1-F(a,b)^2}.
  \label{eq:fvdg}
\end{equation}
\end{lemma}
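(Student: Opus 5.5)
We prove the two inequalities in \eqref{eq:fvdg} separately. The lower bound is reduced to the commutative case via a fidelity-achieving measurement. The upper bound is reduced to pure states via Uhlmann's theorem and monotonicity of trace distance.

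\emph{Lower bound.} We use the fact that fidelity is attained by a measurement. There is a POVM (indeed a projective measurement) $\{P_k\}$ such that
\[
  F(a,b)=\sum_k\sqrt{p_k q_k},\qquad p_k=\Tr(P_ka),\ q_k=\Tr(P_kb).
\]
Concretely, take the spectral projectors of the operator $M=b^{-1/2}\,\bigl(b^{1/2}ab^{1/2}\bigr)^{1/2}\,b^{-1/2}$, assuming $b$ is invertible. The singular case follows by continuity, replacing $b$ by $(1-\epsilon)b+\epsilon I/d$ and letting $\epsilon\to0$.

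Since trace distance is non-increasing under measurement, $\D(a,b)\ge \frac12\sum_k|p_k-q_k|$. For probability vectors we have
\[
  \frac12\sum_k|p_k-q_k|\ge\frac12\sum_k(\sqrt{p_k}-\sqrt{q_k})^2 = 1-\sum_k\sqrt{p_kq_k},
\]
because $|p-q|=|\sqrt p-\sqrt q|(\sqrt p+\sqrt q)\ge(\sqrt p-\sqrt q)^2$. Combining these gives $\D(a,b)\ge1-F(a,b)$.

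Verifying that the measurement built from $M$ attains $F$ is the step that needs care. An alternative avoids it: use the general inequality $\Tr(P_+(a-b))\ge\ldots$, or note that classical fidelity minimized over measurements equals $F$, which is a standard result in \cite{Watrous2018}. I would simply cite this variational characterization of fidelity.

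\emph{Upper bound.} By Uhlmann's theorem there are purifications $\ket{\psi},\ket{\phi}$ of $a,b$ with $|\langle\psi|\phi\rangle|=F(a,b)$. For pure states, $\proj\psi-\proj\phi$ is a rank-two traceless Hermitian operator. Its eigenvalues $\pm\lambda$ satisfy
\[
  2\lambda^2=\Tr\bigl((\proj\psi-\proj\phi)^2\bigr)=2-2|\langle\psi|\phi\rangle|^2,
\]
so $\D(\proj\psi,\proj\phi)=\lambda=\sqrt{1-|\langle\psi|\phi\rangle|^2}$. The partial trace is a quantum channel, so trace distance is contractive under it. Therefore
\[
  \D(a,b)\le\D(\proj\psi,\proj\phi)=\sqrt{1-F(a,b)^2}.
\]

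The main obstacle is the lower bound's measurement characterization of fidelity. The upper bound is routine given Uhlmann's theorem.
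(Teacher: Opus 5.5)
Your proof is correct. The paper gives no proof of this lemma and only cites Fuchs--van de Graaf; your argument is the standard one from that source and from the textbooks: the Fuchs--Caves fidelity-attaining measurement plus $\abs{p-q}\ge(\sqrt p-\sqrt q)^2$ for the lower bound, and Uhlmann purifications plus contractivity of the partial trace for the upper bound. The step you flag as the main obstacle is a short check: $M\succeq0$ satisfies $MbM=a$, so each spectral projector $P_k$ of $M$ with eigenvalue $m_k$ gives $p_k=\Tr(P_kMbM)=m_k^2q_k$. Hence $\sum_k\sqrt{p_kq_k}=\sum_k m_kq_k=\Tr(Mb)=\Tr\bigl[(b^{1/2}ab^{1/2})^{1/2}\bigr]=F(a,b)$, and your continuity argument then covers singular $b$.
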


It will be convenient to represent purifications by matrices.
Fix orthonormal bases for a system and an equally large auxiliary system.
For a matrix $A\in\mathbb C^{m\times m}$, define its vectorization by
\[
  \ket{A}
  =
  \sum_{x,y=1}^m A_{xy}\ket{x}\ket{y}.
\]
Directly taking the partial trace over the second register gives $\Tr_{\mathrm{aux}}\!\left(\ket{A}\!\bra{A}\right)=AA^*.$ We therefore call $A$ a \emph{purification matrix} of a density matrix $a$ when $AA^*=a$.
For two such matrices, vectorization also gives
\[
  \langle A|B\rangle
  =
  \sum_{x,y}\overline{A_{xy}}B_{xy}
  =
  \Tr(A^*B).
\]

We use the following matrix form of Uhlmann's theorem.

\begin{lemma}[Uhlmann's theorem in matrix form~\cite{Watrous2018}]
\label{lem:uhlmann-matrix}
For density matrices $a,b\in\mathbb C^{m\times m}$,
\[
  F(a,b)
  =
  \max_{\substack{AA^*=a\\BB^*=b}}
  \abs{\Tr(A^*B)}.
\]
\end{lemma}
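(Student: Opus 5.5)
The statement to prove is Lemma~\ref{lem:uhlmann-matrix}: for density matrices $a,b$,
\[
  F(a,b)=\max_{AA^*=a,\;BB^*=b}\abs{\Tr(A^*B)}.
\]
The plan is to parametrize all purification matrices of $a$ and $b$ through the canonical choices $\sqrt a$ and $\sqrt b$. Then both the upper bound and its attainment reduce to the variational characterization of the trace norm, $\norm{M}_1=\max_{U \text{ unitary}}\abs{\Tr(MU)}$.

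\textbf{Step 1: parametrizing purifications.} We show that $AA^*=a$ holds if and only if $A=\sqrt a\,U$ for some unitary $U\in\C^{m\times m}$. The reverse direction is immediate. For the forward direction, take the polar decomposition $A=|A^*|\,V$, where $|A^*|=(AA^*)^{1/2}=\sqrt a$ and $V$ is unitary. This is available because $A$ is square, so a partial isometry can always be extended to a unitary. Similarly, every $B$ with $BB^*=b$ has the form $B=\sqrt b\,W$ with $W$ unitary.

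\textbf{Step 2: the upper bound.} For such $A$ and $B$,
\[
  \Tr(A^*B)=\Tr(U^*\sqrt a\sqrt b\,W)=\Tr(\sqrt a\sqrt b\,WU^*).
\]
By the tracial Hölder inequality $\abs{\Tr(MY)}\le\norm{M}_1\norm{Y}_\infty$, applied with the unitary $Y=WU^*$, this gives $\abs{\Tr(A^*B)}\le\norm{\sqrt a\sqrt b}_1=F(a,b)$. If one prefers not to quote Hölder's inequality, it follows from a singular value decomposition $M=\sum_k s_k\ket{u_k}\!\bra{v_k}$: then $\abs{\Tr(MY)}\le\sum_k s_k\abs{\bra{v_k}Y\ket{u_k}}\le\sum_k s_k$.

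\textbf{Step 3: attainment.} Write the polar decomposition $\sqrt a\sqrt b=P\,V_0$, where $P\ge0$ and $V_0$ is unitary. Choose $U=I$, so $A=\sqrt a$, and $W=V_0^*$, so $B=\sqrt b\,V_0^*$. Then
\[
  \Tr(A^*B)=\Tr(\sqrt a\sqrt b\,V_0^*)=\Tr(P)=\norm{\sqrt a\sqrt b}_1,
\]
which shows the maximum is achieved. The maximum exists in any case, since the set of admissible pairs $(U,W)$ is compact and the objective is continuous.

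\textbf{Main obstacle.} The only delicate point is Step 1 when $a$ is singular. There $A$ has nontrivial kernel, and the polar factor must be completed from a partial isometry to a full unitary. We handle this through the SVD of $A$: write $A=X\Sigma Y^*$, so $a=X\Sigma^2X^*$ and $\sqrt a=X\Sigma X^*$, and hence $A=\sqrt a\,(XY^*)$ with $XY^*$ unitary. This sidesteps any issue with zero singular values.
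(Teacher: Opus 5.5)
Your proof is correct; the paper gives no proof of this lemma and simply cites Watrous, and your argument is the standard one behind that citation: write purification matrices as $\sqrt a\,U$ and $\sqrt b\,W$, then use $\norm{M}_1=\max_U\abs{\Tr(MU)}$. Your attaining pair $A=\sqrt a$, $B=\sqrt b\,V_0^*$ also gives $A^*B=P=(\sqrt a\,b\sqrt a)^{1/2}\succeq0$; when $a$ is invertible it coincides with the explicit pair $B=a^{-1/2}(a^{1/2}ba^{1/2})^{1/2}$ used in the appendix proof of \cref{lem:uniform-bures-data}, and unlike that pair it also covers singular $a$.
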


We call a pair $(A,B)$ \emph{fidelity-attaining} when it attains this maximum.
Thus ``fidelity-attaining'' is a property of the pair: it means that the two represented purifications have the largest possible overlap.

We will repeatedly use two other standard facts.
First, trace distance is invariant under isometries and cannot increase under a quantum channel.
Partial trace therefore gives
\begin{equation}
  \D\left(\bigotimes_i\rho_i,\bigotimes_i\sigma_i\right)
  \ge \D(\rho_j,\sigma_j)
  \qquad\text{for every }j.
  \label{eq:partial-lower}
\end{equation}
Second, fidelity tensorizes exactly:
\begin{equation}
  F\left(\bigotimes_i\rho_i,\bigotimes_i\sigma_i\right)
  =\prod_iF(\rho_i,\sigma_i).
  \label{eq:fidelity-product}
\end{equation}

For the computational statements, we use the standard bit model.
The real and imaginary parts of every input entry are rational, and $B$ bounds the binary encoding length of each numerator and denominator.
The explicit input length is therefore $O(nm^2B)$; in particular, both $m$ and $B$ are part of the input.

\subsection{Disjointification Estimate}
Our final preliminary ingredient is probabilistic.
In the lower-bound argument of \cref{sec:constant}, a bounded observable will turn the local operators into independent centered scalar random variables.
We will then need to compare the expected absolute value of their sum with a quantity that separates across sites.

For independent centered scalar random variables $\xi_1,\ldots,\xi_n$, define
\begin{equation}
  \kappa((\xi_i))
  =
  \inf_{\substack{\xi_i=a_i+b_i\\
                   \mathbb Ea_i=\mathbb Eb_i=0}}
  \left\{
    \sum_i\mathbb E\abs{a_i}
    +
    \left(\sum_i\mathbb E b_i^2\right)^{1/2}
  \right\}.
  \label{eq:scalar-k-functional}
\end{equation}
The infimum splits each variable into an $L_1$ part and an $L_2$ part.

The next theorem is a centered specialization of the Johnson--Schechtman disjointification theorem \cite{JohnsonSchechtman1989}.
That theorem is stated as a qualitative equivalence, with the universal equivalence constants left implicit.
Because the constant propagates into our final approximation factor, we record the specialization needed here with the explicit value $\theta \triangleq 4+4\log 2$, and we provide a self-contained proof that tracks this constant, which is deferred to \Cref{app:proof-centered-disjointification}.

\begin{theorem}[Centered scalar disjointification]
\label{thm:centered-disjointification}
Define the universal constant $\theta \triangleq4+4\log 2$.
For every finite sequence of independent centered scalar random variables,
\begin{equation}
  \kappa((\xi_i))
  \le
  \theta\cdot 
  \mathbb E\abs{\sum_i\xi_i}.
  \label{eq:centered-disjointification}
\end{equation}
\end{theorem}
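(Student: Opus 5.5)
Let $S=\sum_i\xi_i$ and $M\triangleq\mathbb E\abs{S}$. We may assume $M<\infty$ and, by homogeneity of both sides, normalize $M=1$. The plan is to exhibit an explicit truncation splitting $\xi_i=a_i+b_i$ with centered parts, and to bound $\sum_i\mathbb E\abs{a_i}$ and $(\sum_i\mathbb E b_i^2)^{1/2}$ separately by constant multiples of $M$. Following the Johnson--Schechtman head--tail idea, we fix a threshold $t>0$ (to be chosen as a multiple of $M$) and set
\[
  b_i=\xi_i\mathbf 1_{\{\abs{\xi_i}\le t\}}-\mathbb E\bigl[\xi_i\mathbf 1_{\{\abs{\xi_i}\le t\}}\bigr],
  \qquad
  a_i=\xi_i-b_i .
\]
Both parts are centered. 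Since $\xi_i$ is centered, $\mathbb E\abs{a_i}\le 2\,\mathbb E\bigl[\abs{\xi_i}\mathbf 1_{\{\abs{\xi_i}>t\}}\bigr]$, and also $\mathbb E b_i^2\le \mathbb E\bigl[\xi_i^2\mathbf 1_{\{\abs{\xi_i}\le t\}}\bigr]$. It therefore suffices to prove the following two inequalities for a suitable $t$:
\[
  \textstyle\sum_i\mathbb E\bigl[\abs{\xi_i}\mathbf 1_{\{\abs{\xi_i}>t\}}\bigr]\lesssim M
  \qquad\text{and}\qquad
  \textstyle\sum_i\mathbb E\bigl[\xi_i^2\mathbf 1_{\{\abs{\xi_i}\le t\}}\bigr]\lesssim M^2 .
\]

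\emph{Head estimate.} First, $\mathbb E\abs{S}\ge \tfrac12\mathbb E\max_i\abs{\xi_i}$ up to constants, by symmetrization or L\'evy's inequality. The cleaner route is the bound $\mathbb E\abs{S}\ge \mathbb E\abs{\xi_j}$ for each $j$, which follows from Jensen's inequality after conditioning on the others, since they are centered and independent. To control the sum of tail contributions, we use the disjointification trick. The events $\{\abs{\xi_i}>t\}$ are independent, so with $t$ chosen so that $\sum_i\Pr(\abs{\xi_i}>t)\le \log 2$, say, the probability that exactly one index is large is comparable to the sum. On the event that only $\xi_i$ is large, conditioning and Jensen (the remaining sum is centered and independent of $\xi_i$) give $\mathbb E\abs{S}\ge \mathbb E\bigl[\abs{\xi_i}\mathbf 1_{\{\abs{\xi_i}>t,\ \text{others small}\}}\bigr]$, and independence lets us factor out $\prod_{j\ne i}\Pr(\abs{\xi_j}\le t)\ge e^{-\log 2}\cdot(\text{const})$. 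This is where the $\log 2$ in $\theta$ should originate. The choice $t$ equals the quantile level where $\sum_i\Pr(\abs{\xi_i}>t)=\log 2$, and Markov-type reasoning shows $t\lesssim M$ (if many tails exceed $t$, then $\Pr(\max\abs{\xi_i}>t)\ge 1/2$, and $\mathbb E\abs S\gtrsim t$).

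\emph{Tail (quadratic) estimate.} For the truncated centered variables $b_i$, which are bounded by $2t\lesssim M$, we need $\bigl(\sum\mathbb E b_i^2\bigr)^{1/2}\lesssim \mathbb E\abs{\sum b_i}$, together with $\mathbb E\abs{\sum_i b_i}\le \mathbb E\abs S+\sum_i\mathbb E\abs{a_i}$, which the head estimate already controls. The reverse $L_1$--$L_2$ comparison for sums of independent bounded centered variables follows from a Paley--Zygmund or Khintchine-type argument: $\mathbb E(\sum b_i)^4\le \sum\mathbb E b_i^4+3(\sum\mathbb E b_i^2)^2$, and the first term is at most $4t^2\sum\mathbb E b_i^2$. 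Hölder's inequality ($\|Y\|_2^{3}\le\|Y\|_1\|Y\|_4^{2}$) then gives the bound, provided $\sum\mathbb E b_i^2\gtrsim t^2$; otherwise the bound is trivial, since $t\lesssim M$.

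The main obstacle is the bookkeeping needed to hit the exact constant $4+4\log 2$ rather than some unspecified universal constant. The Hölder/fourth-moment step in particular tends to waste constants, so if it does not reproduce $\theta$ I would replace it with a sharper argument. One option is to choose the threshold adaptively and bound $\mathbb E\abs{\sum b_i}$ from below via $\mathbb E\abs Y\ge \mathbb E Y^2/\|Y\|_\infty$-type truncation of the sum itself. I would then tune $t$ so that the head contribution yields $4\log 2$ and the quadratic part yields $4$.
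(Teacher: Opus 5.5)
There is a genuine gap. Beyond the cited Johnson--Schechtman theorem, the content of the statement is the explicit constant $\theta=4+4\log 2$, and your argument does not reach it. You concede this, and your closing paragraph ("tune $t$ so that\ldots") states a hope rather than an argument.

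The losses in your route are structural, not bookkeeping:
\begin{enumerate}
\item[(i)] Re-centering the truncations doubles the head cost, via $\mathbb E\abs{a_i}\le 2\mathbb E[\abs{\xi_i}\mathbf 1_{\{\abs{\xi_i}>t\}}]$.
\item[(ii)] You bound $(\sum_i\mathbb E b_i^2)^{1/2}$ through $\mathbb E\abs{\sum_i b_i}\le\mathbb E\abs S+\sum_i\mathbb E\abs{a_i}$, so the head is paid a second time inside the quadratic term.
\item[(iii)] The fourth moment of the sum $\sum_i b_i$ carries the factor $3$.
\end{enumerate}
A rough count along your route gives a constant above $10$, against $\theta\approx 6.77$.

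There is also an actual error in the head step. Once you restrict to the event that $\abs{\xi_j}\le t$ for all $j\ne i$, those $\xi_j$ are no longer centered. So Jensen does not give $\mathbb E[\abs S\mathbf 1_{E_i}]\ge\mathbb E[\abs{\xi_i}\mathbf 1_{E_i}]$. For example, take skewed $\xi_j$ that equal $-\epsilon$ with high probability and a value exceeding $t$ otherwise. Then on $E_i$ the sum $\sum_{j\ne i}\xi_j=-(n-1)\epsilon$ can cancel a large positive $\xi_i$. The per-index bound $\mathbb E\abs S\ge\mathbb E\abs{\xi_i}$ is true, but it does not sum over $i$. What does work is comparison with $\mathbb E\max_i\abs{\xi_i}$: the $E_i$ are disjoint, and $\max_j\abs{\xi_j}\ge\abs{\xi_i}$ on $E_i$. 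Bounding $\mathbb E\max_i\abs{\xi_i}$ by $\mathbb E\abs S$ in turn needs symmetrization, which you mention and then set aside.

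The paper obtains $\theta$ through the following ingredients.
\begin{itemize}
\item \emph{Symmetrize first.} It sets $X_i=\xi_i-\xi_i'$ and pays one factor $2$ through $A=\mathbb E\abs{\sum_iX_i}\le2\mathbb E\abs S$. It also uses $\kappa((\xi_i))\le\kappa((X_i))$, obtained by conditioning any centered decomposition of $X_i$ on $\xi_i$.
\item \emph{Head.} For symmetric variables the plain truncations $U_i=X_i\mathbf 1_{\{\abs{X_i}>t\}}$ and $V_i=X_i\mathbf 1_{\{\abs{X_i}\le t\}}$ are already centered, and a sign-flip argument gives $\mathbb E\max_i\abs{X_i}\le A$. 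With $t=2A$, Markov gives $\Pr(\max_i\abs{X_i}>s)\le1/2$ for every $s\ge t$. Hence $\sum_i\Pr(\abs{X_i}>s)\le2\log2\,\Pr(\max_i\abs{X_i}>s)$, and the layer-cake formula yields $\sum_i\mathbb E\abs{U_i}\le2\log2\,A$.
\item \emph{Quadratic part.} It never passes through $\sum_iV_i$. Conditionally on the magnitudes, sharp $L_1$ Khintchine applied to the full sum gives $A\ge2^{-1/2}\mathbb E(\sum_iX_i^2)^{1/2}\ge2^{-1/2}\mathbb E(\sum_iV_i^2)^{1/2}$. H\"older with $\mathbb E(\sum_iV_i^2)^2\le2(\sum_i\mathbb EV_i^2)^2$ then gives $(\sum_i\mathbb EV_i^2)^{1/2}\le2A$. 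That second-moment bound is for the square function, with no factor $3$, and holds once $\sum_i\mathbb EV_i^2\ge t^2$; otherwise the bound is immediate.
\end{itemize}
Together these give $\kappa((X_i))\le(2+2\log2)A\le\theta\,\mathbb E\abs S$.
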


\begin{proposition}[Centered scalar pairing bound]
\label{prop:centered-scalar-pairing}
Let $(\xi_i,\eta_i)$, $i=1,\ldots,n$, be pairs of centered scalar random variables such that $\xi_1,\ldots,\xi_n$ are independent.
Suppose
\[
  \operatorname*{ess\,sup}\eta_i
  -\operatorname*{ess\,inf}\eta_i
  \le2
  \quad\text{for every }i,
  \qquad
  \sum_i\mathbb E\eta_i^2\le1.
\]
Then
\begin{equation}
  \abs{\sum_i\mathbb E(\xi_i\eta_i)}
  \le
  \theta\,\mathbb E\abs{\sum_i\xi_i}.
  \label{eq:centered-scalar-pairing}
\end{equation}
\end{proposition}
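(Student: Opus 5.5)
We derive the bound from \cref{thm:centered-disjointification} by a duality argument for the $\kappa$-functional. Fix any admissible decomposition $\xi_i=a_i+b_i$ with $\mathbb Ea_i=\mathbb Eb_i=0$. Then
\[
  \sum_i\mathbb E(\xi_i\eta_i)
  =\sum_i\mathbb E(a_i\eta_i)+\sum_i\mathbb E(b_i\eta_i),
\]
and we bound the two pieces separately. For the $L_2$ piece, Cauchy--Schwarz in each expectation and then in the sum over $i$ gives
\[
  \abs{\sum_i\mathbb E(b_i\eta_i)}
  \le\Bigl(\sum_i\mathbb Eb_i^2\Bigr)^{1/2}\Bigl(\sum_i\mathbb E\eta_i^2\Bigr)^{1/2}
  \le\Bigl(\sum_i\mathbb Eb_i^2\Bigr)^{1/2},
\]
using the hypothesis $\sum_i\mathbb E\eta_i^2\le1$.

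For the $L_1$ piece we use centering to improve the oscillation bound into a sup-norm bound. Since $\mathbb Ea_i=0$, we have $\mathbb E(a_i\eta_i)=\mathbb E\bigl(a_i(\eta_i-c_i)\bigr)$ for every constant $c_i$. Take $c_i$ to be the midpoint $\tfrac12(\operatorname*{ess\,sup}\eta_i+\operatorname*{ess\,inf}\eta_i)$. The oscillation hypothesis then gives $\abs{\eta_i-c_i}\le1$ almost surely, so
\[
  \abs{\mathbb E(a_i\eta_i)}\le\mathbb E\abs{a_i}.
\]
Summing over $i$ and combining with the $L_2$ piece yields
\[
  \abs{\sum_i\mathbb E(\xi_i\eta_i)}
  \le\sum_i\mathbb E\abs{a_i}+\Bigl(\sum_i\mathbb Eb_i^2\Bigr)^{1/2}.
\]

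This holds for every admissible decomposition, so taking the infimum bounds the left side by $\kappa((\xi_i))$. \Cref{thm:centered-disjointification}, which uses the independence of the $\xi_i$, then bounds $\kappa((\xi_i))$ by $\theta\,\mathbb E\abs{\sum_i\xi_i}$, and this is \eqref{eq:centered-scalar-pairing}.

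The only technical care is integrability. We may assume $\mathbb E\abs{\sum_i\xi_i}<\infty$, since otherwise there is nothing to prove. The $\eta_i$ are bounded, so $\mathbb E(a_i\eta_i)$ is finite whenever $a_i\in L_1$. If $\xi_i$ itself fails to be integrable, then every decomposition has infinite cost and the statement is vacuous or must be read in the extended sense. In the paper's finite-dimensional application all variables are bounded, so we expect this point to need no more than a remark. Beyond that there is no real obstacle, since all the analytic content is carried by \cref{thm:centered-disjointification}. The one step needing a moment's thought is the midpoint recentering, which is exactly why the hypothesis is stated as an oscillation bound rather than a sup-norm bound.
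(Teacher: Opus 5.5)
Your proof is correct and follows essentially the same route as the paper. For an arbitrary centered decomposition $\xi_i=a_i+b_i$, you recenter $\eta_i$ at the midpoint $c_i$ of its essential range to bound the head by $\sum_i\mathbb E\abs{a_i}$, apply Cauchy--Schwarz to the tail, take the infimum to get $\kappa((\xi_i))$, and conclude with \cref{thm:centered-disjointification}; your integrability remark is a harmless addition that the paper leaves implicit.
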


\begin{proof}
For any centered decomposition $\xi_i=a_i+b_i$, let $c_i$ be the midpoint of the essential range of $\eta_i$.
Since $\mathbb Ea_i=0$ and $\abs{\eta_i-c_i}\le1$ almost surely,
\[
\begin{aligned}
  \abs{\sum_i\mathbb E(\xi_i\eta_i)}
  &\le
  \sum_i\mathbb E\!\left(\abs{a_i}\abs{\eta_i-c_i}\right)
  +\abs{\sum_i\mathbb E(b_i\eta_i)}\\
  &\le
  \sum_i\mathbb E\abs{a_i}
  +\left(\sum_i\mathbb E b_i^2\right)^{1/2}
   \left(\sum_i\mathbb E\eta_i^2\right)^{1/2}\\
  &\le
  \sum_i\mathbb E\abs{a_i}
  +\left(\sum_i\mathbb E b_i^2\right)^{1/2}.
\end{aligned}
\]
Taking the infimum over centered decompositions and applying \cref{thm:centered-disjointification} proves the claim.
\end{proof}

\section{A Constant-Factor Approximation Algorithm}
\label{sec:constant}

The structural reason for the approximation was described in \Cref{subsec:technical-overview}.
We now give a complete specification of the algorithm.
In particular, all quantities appearing below are computed from the local input matrices, and the only optimization problem has polynomial size.
Exact symbols are used to describe the target local data; the implementation uses certified dyadic approximations with the error budgets stated below.

\paragraph{Algorithm construction.}

Below, we describe our algorithm in detail.
We also present a pseudo-code of our algorithm in \Cref{alg:product-trace-distance}.

\begin{algorithm}[t]
\caption{Constant-factor approximation of product-state trace distance}
\label{alg:product-trace-distance}
\begin{algorithmic}[1]
\Require Rational local density matrices
  $\{\rho_i,\sigma_i\}_{i=1}^n\subseteq\C^{m\times m}$
\Ensure An estimate $\widehat D$
\If{$\rho_i=\sigma_i$ for every $i$, by exact rational comparison}
  \State \Return $0$
\EndIf
\State Compute $L$ by the entrywise maximum in \eqref{eq-def-l}
\State Set
  $\beta\gets10^{-3}$,
  $\tau\gets L/(2000n)$,
  $\mu\gets\tau/m$,
  $p\gets\lceil\log_2(16n/(\beta L^2))\rceil$, and
  $\eta\gets\beta L/40$
\State Set
  $\bar\rho_i\gets(1-\tau)\rho_i+\tau\Id/m$ and
  $\bar\sigma_i\gets(1-\tau)\sigma_i+\tau\Id/m$
  for every $i$
\For{$i=1,\ldots,n$}
  \State Compute certified approximations to the fidelity-attaining data
    $(A_i,B_i,f_i,\Omega_i,x_i)$ in
    \eqref{eq:algorithm-local-data}
  \State Compute $\widetilde f_i$ with
    $\abs{\widetilde f_i-f_i}\le2^{-p}$ and set
    $\underline f_i\gets\max\{0,\widetilde f_i-2^{-p}\}$
  \State Set
    $\overline f_i\gets\min\{1,\widetilde f_i+2^{-p}\}$ and
    $\widehat q_i\gets(1+\overline f_i)/2$
  \State Compute rational $\widetilde\Omega_i,\widetilde x_i$ using
    sufficiently many guard bits
\EndFor
\State Refine the local data until tensor-product telescoping certifies
  $\widetilde\Omega_i\succeq(\mu/8)\Id$,
  $\max_i\norm{\widetilde x_i}_1\ge L/2$, and
  $\norm{\widetilde Y-Y}_1\le\eta$
\State Solve \eqref{eq:uniform-conic} with
  $(\widetilde\Omega_i,\widetilde x_i)$ to obtain
  $\operatorname{OPT}\le\widetilde N
   \le\operatorname{OPT}+\beta L/8$
\State Set
  $\widehat\delta\gets1-\prod_i\underline f_i$,
  $\widehat g\gets\prod_i\widehat q_i$, and
  $\widehat N_1\gets\widehat g(\widetilde N+\eta)$
\State Set $\widehat T\gets4\widehat\delta+\widehat N_1$
\State \Return
  $\widehat D\gets(10583/100000)\widehat T$
\end{algorithmic}
\end{algorithm}

\noindent \underline{\textit{Step 1: zero detection and regularization.}}
The algorithm first checks, using exact rational arithmetic, whether $\rho_i=\sigma_i$ for every $i$.
If so, it returns zero.
Otherwise, it computes the positive rational number
\begin{equation}
  L
  \triangleq
  \frac12
  \max_{\substack{1\le i\le n\\1\le a,b\le m}}
  \left\{
    \abs{\Re((\rho_i-\sigma_i)_{ab})},
    \abs{\Im((\rho_i-\sigma_i)_{ab})}
  \right\}.
  \label{eq-def-l}
\end{equation}
This quantity act as a regularization parameter.
Set $\tau \triangleq \frac{L}{2000n}$, and add a white noise regularization on $\rho_i$ and $\sigma_i$, resulting
\[
  \bar\rho_i=(1-\tau)\rho_i+\tau\frac{\Id}{m},
  \qquad
  \bar\sigma_i=(1-\tau)\sigma_i+\tau\frac{\Id}{m}.
\]
Then, we work on the regularized states.

\vspace{5pt}

\noindent \underline{\textit{Step 2: local purification data and the fidelity estimate.}}
Fix $\beta=10^{-3}$ and $p=\left\lceil\log_2(16n/(\beta L^2))\right\rceil$.
For every $i$, compute certified approximations to a fidelity-attaining pair of purification matrices for $(\bar\rho_i,\bar\sigma_i)$ according to \Cref{lem:uniform-bures-data}.
Denote the purification matrices of $\bar\rho_i,\bar\sigma_i$ by $A_i,B_i$, where $A_iA_i^*=\bar\rho_i$, $B_iB_i^*=\bar\sigma_i$, and $A_i^*B_i\succeq0$, and define
\begin{equation}
\begin{aligned}
  f_i&=\Tr(A_i^*B_i)=F(\bar\rho_i,\bar\sigma_i),&
  c_i^2&=\frac{1+f_i}{2},\\
  \Omega_i&=\frac{(A_i+B_i)(A_i+B_i)^*}{2(1+f_i)},&
  x_i&=\frac{\bar\rho_i-\bar\sigma_i}{c_i^2}.
\end{aligned}
\label{eq:algorithm-local-data}
\end{equation}
Compute a dyadic number $\widetilde f_i$ satisfying $\abs{\widetilde f_i-f_i}\le2^{-p}$.
Set $\underline f_i=\max\{0,\widetilde f_i-2^{-p}\}$, $\overline f_i=\min\{1,\widetilde f_i+2^{-p}\}$, and $\widehat q_i=(1+\overline f_i)/2$.
After processing all sites, set $\widehat\delta=1-\prod_i\underline f_i$ and $\widehat g=\prod_i\widehat q_i$.

\vspace{5pt}

\noindent \underline{\textit{Step 3: the first-order conic program.}}
For the precision bookkeeping, set $\eta=\beta L/40$ and define $\Omega_{\ne i}=\bigotimes_{j\ne i}\Omega_j$ and $Y=\sum_i\Omega_{\ne i}\otimes x_i$.
Compute rational density matrices $\widetilde\Omega_i$ and rational Hermitian traceless matrices $\widetilde x_i$, refining their precision as follows.
Write $\widetilde\Omega_{\ne i}=\bigotimes_{j\ne i}\widetilde\Omega_j$ and $\widetilde Y=\sum_i\widetilde\Omega_{\ne i}\otimes\widetilde x_i$.
Refine the local data until tensor-product telescoping certifies $\widetilde\Omega_i\succeq(\mu/8)\Id$ for every $i$, $\max_i\norm{\widetilde x_i}_1\ge L/2$, and $\norm{\widetilde Y-Y}_1\le\eta$.
Using these rational local matrices, solve the conic program
\begin{equation}
\begin{aligned}
 \text{minimize}\quad&
   \sum_{i=1}^n\Tr(R_i+S_i)+t,\\
 \text{subject to}\quad&
   \widetilde x_i
   =R_i-S_i
    +\frac12(\widetilde\Omega_i a_i+a_i\widetilde\Omega_i),
   && i=1,\ldots,n,\\
 & \Tr(\widetilde\Omega_i a_i)=0,\qquad
   R_i,S_i\succeq0,
   && i=1,\ldots,n,\\
 & \left(
     \sum_{i=1}^n\Tr(\widetilde\Omega_i a_i^2)
   \right)^{1/2}\le t,
\end{aligned}
\label{eq:uniform-conic}
\end{equation}
over Hermitian matrices $R_i,S_i,a_i$ and a real scalar $t$.
Denote its optimum by $\operatorname{OPT}$.
Compute a certified rational $\widetilde N$ satisfying $\operatorname{OPT}\le\widetilde N\le\operatorname{OPT}+\beta L/8$.
Set $\widehat N_Y=\widetilde N+\eta$ and $\widehat N_1=\widehat g\,\widehat N_Y$.

\vspace{5pt}

\noindent \underline{\textit{Step 4: assembly.}}
Finally, set $\widehat T=4\widehat\delta+\widehat N_1$ and return $\widehat D=(10583/100000)\widehat T$.

\paragraph{Analysis roadmap.}
The subsequent sections give the guarantee for different steps of our algorithm.
\Cref{sec:zero detection} prove that the regularization will not affect the trace distance much while providing the strict eigenvalue lower bound required to implement the inverse matrix.
\Cref{sec:local computations} showing the computations of the local state quantities used by our algorithm can be done within polynomial time.
\Cref{sec-first-order-purifications} bounds the trace distance with first-order term $X_1$ and the quantity $\hat{\delta} = 1-\prod_i f_i$.
\Cref{sec: estimating first-order} then use the conic program to estimate the first-order part.
\Cref{sec:overall} put all things together and analyze the overall approximation factor.

\subsection{Zero Detection and Regularization}
\label{sec:zero detection}

A multiplicative approximation must recognize the case $\D(\rho,\sigma)=0$ exactly.
Product structure makes this possible using only the local inputs.
It also supplies a positive scale against which we set the numerical precision in the nonzero case.

\begin{lemma}
\label{lem:zero-scale}
Let $\rho=\bigotimes_i\rho_i$ and $\sigma=\bigotimes_i\sigma_i$ be rational product density matrices whose local entries have bit length at most $B$.
Then
\[
  \rho=\sigma
  \quad\Longleftrightarrow\quad
  \rho_i=\sigma_i\text{ for every }i.
\]
If $\rho\ne\sigma$, define
\begin{equation}
  L
  \triangleq
  \frac12
  \max_{\substack{1\le i\le n\\1\le a,b\le m}}
  \left\{
    \abs{\Re((\rho_i-\sigma_i)_{ab})},
    \abs{\Im((\rho_i-\sigma_i)_{ab})}
  \right\},
\end{equation}
and let $k$ be a site at which the maximum is attained.
Then $L$ is a positive rational computable in polynomial time, and
\begin{align}\label{eq-twominusobleql-leqtd}
  2^{-O(B)}
  \le L
  \le \D(\rho_k,\sigma_k)
  \le \D(\rho,\sigma).
\end{align}
\end{lemma}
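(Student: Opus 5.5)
The proof has four parts, taken in order: the equivalence, the two outer inequalities in \eqref{eq-twominusobleql-leqtd}, and then the middle inequality $L\le\D(\rho_k,\sigma_k)$, which is the only step needing a short argument.

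\emph{The equivalence.} The direction ($\Leftarrow$) is trivial. For ($\Rightarrow$), I would use partial traces. Suppose $\rho=\sigma$. Taking the partial trace over all sites except $j$ gives $\rho_j\prod_{i\ne j}\Tr\rho_i=\sigma_j\prod_{i\ne j}\Tr\sigma_i$. Every local factor has unit trace, so this reads $\rho_j=\sigma_j$ for each $j$. Equivalently, $\rho\neq\sigma$ forces some $\rho_j\ne\sigma_j$, and then \eqref{eq:partial-lower} gives $\D(\rho,\sigma)\ge\D(\rho_j,\sigma_j)>0$.

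\emph{The outer inequalities.} The rightmost inequality $\D(\rho_k,\sigma_k)\le\D(\rho,\sigma)$ is exactly \eqref{eq:partial-lower} with $j=k$. For the lower bound on $L$, note that when $\rho\ne\sigma$ some local entry difference has a nonzero real or imaginary part. Each such part is a difference of two rationals whose numerators and denominators have at most $B$ bits. A nonzero difference $p/q-p'/q'$ has absolute value at least $1/(qq')\ge 2^{-2B}$, so $L\ge 2^{-2B-1}=2^{-O(B)}$. Computability of $L$ in polynomial time is immediate: it is a maximum over $2nm^2$ rationals, each obtained by one exact subtraction.

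\emph{The middle inequality.} This is the only real step. Set $\Delta=\rho_k-\sigma_k$, a Hermitian traceless matrix, and let $(a,b)$ be the entry attaining the maximum. The plan is to bound each entry of $\Delta$ by a pairing with a contraction and use trace-norm duality, $\abs{\Tr(M\Delta)}\le\norm{M}_\op\norm{\Delta}_1$. Write $\Delta_{ab}=\bra{a}\Delta\ket{b}=\Tr(\ket{b}\!\bra{a}\Delta)$ and note that $\ket{b}\!\bra{a}$ has operator norm $1$. This gives
\[
\max\{\abs{\Re\Delta_{ab}},\abs{\Im\Delta_{ab}}\}\le\abs{\Delta_{ab}}\le\norm{\Delta}_1=2\D(\rho_k,\sigma_k).
\]
Multiplying by the factor $\tfrac12$ in the definition of $L$ yields $L\le\D(\rho_k,\sigma_k)$.

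The only care needed is the factor of $2$ between the entry bound and the trace distance. The entry bound via $\ket{b}\!\bra{a}$ already accounts for it exactly, so no sharper argument using Hermitian test operators is required.
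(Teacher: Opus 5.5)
Your proof is correct and follows the paper's argument essentially step for step: partial traces for the equivalence, contractivity under partial trace for the rightmost inequality, a common-denominator bound for $L\ge 2^{-O(B)}$, and the entrywise bound $\abs{\Delta_{ab}}\le\norm{\Delta}_1$ for the middle inequality. The only differences are cosmetic. You get the entry bound by pairing $\Delta$ with the rank-one contraction $\ket{b}\!\bra{a}$, whereas the paper chains $\norm{\Delta_k}_1\ge\norm{\Delta_k}_{\op}\ge\abs{(\Delta_k)_{ab}}$. Your explicit $L\ge 2^{-2B-1}$ is a slight sharpening of the paper's unspecified $2^{-O(B)}$.
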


The proof is deferred to \Cref{app:proof-zero-scale}.

The constructions used later involve inverse square roots.
The next lemma mixes the same small amount of white noise into every local state and records the two properties needed below.

\begin{lemma}[Regularization]
\label{lem:regularization}
For product density matrices $\rho=\bigotimes_i\rho_i$ and $\sigma=\bigotimes_i\sigma_i$, and any $0<\tau<1$, define
\[
  \bar\rho_i=(1-\tau)\rho_i+\tau\frac{\Id}{m},
  \qquad
  \bar\sigma_i=(1-\tau)\sigma_i+\tau\frac{\Id}{m},
\]
and let
\[
  \bar\rho=\bigotimes_i\bar\rho_i,
  \qquad
  \bar\sigma=\bigotimes_i\bar\sigma_i.
\]
Then
\begin{equation}
  0
  \le
  \D(\rho,\sigma)-\D(\bar\rho,\bar\sigma)
  \le2n\tau,
  \label{eq:regularization-error}
\end{equation}
and
\[
  \bar\rho_i,\bar\sigma_i\succeq\frac{\tau}{m}\Id
  \quad\text{for every }i.
\]
\end{lemma}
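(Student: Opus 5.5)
The plan is to prove the three assertions of \cref{lem:regularization} separately, using only data-processing for trace distance, a telescoping estimate for tensor products, and an elementary eigenvalue bound.

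\emph{Lower bound $\D(\bar\rho,\bar\sigma)\le\D(\rho,\sigma)$.} Consider the local depolarizing channel
\[
  \Phi(X)=(1-\tau)X+\tau\Tr(X)\frac{\Id}{m}.
\]
It is completely positive and trace preserving, and it maps $\rho_i\mapsto\bar\rho_i$ and $\sigma_i\mapsto\bar\sigma_i$. Hence the product channel $\Phi^{\otimes n}$ maps $\rho\mapsto\bar\rho$ and $\sigma\mapsto\bar\sigma$. Monotonicity of trace distance under quantum channels, stated in the preliminaries, then gives $\D(\bar\rho,\bar\sigma)\le\D(\rho,\sigma)$.

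\emph{Upper bound on the loss.} By the triangle inequality,
\[
  \D(\rho,\sigma)\le\D(\rho,\bar\rho)+\D(\bar\rho,\bar\sigma)+\D(\bar\sigma,\sigma),
\]
so it suffices to show $\D(\rho,\bar\rho)\le n\tau$, and likewise for $\sigma$. Telescope through the hybrid states
\[
  H_k=\bar\rho_1\otimes\cdots\otimes\bar\rho_k\otimes\rho_{k+1}\otimes\cdots\otimes\rho_n .
\]
Consecutive hybrids differ only in the $k$th factor, and the remaining factors are density matrices of trace norm one. Multiplicativity of the trace norm on Kronecker products therefore gives
\[
  \norm{H_k-H_{k-1}}_1=\norm{\bar\rho_k-\rho_k}_1=\tau\norm{\Id/m-\rho_k}_1\le 2\tau .
\]
Summing over $k$ yields $\norm{\bar\rho-\rho}_1\le 2n\tau$, that is, $\D(\rho,\bar\rho)\le n\tau$. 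The same argument applies to $\sigma$, and adding the two contributions gives \eqref{eq:regularization-error}.

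\emph{Eigenvalue bound.} Since $\rho_i\succeq0$ and $1-\tau>0$, we have $\bar\rho_i\succeq\tau\Id/m$, and similarly $\bar\sigma_i\succeq\tau\Id/m$.

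No step is a real obstacle. The only point needing care is that the telescoping must use the trace norm of the other factors being one, so that the error is additive in $n$ rather than compounding multiplicatively.
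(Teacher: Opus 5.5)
Your proposal is correct and follows essentially the same route as the paper's proof. Both use the depolarizing channel $\Phi^{\otimes n}$ with monotonicity for the lower bound, hybrid telescoping with per-site error $\tau\norm{\Id/m-\rho_k}_1\le2\tau$ plus the triangle inequality for the $2n\tau$ upper bound, and $\bar\rho_i-\tfrac{\tau}{m}\Id=(1-\tau)\rho_i\succeq0$ for the eigenvalue bound.
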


The proof is deferred to \Cref{app:proof-regularization}.

\subsection{Local Matrix Computations}
\label{sec:local computations}
Our algorithm only performs numerical linear algebra on the local $m\times m$ matrices.
The next lemma packages the local data needed by the approximation algorithm.
Besides computability, the important point is the lower bound on the bisector state $\Omega$: after regularization, the local objects remain uniformly away from singularity.

\begin{lemma}
\label{lem:uniform-bures-data}
Let $a,b$ be rational $m\times m$ density matrices satisfying $a,b\succeq\mu\Id$.
There exist fidelity-attaining purification matrices $A$ for $a$ and $B$ for $b$ such that, writing
\[
  f\triangleq F(a,b),
  \qquad
  \Omega\triangleq\frac{(A+B)(A+B)^*}{2(1+f)},
\]
we have
\begin{equation}
  A^*B\succeq0,
  \qquad
  \Omega\succeq\frac{\mu}{4}\Id.
  \label{eq:bisector-gap}
\end{equation}
Moreover, for every integer $p\ge1$, a deterministic algorithm running in $\poly\bigl(m,B,p,\log(1/\mu)\bigr)$ time outputs dyadic rational approximations $\widetilde A,\widetilde B,\widetilde f,\widetilde\Omega$ satisfying
\[
  \max\left\{
    \norm{\widetilde A-A}_{\mathrm{op}},
    \norm{\widetilde B-B}_{\mathrm{op}},
    \abs{\widetilde f-f},
    \norm{\widetilde\Omega-\Omega}_{\mathrm{op}}
  \right\}
  \le 2^{-p}.
\]
\end{lemma}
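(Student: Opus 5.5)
I will take the explicit Uhlmann construction $A=\sqrt a$ and $B=\sqrt b\,U$, where $U$ is the unitary in the polar decomposition $\sqrt b\sqrt a=U^{*}\lvert\sqrt b\sqrt a\rvert$, or more explicitly
\[
  U=\sqrt b^{-1}\sqrt a^{-1}\bigl(\sqrt a\, b\sqrt a\bigr)^{1/2}.
\]
This is well defined because $a,b\succeq\mu\Id$, so the inverse square roots exist. Then $AA^*=a$ and $BB^*=b$. Moreover,
\[
  A^*B=\sqrt a\,\sqrt b\,U=\bigl(\sqrt a\, b\sqrt a\bigr)^{1/2}\succeq0,
\]
and $\Tr(A^*B)=\norm{\sqrt a\sqrt b}_1=f$. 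By \cref{lem:uhlmann-matrix} the pair is therefore fidelity-attaining.

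\textbf{The bisector bound.} Expand
\[
  (A+B)(A+B)^*=a+b+AB^*+BA^*.
\]
This does not directly give $\Omega\succeq\frac{\mu}{4}\Id$, so I would instead work with $(A+B)^*(A+B)$, which has the same nonzero spectrum. Since $A+B$ is square, $(A+B)(A+B)^*$ and $(A+B)^*(A+B)$ are unitarily equivalent, so it suffices to bound the smallest singular value of $A+B$. Now
\[
  (A+B)^*(A+B)=A^*A+B^*B+A^*B+B^*A.
\]
Here $A^*B=B^*A\succeq0$, so
\[
  (A+B)^*(A+B)\succeq A^*A+B^*B=a+U^*bU\succeq 2\mu\Id,
\]
using $A^*A=a$ and $B^*B=U^*bU\succeq\mu\Id$. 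Since $f\le1$, this gives
\[
  \Omega\succeq\frac{2\mu}{2(1+f)}\Id\succeq\frac{\mu}{2}\Id,
\]
which is stronger than the required $\frac{\mu}{4}\Id$. The slack will absorb approximation error in the computational part.

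\textbf{Computability.} All quantities are built from matrix square roots and inverse square roots of positive definite matrices with spectrum in $[\mu,1]$; the matrix $\sqrt a\,b\sqrt a$ has spectrum in $[\mu^2,1]$. Hence every operation is well conditioned with condition number $\poly(1/\mu)$. I would compute each square root by the following pipeline.
\begin{enumerate}
  \item Compute a Hermitian eigendecomposition to high precision, via a certified deterministic eigensolver or the Newton--Schulz / Denman--Beavers iteration on a scaled matrix. These converge quadratically once started, in $O(\log p+\log\log(1/\mu))$ iterations after $O(\log(1/\mu))$ initial ones.
  \item Round all intermediate results to dyadics with $p'=p+O(\log(m/\mu))$ guard bits.
\end{enumerate}
Error propagation uses two Lipschitz facts:
\[
  \norm{\sqrt X-\sqrt Y}_{\op}\le\frac{\norm{X-Y}_{\op}}{2\sqrt{\lambda_{\min}}},
  \qquad
  \norm{X^{-1}-Y^{-1}}_{\op}\le\lambda_{\min}^{-2}\norm{X-Y}_{\op}.
\]
Every stage therefore loses only $O(\log(1/\mu))$ bits. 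The remaining quantities are then computed as follows.
\begin{itemize}
  \item $\widetilde f=\Tr(\widetilde A^*\widetilde B)$, whose error is at most $m$ times the operator-norm error.
  \item $\widetilde\Omega$, obtained by direct arithmetic; the denominator satisfies $1+f\ge1$, so it is well conditioned.
\end{itemize}
Since norms of all matrices involved are at most $\poly(1/\mu)$, the total precision needed is $p+O(\log(m/\mu)+B)$ bits, and the running time is $\poly(m,B,p,\log(1/\mu))$.

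\textbf{Main obstacle.} The hard step is the certified numerical square root: obtaining rigorous dyadic error bounds deterministically in bit complexity polynomial in $\log(1/\mu)$ rather than in $1/\mu$. I would first try a scaled Newton iteration with explicit backward-error bounds at each step. As a fallback, I would use exact rational computation of the characteristic polynomial together with certified root isolation, which is polynomial in bit size, to obtain eigenvalues and eigenvectors to precision $2^{-p'}$.
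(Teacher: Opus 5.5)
Your existence argument is correct. Your $B=\sqrt b\,U$ simplifies to $a^{-1/2}C^{1/2}$ with $C=\sqrt a\,b\sqrt a$, which is exactly the paper's choice. There is one harmless slip: with your explicit $U$ one has $\sqrt b\sqrt a=U\lvert\sqrt b\sqrt a\rvert$, not $U^*\lvert\sqrt b\sqrt a\rvert$, but only the explicit formula is used. Your bisector bound takes a different and slightly sharper route. You work with $(A+B)^*(A+B)=a+U^*bU+2C^{1/2}\succeq2\mu\Id$ and use the unitary equivalence of $ZZ^*$ and $Z^*Z$ for square $Z$, which gives $\Omega\succeq\frac{\mu}{2}\Id$. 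The paper instead writes $A+B=(\Id+T)a^{1/2}$ with $T=a^{-1/2}C^{1/2}a^{-1/2}\succeq0$ and uses $(\Id+T)a(\Id+T)\succeq\mu(\Id+T)^2\succeq\mu\Id$, which yields only $\frac{\mu}{4}\Id$.

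The gap is in the algorithmic half. The statement asserts it, and you leave it as a plan (``I would first try\ldots''). Your one quantitative claim is that rounding all intermediates to $p+O(\log(m/\mu))$ guard bits suffices inside a Newton--Schulz iteration. This does not follow from quadratic convergence, which concerns the exact iterates; those are polynomials in $h$, while rounding errors need not commute with $h$. Take the uncoupled step $\Phi(X)=\frac12X(3\Id-hX^2)$. In an eigenbasis of $h$, its derivative at $h^{-1/2}$ multiplies the $(i,j)$ entry of a perturbation by $\frac12\bigl(1-\sqrt{\lambda_i/\lambda_j}\bigr)$. This has modulus larger than $1$ once $\lambda_{\max}/\lambda_{\min}>9$, and here the condition numbers reach $1/\mu$ for $a$ and $1/\mu^2$ for $C$. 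Errors are therefore amplified at each step, so you need a propagation bound over the whole run. A stable coupled variant such as Denman--Beavers would likewise need its own rounding analysis.

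The paper supplies exactly this bound. Starting from $X_0=2^{-s}\Id$, it runs only $K=O(\log(1/\mu)+\log p)$ steps. It bounds the Lipschitz constant of $\Phi$ by $L=2+6M/\mu$ on the ball of radius $2\mu^{-1/2}$. It then rounds with $q=p+O(K\log L+\log m)$ bits, which is polynomial but quadratic in $\log(1/\mu)$.

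You also need to address that $C$ is not exactly available as a rational matrix. Any square-root routine, including your characteristic-polynomial fallback, must run on a rational surrogate with certified spectral bounds. The paper forms $C_0=\widehat A\,b\,\widehat A$ exactly and certifies $\frac34\mu^2\Id\preceq C_0\preceq4\Id$ by Weyl's inequality. It then transfers the result to $C^{1/2}$ via $\norm{H^{1/2}-K^{1/2}}_{\op}\le\norm{H-K}_{\op}/\bigl(\sqrt{\lambda_{\min}(H)}+\sqrt{\lambda_{\min}(K)}\bigr)$.

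Finally, the fallback as phrased (eigenvectors to precision $2^{-p'}$) is ill-posed for repeated or clustered eigenvalues. It would have to be recast in terms of spectral projectors, or an interpolating polynomial with root-separation bounds. None of this endangers polynomial running time, but as written the computational claim is asserted rather than proved.
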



The proof is deferred to \Cref{app:proof-uniform-bures-data}.

\subsection{The First-Order Part of Optimal Product Purifications}
\label{sec-first-order-purifications}

The constructions in this subsection are parametric in the chosen product pair.
In the proof of \Cref{thm:main}, they will be instantiated with the regularized pair $(\bar\rho,\bar\sigma)$ constructed there.

For each local pair $(\rho_i,\sigma_i)$, choose Uhlmann-optimal purifications $\ket{r_i},\ket{s_i}\in\C^m\otimes\C^m$ such that
\[
  \Tr_E\proj{r_i}=\rho_i,\qquad
  \Tr_E\proj{s_i}=\sigma_i,\qquad
  \langle r_i|s_i\rangle=f_i\triangleq F(\rho_i,\sigma_i)\ge0.
\]
Such a pair exists by \Cref{lem:uhlmann-matrix}, after adjusting a global phase.
Define the product purifications
\[
\begin{aligned}
  \ket{\Psi}&=\bigotimes_i\ket{r_i},
  &\Tr_E\proj{\Psi}&=\rho,\\
  \ket{\Phi}&=\bigotimes_i\ket{s_i},
  &\Tr_E\proj{\Phi}&=\sigma.
\end{aligned}
\]

We now express each local pair in bisector coordinates.
When $f_i<1$, define
\begin{equation}
  \ket{0_i}
    =\frac{\ket{r_i}+\ket{s_i}}{\sqrt{2(1+f_i)}},
  \qquad
  \ket{1_i}
    =\frac{\ket{r_i}-\ket{s_i}}{\sqrt{2(1-f_i)}}.
  \label{eq:bisector-basis}
\end{equation}
The two vectors are orthonormal.
Put
\begin{align}\label{eq:bisector-data}
  c_i=\sqrt{\frac{1+f_i}{2}},\qquad
  s_i=\sqrt{\frac{1-f_i}{2}},\qquad
  \Omega_i=\Tr_E\proj{0_i}.
\end{align}
If $f_i=1$, then $\rho_i=\sigma_i$; we take $\ket{0_i}=\ket{r_i}=\ket{s_i}$ and omit the zero $s_i\ket{1_i}$ term.
Thus

\begin{equation}
  \ket{\Psi}
  =\bigotimes_i(c_i\ket{0_i}+s_i\ket{1_i}),
  \qquad
  \ket{\Phi}
  =\bigotimes_i(c_i\ket{0_i}-s_i\ket{1_i}).
  \label{eq:product-purifications}
\end{equation}
\begin{definition}[First-order part]
\label{def:first-order-part}
For the fixed Uhlmann-optimal product purifications above, let $P_{\le1}$ project onto the span of the vacuum $\bigotimes_i\ket{0_i}$ and the vectors with exactly one local $\ket{1_i}$.
Their first-order reduced difference is
\begin{equation}
  X_1
  =
  \Tr_E\!\left[
    P_{\le1}
    \bigl(\proj{\Psi}-\proj{\Phi}\bigr)
    P_{\le1}
  \right].
  \label{eq:x1-definition}
\end{equation}
\end{definition}

The following theorem bounds the trace distance using the local fidelities $f_i$ and the first-order operator $X_1$.

\begin{theorem}
\label{thm:first-order-bridge}
For arbitrary local density matrices with fixed local Uhlmann-optimal purification pairs, let $X_1$ be their first-order part from \cref{def:first-order-part}, and let $\delta_F=1-\prod_i f_i$.
Then
\begin{equation}
\begin{aligned}
  \norm{\rho-\sigma}_1
  &\le \norm{X_1}_1+4\delta_F,\\
  \norm{X_1}_1
  &\le3\norm{\rho-\sigma}_1,\\
  \delta_F
  &\le\frac12\norm{\rho-\sigma}_1.
\end{aligned}
  \label{eq:first-order-bridge}
\end{equation}
\end{theorem}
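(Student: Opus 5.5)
The third inequality in \eqref{eq:first-order-bridge} is immediate. By \eqref{eq:fidelity-product}, $\prod_i f_i=F(\rho,\sigma)$, so $\delta_F=1-F(\rho,\sigma)\le\D(\rho,\sigma)=\frac12\norm{\rho-\sigma}_1$ by \cref{lem:fvdg}.

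For the first inequality, I would prove the sharper statement that the remainder $Z\triangleq(\rho-\sigma)-X_1$ satisfies $\norm{Z}_1\le4\delta_F$. The triangle inequality then gives it directly. The second inequality follows from the same estimate:
\[
  \norm{X_1}_1\le\norm{\rho-\sigma}_1+4\delta_F\le\norm{\rho-\sigma}_1+2\norm{\rho-\sigma}_1=3\norm{\rho-\sigma}_1 .
\]
So everything reduces to bounding $\norm{Z}_1$.

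\textbf{Degree decomposition.} By \eqref{eq:product-purifications}, expand $\ket\Psi=\sum_{k\ge0}\Psi_k$, where $\Psi_k$ collects the product basis vectors with exactly $k$ local factors $\ket{1_i}$. Then $\ket\Phi=\sum_k(-1)^k\Psi_k$. The components $\Psi_k$ are mutually orthogonal, and
\[
  \norm{\Psi_k}^2=\Pr[\,|S|=k\,],
\]
where $S$ is a random set containing each site $i$ independently with probability $p_i\triangleq s_i^2=(1-f_i)/2\le\frac12$.

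Write $\Psi_e$ and $\Psi_o$ for the even and odd parts, so that $\Psi=\Psi_e+\Psi_o$ and $\Phi=\Psi_e-\Psi_o$. Then
\[
  \proj\Psi-\proj\Phi=2\bigl(\ket{\Psi_e}\!\bra{\Psi_o}+\ket{\Psi_o}\!\bra{\Psi_e}\bigr).
\]
Compressing by $P_{\le1}$ keeps exactly $2(\ket{\Psi_0}\!\bra{\Psi_1}+\mathrm{h.c.})$. Writing $\Psi_{e,\ge2}\triangleq\sum_{k\ge2\text{ even}}\Psi_k$ and $\Psi_{o,\ge3}\triangleq\sum_{k\ge3\text{ odd}}\Psi_k$, the discarded part is
\[
  Z=\Tr_E\bigl[2(\ket{\Psi_{e,\ge2}}\!\bra{\Psi_o}+\ket{\Psi_0}\!\bra{\Psi_{o,\ge3}}+\mathrm{h.c.})\bigr].
\]
Partial trace is trace-norm contractive, and $\norm{\ket x\!\bra y+\ket y\!\bra x}_1\le2\norm x\norm y$. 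Hence
\[
  \norm{Z}_1\le4\bigl(\norm{\Psi_{e,\ge2}}\norm{\Psi_o}+\norm{\Psi_0}\norm{\Psi_{o,\ge3}}\bigr).
\]

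\textbf{Weights in terms of $\delta_F$.} From $\prod_if_i=\langle\Psi|\Phi\rangle=\norm{\Psi_e}^2-\norm{\Psi_o}^2$ we get $\norm{\Psi_o}^2=\Pr[|S|\text{ odd}]=\delta_F/2$. Moreover $1-2p\le(1-p)^2\le1-p$, so
\[
  \Pr[S\neq\emptyset]=1-\prod_i(1-p_i)\le1-\prod_i(1-2p_i)=2\Pr[|S|\text{ odd}]=\delta_F .
\]
Consequently $\norm{\Psi_{e,\ge2}}^2\le\delta_F$, and the first product is at most $\delta_F/\sqrt2$.

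\textbf{Main obstacle.} The hard step is the term $\norm{\Psi_0}\norm{\Psi_{o,\ge3}}$. Here $\norm{\Psi_0}$ is of order one, so I need a genuinely quadratic tail estimate $\Pr[|S|\text{ odd},\,|S|\ge3]^{1/2}\le c\,\delta_F$ with $c\le1-1/\sqrt2$. The crude bound $\norm{\Psi_{\ge2}}\le\sqrt{\delta_F}$ is insufficient.

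I would first prove $\Pr[|S|\ge2]\le\Pr[S\neq\emptyset]^2$, or a comparable bound such as $\Pr[|S|\ge3]\le\delta_F^{2}\cdot C$, by a union/generating-function argument in the small-$\delta_F$ regime. When $\delta_F$ is bounded below, the trivial bound $\norm{Z}_1\le\norm{\rho-\sigma}_1+\norm{X_1}_1$ would be used instead.

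If the constants refuse to close, I would instead avoid splitting off $\Psi_0$. The alternative is to pair $\Psi_{o,\ge3}$ with the whole even part after a phase-symmetrization. This can be done by replacing the decomposition with $\Psi=\Psi_{\le1}+\Psi_{\ge2}$, and using that $\ket{\Psi_{\le1}}\!\bra{\Psi_{\ge2}}$ terms appear with opposite signs in $\Psi$ and $\Phi$ only for odd total degree. That route lets the $O(1)$-norm factor be replaced by a factor of order $\sqrt{\delta_F}$.
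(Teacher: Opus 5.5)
Your reduction of the first two inequalities to $\norm{Z}_1\le4\delta_F$ is correct. So are your proof of the third inequality, the parity decomposition of $Z$, and the identities $\norm{\Psi_o}^2=\delta_F/2$ and $\Pr[S\neq\emptyset]\le\delta_F$.

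The gap is the step you flag yourself. The quadratic tail estimate is only announced, never proved. Worse, the way you intend to use it cannot give the constant $4$. The bound you ask for, $\Pr[|S|\text{ odd},\,|S|\ge3]^{1/2}\le c\,\delta_F$ with $c\le1-1/\sqrt2$, is false. For three sites with $f_i=0$ we have $p_i=1/2$, $\delta_F=1$ and $\Pr[|S|=3]=1/8$, which forces $c\ge1/(2\sqrt2)>1-1/\sqrt2$. The loss comes from bounding $\norm{\Psi_0}\le1$ and $\norm{\Psi_{e,\ge2}}\le\sqrt{\delta_F}$ separately.

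Your fallbacks do not repair this. The ``trivial'' bound $\norm{Z}_1\le\norm{\rho-\sigma}_1+\norm{X_1}_1$ only yields $\norm{Z}_1\le4$, which equals $4\delta_F$ only at $\delta_F=1$. The resymmetrization idea also fails. The term $\ket{\Psi_0}\!\bra{\Psi_{o,\ge3}}$ really carries the factor $\norm{\Psi_0}$, and $\norm{\Psi_0}^2\ge1-\delta_F$, so this factor cannot be traded for $\sqrt{\delta_F}$. The smallness has to come from $\norm{\Psi_{o,\ge3}}$ itself.

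The missing ingredient is exactly the inequality you guessed. It holds for all $\delta_F$, with no regime split, and has a two-line proof. Put $\pi=\prod_i(1-p_i)=\norm{\Psi_0}^2$. Then $\Pr[|S|=1]=\pi\sum_ip_i/(1-p_i)$, and the union bound gives $1-\pi\le\sum_ip_i\le\sum_ip_i/(1-p_i)$. Hence
\[
  \Pr[|S|\ge2]=(1-\pi)-\pi\sum_i\frac{p_i}{1-p_i}\le(1-\pi)-\pi(1-\pi)=(1-\pi)^2\le\delta_F^2,
\]
where the last step is your bound $1-\pi=\Pr[S\neq\emptyset]\le\delta_F$.

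With this, your decomposition closes by Cauchy--Schwarz rather than term by term. We have $\norm{\Psi_{e,\ge2}}^2+\norm{\Psi_{o,\ge3}}^2=\Pr[|S|\ge2]\le\delta_F^2$ and $\norm{\Psi_o}^2+\norm{\Psi_0}^2\le1$. Therefore $\norm{\Psi_{e,\ge2}}\norm{\Psi_o}+\norm{\Psi_0}\norm{\Psi_{o,\ge3}}\le\delta_F$, which gives $\norm{Z}_1\le4\delta_F$.

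The paper avoids the parity split altogether. Let $P=P_{\le1}$ and $Q=\Id-P$. For $\xi\in\{\Psi,\Phi\}$ it writes $\proj\xi-P\proj\xi P=\ket{Q\xi}\!\bra{\xi}+\ket{P\xi}\!\bra{Q\xi}$. Each such operator has trace norm at most $2\norm{Q\xi}=2\Pr[|S|\ge2]^{1/2}\le2\delta_F$. Contractivity of the partial trace and the triangle inequality then give $4\delta_F$.
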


\begin{proof}
For $K\subseteq\{i:f_i<1\}$, let $\ket K$ denote the corresponding excitation-set vector and put
\[
  \alpha_K
  =
  \prod_{i\in K}s_i\prod_{j\notin K}c_j.
\]
Expanding \eqref{eq:product-purifications} gives
\[
  \ket{\Psi}=\sum_K\alpha_K\ket K,
  \qquad
  \ket{\Phi}=\sum_K(-1)^{|K|}\alpha_K\ket K.
\]
The total squared coefficient of the terms with at least two excitations satisfies
\[
\begin{aligned}
  \sum_{|K|\ge2}\alpha_K^2
  &=
  1-\left(\prod_i c_i^2\right)
  \left(1+\sum_i\frac{s_i^2}{c_i^2}\right)\\
  &\le
  \left(1-\prod_i c_i^2\right)^2
  \le
  \delta_F^2.
\end{aligned}
\]
Here the first inequality uses $1-\prod_i c_i^2\le\sum_i s_i^2 \le\sum_i s_i^2/c_i^2$, and the second uses $f_i=c_i^2-s_i^2\le c_i^2$.

Write $P=P_{\le1}$ and $Q=\Id-P$.
For $\ket\xi\in\{\ket\Psi,\ket\Phi\}$,
\[
\begin{aligned}
  \norm{\proj\xi-P\proj\xi P}_1
  &=
  \norm{\ket{Q\xi}\!\bra{\xi}
        +\ket{P\xi}\!\bra{Q\xi}}_1\\
  &\le
  2\norm{Q\ket\xi}
  =
  2\left(\sum_{|K|\ge2}\alpha_K^2\right)^{1/2}.
\end{aligned}
\]
Using \eqref{eq:x1-definition}, trace-norm contractivity under partial trace, and the triangle inequality, we obtain
\begin{equation}
  \norm{(\rho-\sigma)-X_1}_1
  \le4\delta_F.
  \label{eq:x1-error-fidelity}
\end{equation}
The Fuchs--van de Graaf inequality in \cref{lem:fvdg} and fidelity tensorization in \eqref{eq:fidelity-product} give
\[
  \delta_F\le\frac12\norm{\rho-\sigma}_1.
\]
Finally, \eqref{eq:x1-error-fidelity} and the triangle inequality give
\[
  \norm{\rho-\sigma}_1
  \le\norm{X_1}_1+4\delta_F,
  \qquad
  \norm{X_1}_1
  \le\norm{\rho-\sigma}_1
      +4\delta_F
  \le3\norm{\rho-\sigma}_1.
\]
\end{proof}

\subsection{Estimating the First-Order Part}
\label{sec: estimating first-order}
We first study a general first-order product operator.
Let $\Omega_i\succ0$ be density matrices and let $x_i$ be Hermitian traceless matrices.
Define
\begin{equation}
  Y
  =
  \sum_{i=1}^n\Omega_{\ne i}\otimes x_i,
  \qquad
  \Omega_{\ne i}
  =
  \bigotimes_{j\ne i}\Omega_j.
  \label{eq:degree-one-general}
\end{equation}
Denote $\mathcal M_i=\{z=z^*:\Tr z=0\}.$

To estimate $\norm Y_1$, we adapt the Johnson--Schechtman decomposition for sums of independent random variables \cite{JohnsonSchechtman1989}, in the centered form of \cref{thm:centered-disjointification}.
It splits each centered random variable into an $L_1$ part, controlled term by term, and an $L_2$ part, controlled by the square root of the total variance.
We similarly split each local tangent as $x_i=u_i+v_i$.
The $u_i$'s will be controlled in trace norm, while the $v_i$'s will be controlled by the following quantum $L_2$ quantity.
For every $v_i\in\mathcal M_i$, let $a_i$ be the unique Hermitian solution of
\[
  \frac12(\Omega_i a_i+a_i\Omega_i)=v_i.
\]
Indeed, if $\Omega_i=\sum_rp_r\proj r$, then
\[
  (a_i)_{rs}=\frac{2(v_i)_{rs}}{p_r+p_s},
\]
which is well defined because every $p_r>0$.
Moreover,
\[
  \Tr(\Omega_i a_i)=\Tr v_i=0,
  \qquad
  \Tr(v_i a_i)=\Tr(\Omega_i a_i^2)\ge0.
\]
When $\Omega_i$ and $v_i$ commute, a common eigenbasis gives $\sum_r(v_i)_{rr}^2/p_r$, the classical weighted $L_2$ cost.

For the tuple $x=(x_i)$, define
\begin{equation}
\begin{split}
  K_{\mathrm{loc}}(x;\Omega)
  =
  \inf_{\substack{x_i=u_i+v_i\\
                   u_i,v_i\in\mathcal M_i}}
  \left\{
    \sum_i\norm{u_i}_1+
    \left(\sum_i\Tr(v_i a_i)\right)^{1/2}
  \right\}.
\end{split}
\label{eq:local-k-functional}
\end{equation}

It is clear that $\norm Y_1$ is upper bounded by $K_{\mathrm{loc}}(x;\Omega)$.

\begin{theorem}
\label{thm:local-head-tail-bound}
Every first-order product operator in \eqref{eq:degree-one-general} satisfies
\[
  \norm Y_1\le K_{\mathrm{loc}}(x;\Omega).
\]
\end{theorem}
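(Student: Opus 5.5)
The plan is to fix an arbitrary admissible decomposition $x_i=u_i+v_i$ with $u_i,v_i\in\mathcal M_i$. We bound $\norm{Y}_1$ by the corresponding objective value in \eqref{eq:local-k-functional}, and then take the infimum. By linearity, $Y=Y_u+Y_v$ with $Y_u=\sum_i\Omega_{\ne i}\otimes u_i$ and $Y_v=\sum_i\Omega_{\ne i}\otimes v_i$ (tensor factors ordered so that site $i$ is in its natural position). The triangle inequality reduces the claim to two separate estimates:
\[
  \norm{Y_u}_1\le\sum_i\norm{u_i}_1,
  \qquad
  \norm{Y_v}_1\le\Bigl(\sum_i\Tr(v_ia_i)\Bigr)^{1/2}.
\]

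\emph{The $u$-part.} For each $i$, multiplicativity of the trace norm on Kronecker products gives $\norm{\Omega_{\ne i}\otimes u_i}_1=\prod_{j\ne i}\norm{\Omega_j}_1\,\norm{u_i}_1=\norm{u_i}_1$, because each $\Omega_j$ is a density matrix. Summing over $i$ with the triangle inequality yields the first estimate.

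\emph{The $v$-part (the square-function estimate).} Put $\Omega=\bigotimes_j\Omega_j\succ0$. Let $A_i$ denote $a_i$ acting on site $i$ with identities elsewhere, and set $A=\sum_iA_i$, which is Hermitian. Since $v_i=\frac12(\Omega_ia_i+a_i\Omega_i)$, we have $\Omega_{\ne i}\otimes v_i=\frac12(\Omega A_i+A_i\Omega)$. Summing over $i$ gives
\[
  Y_v=\tfrac12(\Omega A+A\Omega).
\]
Write $\Omega A=\Omega^{1/2}\cdot(\Omega^{1/2}A)$. Hölder's inequality then gives
\[
  \norm{\Omega A}_1\le\norm{\Omega^{1/2}}_2\,\norm{\Omega^{1/2}A}_2=(\Tr\Omega)^{1/2}\bigl(\Tr(A\Omega A)\bigr)^{1/2}=\bigl(\Tr(\Omega A^2)\bigr)^{1/2}.
\]
Since $A\Omega=(\Omega A)^*$, the same bound holds for $A\Omega$. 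Hence $\norm{Y_v}_1\le(\Tr(\Omega A^2))^{1/2}$.

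It remains to expand $\Tr(\Omega A^2)=\sum_{i,j}\Tr(\Omega A_iA_j)$. For $i\ne j$, the operator $\Omega A_iA_j$ factorizes across sites. Its trace is therefore
\[
  \Tr(\Omega_ia_i)\,\Tr(\Omega_ja_j)\prod_{k\ne i,j}\Tr\Omega_k=0,
\]
because $\Tr(\Omega_ia_i)=\Tr v_i=0$. This orthogonality of the cross terms is the key cancellation: it is the quantum analogue of independence of centered variables. The diagonal terms give $\Tr(\Omega_ia_i^2)=\Tr(v_ia_i)$, by the identity recorded before \eqref{eq:local-k-functional}. Thus
\[
  \Tr(\Omega A^2)=\sum_i\Tr(v_ia_i),
\]
which proves the second estimate. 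Taking the infimum over decompositions completes the argument.

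The only step requiring care is the $v$-part, namely recognizing $Y_v$ as a symmetrized product $\frac12(\Omega A+A\Omega)$ and applying Hölder's inequality with the weight $\Omega^{1/2}$. Everything else is bookkeeping. The hypothesis $\Omega_i\succ0$ is used only to guarantee that each $a_i$ exists.
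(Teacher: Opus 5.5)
Your proposal is correct and follows the paper's proof essentially step for step: you split $Y=Y_u+Y_v$, bound $Y_u$ by the triangle inequality with $\norm{\Omega_{\ne i}}_1=1$, write $Y_v=\frac12(\Omega A+A\Omega)$, apply Schatten H\"older/Cauchy--Schwarz, and use $\Tr(\Omega_i a_i)=0$ to kill the cross terms in $\Tr(\Omega A^2)$. The only difference is that you spell out the H\"older step (factoring $\Omega A=\Omega^{1/2}\cdot\Omega^{1/2}A$) and the cross-term factorization, which the paper states without detail.
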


\begin{proof}
Fix a feasible decomposition $x_i=u_i+v_i$, and write
\[
  Y_u=\sum_i\Omega_{\ne i}\otimes u_i,
  \qquad
  Y_v=\sum_i\Omega_{\ne i}\otimes v_i.
\]
The $L_1$ part satisfies
\[
  \norm{Y_u}_1
  \le
  \sum_i\norm{\Omega_{\ne i}\otimes u_i}_1
  =
  \sum_i\norm{u_i}_1,
\]
because $\norm{\Omega_{\ne i}}_1=1$.

For the $L_2$ part, let $A=\sum_i a_i^{(i)}$, where $a_i^{(i)}$ acts as $a_i$ at site $i$ and as the identity elsewhere.
Then
\[
  Y_v=\frac12(\Omega A+A\Omega),
  \qquad
  \Tr(\Omega_i a_i)=0,
  \qquad
  \Tr(v_i a_i)=\Tr(\Omega_i a_i^2).
\]
The centering removes all cross terms in $\Tr(\Omega A^2)$.
Schatten Cauchy--Schwarz therefore gives
\[
\begin{aligned}
  \norm{Y_v}_1
  &\le\frac12\left(\norm{\Omega A}_1+\norm{A\Omega}_1\right)\\
  &\le\sqrt{\Tr(\Omega A^2)}
  =\left(\sum_i\Tr(v_i a_i)\right)^{1/2}.
\end{aligned}
\]
Combining the two parts gives
\[
  \norm Y_1
  \le\norm{Y_u}_1+\norm{Y_v}_1
  \le
  \sum_i\norm{u_i}_1+
  \left(\sum_i\Tr(v_i a_i)\right)^{1/2}.
\]
Taking the infimum proves $\norm Y_1\le K_{\mathrm{loc}}(x;\Omega)$.
\end{proof}

We prove the reverse bound by reducing the dual of $K_{\mathrm{loc}}$ to the centered scalar setting of \cref{prop:centered-scalar-pairing}.
A local dual witness will select a product dephasing basis; in that basis its objective becomes a pairing of independent centered scalar random variables.

\begin{theorem}
\label{thm:degree-one-reverse-bound}
Every first-order product operator in \eqref{eq:degree-one-general} satisfies
\begin{equation}
  K_{\mathrm{loc}}(x;\Omega)
  \le \theta\norm Y_1.
  \label{eq:degree-one-reverse-bound}
\end{equation}
\end{theorem}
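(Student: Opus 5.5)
The plan is to write $K_{\mathrm{loc}}(x;\Omega)$ in dual form, then use a product dephasing adapted to an optimal dual witness. This turns the dual objective into the scalar pairing of \cref{prop:centered-scalar-pairing}. Contractivity of dephasing then bounds the resulting $\mathbb E\abs{\sum_i\xi_i}$ by $\norm Y_1$.

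\emph{Step 1 (duality).} On $\mathcal M=\bigoplus_i\mathcal M_i$, $K_{\mathrm{loc}}$ is the infimal convolution of two norms:
\[
N_1(u)=\sum_i\norm{u_i}_1,
\qquad
N_2(v)=\Bigl(\sum_i\Tr(\Omega_i a_i^2)\Bigr)^{1/2},
\qquad
\tfrac12(\Omega_ia_i+a_i\Omega_i)=v_i.
\]
It is therefore a norm. Its dual norm is the maximum of the two dual norms, so by finite-dimensional duality
\[
K_{\mathrm{loc}}(x;\Omega)=\sup\Bigl\{\sum_i\Tr(h_ix_i)\Bigr\},
\]
over Hermitian tuples $h=(h_i)$ that satisfy both dual constraints. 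Since the $x_i$ are traceless, each $h_i$ is determined only modulo $\Id$. I will compute the two dual constraints as follows.
\begin{itemize}
\item The dual of $N_1$ on traceless matrices requires $\min_{c}\norm{h_i-c\Id}_{\op}\le1$, i.e.\ the spectral spread of $h_i$ is at most $2$.
\item For $N_2$, note $\Tr(hv)=\Re\Tr(\Omega a h)$. Cauchy--Schwarz in the inner product $\langle a,b\rangle=\Re\Tr(\Omega ab)$ gives $\Tr(hv)\le(\Tr\Omega a^2)^{1/2}(\Tr\Omega h^2)^{1/2}$, with equality at $a\propto h$.
\end{itemize}
After normalising the representative by $\Tr(\Omega_ih_i)=0$, which is allowed by the shift freedom and makes $h_i$ an admissible $a_i$, the dual of $N_2$ becomes $\sum_i\Tr(\Omega_ih_i^2)\le1$. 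Both constraints are invariant under the shift, so one representative satisfies both.

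\emph{Step 2 (dephasing).} Fix such a feasible $h$. Diagonalize $h_i=\sum_k\lambda_{ik}\proj{e_{ik}}$, and let $\mathcal D_i$ be dephasing in the basis $\{\ket{e_{ik}}\}$. Put $\mathcal D=\bigotimes_i\mathcal D_i$, which is a channel, so $\norm{\mathcal D(Y)}_1\le\norm Y_1$. Set $p_{ik}=\bra{e_{ik}}\Omega_i\ket{e_{ik}}$; these are positive because $\Omega_i\succ0$ and sum to one. Then $\mathcal D(Y)$ is diagonal, and $\norm{\mathcal D(Y)}_1=\mathbb E\abs{\sum_i\xi_i}$. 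Here the sites are independent with laws $p_i$, and
\[
\xi_i(k)=\frac{\bra{e_{ik}}x_i\ket{e_{ik}}}{p_{ik}},
\qquad
\eta_i(k)=\lambda_{ik}.
\]
The required properties then follow by direct computation.
\begin{itemize}
\item The $\xi_i$ are independent, and centered because $\Tr x_i=0$.
\item Each $\eta_i$ is centered because $\Tr(\Omega_ih_i)=0$.
\item Each $\eta_i$ has range at most $2$, from the operator-norm constraint.
\item $\sum_i\mathbb E\eta_i^2=\sum_i\Tr(\Omega_ih_i^2)\le1$, since only the diagonal of $\Omega_i$ enters.
\item $\mathbb E(\xi_i\eta_i)=\sum_k\lambda_{ik}\bra{e_{ik}}x_i\ket{e_{ik}}=\Tr(h_ix_i)$.
\end{itemize}

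\emph{Step 3 (conclusion).} \Cref{prop:centered-scalar-pairing} gives
\[
\sum_i\Tr(h_ix_i)\le\theta\,\mathbb E\abs{\sum_i\xi_i}=\theta\norm{\mathcal D(Y)}_1\le\theta\norm Y_1.
\]
Taking the supremum over feasible $h$ proves \eqref{eq:degree-one-reverse-bound}.

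The step I expect to need the most care is Step 1. Once the dual witnesses are known to be exactly $\{\text{spread}(h_i)\le2,\ \sum_i\Tr\Omega_ih_i^2\le1\}$, the rest is bookkeeping. The delicate points are:
\begin{itemize}
\item that the dual of an infimal convolution is the intersection of the dual balls (standard in finite dimensions, where the infimum is attained and no duality gap arises);
\item that the quotient by $\Id$ lets a single representative $h_i$ satisfy the centering $\Tr(\Omega_ih_i)=0$ and the operator-norm bound simultaneously;
\item that the equality case $a_i=h_i$ in the $N_2$ dual computation is admissible, which holds since $h_i$ is centered.
\end{itemize}
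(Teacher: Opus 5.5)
Your proposal is correct and follows essentially the same route as the paper: you dualize the infimal convolution to obtain witnesses with spectral spread at most $2$ and total $\Omega$-variance at most $1$, dephase in the product eigenbasis of the witness, apply \cref{prop:centered-scalar-pairing}, and conclude by trace-norm contractivity of dephasing. The only difference is cosmetic: you fix the representative with $\Tr(\Omega_i h_i)=0$ from the start, whereas the paper keeps $h_i$ general and centers $\eta_i$ by subtracting $\Tr(\Omega_i h_i)$.
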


\begin{proof}
For a Hermitian $h$, regarded as a functional on $\mathcal M_i$, define
\[
\begin{aligned}
 \alpha_i(h)
 &=
 \sup_{\substack{u\in\mathcal M_i\\ \norm u_1\le1}}
       \abs{\Tr(hu)},\\
 \beta_i(h)
 &=
 \sup_{\substack{a=a^*,\ \Tr(\Omega_i a)=0\\
                  \Tr(\Omega_i a^2)\le1}}
       \frac12\abs{\Tr\!\left(h(\Omega_i a+a\Omega_i)\right)}.
\end{aligned}
\]
Duality for the infimal convolution in \eqref{eq:local-k-functional} gives
\[
 K_{\mathrm{loc}}(x;\Omega)
 =
 \sup\Bigl\{
   \sum_i\Tr(h_ix_i):
   \max_i\alpha_i(h_i)\le1,\ 
   \sum_i\beta_i(h_i)^2\le1
 \Bigr\}.
\]
Fix a feasible tuple $(h_i)$, and diagonalize
\[
  h_i=\sum_r\lambda_i(r)\proj{r_i}.
\]
Define
\[
\begin{aligned}
  \xi_i(r)&=
  \frac{\langle r_i|x_i|r_i\rangle}
       {\langle r_i|\Omega_i|r_i\rangle},\\
  \eta_i(r)&=
  \lambda_i(r)
  -\sum_t\langle t_i|\Omega_i|t_i\rangle\lambda_i(t).
\end{aligned}
\]
Every denominator is positive because $\Omega_i\succ0$.
Draw the labels $r_i$ independently with
\[
  \Pr(r_i=r)=\langle r_i|\Omega_i|r_i\rangle.
\]
Trace-norm duality gives
\[
  \alpha_i(h_i)
  =
  \frac12\left(\max_r\lambda_i(r)-\min_r\lambda_i(r)\right).
\]
Indeed, subtracting the midpoint of the eigenvalue range gives the upper bound, and half the difference of projectors onto extreme eigenvectors gives equality.
Therefore
\[
  \max_r\eta_i(r)-\min_r\eta_i(r)
  =\max_r\lambda_i(r)-\min_r\lambda_i(r)
  =2\alpha_i(h_i)
  \le2.
\]
In the definition of $\beta_i(h_i)$, the identity part of $h_i$ drops out because $\Tr(\Omega_i a)=0$.
Weighted Cauchy--Schwarz, with equality when $a$ is proportional to $h_i-\Tr(\Omega_i h_i)\Id$, gives
\[
  \beta_i(h_i)^2
  =
  \Tr\!\left(
    \Omega_i\bigl(h_i-\Tr(\Omega_i h_i)\Id\bigr)^2
  \right)
  =
  \mathbb E\eta_i^2.
\]
By construction, $\mathbb E\eta_i=0$, while $\Tr x_i=0$ gives $\mathbb E\xi_i=0$ and
\[
  \Tr(h_ix_i)
  =\sum_r\lambda_i(r)\langle r_i|x_i|r_i\rangle
  =\mathbb E(\xi_i\eta_i).
\]
The identity above and dual feasibility give
\[
  \sum_i\mathbb E\eta_i^2
  =\sum_i\beta_i(h_i)^2
  \le1.
\]
The conditions of \cref{prop:centered-scalar-pairing} are therefore satisfied, and
\[
  \abs{\sum_i\Tr(h_ix_i)}
  =
  \abs{\sum_i\mathbb E(\xi_i\eta_i)}
  \le
  \theta\,\mathbb E\abs{\sum_i\xi_i}.
\]
Let $\mathcal D$ dephase in the product of the selected eigenbases.
For $\ket{\mathbf r}=\ket{r_1}\otimes\cdots\otimes\ket{r_n}$,
\[
  \langle\mathbf r|Y|\mathbf r\rangle
  =
  \left(\prod_j\langle r_j|\Omega_j|r_j\rangle\right)
  \sum_i\xi_i(r_i).
\]
Summing the absolute diagonal entries gives
\[
  \norm{\mathcal D(Y)}_1
  =
  \mathbb E\abs{\sum_i\xi_i}.
\]
Since dephasing is trace-norm contractive,
\[
  \abs{\sum_i\Tr(h_ix_i)}
  \le
  \theta\norm{\mathcal D(Y)}_1
  \le
  \theta\norm Y_1.
\]
This holds for every feasible dual witness.
Taking the supremum proves the theorem.
\end{proof}

By \cref{thm:local-head-tail-bound,thm:degree-one-reverse-bound}, we approximate $\norm Y_1$ by approximating $K_{\mathrm{loc}}(x;\Omega)$.

\begin{theorem}[Estimating first-order part]
\label{thm:degree-one}
Suppose every entry of the local data in \eqref{eq:degree-one-general}, together with the parameters $\gamma$ and $\lambda$, has bit length at most $B$.
Let $0<\gamma,\lambda\le1$ satisfy
\[
  \Omega_i\succeq\gamma\Id\quad\text{for every }i,
  \qquad
  \lambda\le\max_i\norm{x_i}_1.
\]
For every $0<\beta<1$, there is a deterministic algorithm running in
\[
  \poly\!\left(
    n,m,B,\log\frac1\gamma,\log\frac1\lambda,
    \log\frac1\beta
  \right)
\]
time that outputs $\widehat N$ with
\begin{equation}
  \norm Y_1
  \le \widehat N
  \le (\theta+\beta)\norm Y_1.
  \label{eq:degree-one-bound}
\end{equation}
\end{theorem}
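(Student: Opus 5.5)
The plan is to compute $K_{\mathrm{loc}}(x;\Omega)$ to small additive error by a polynomial-size conic program, and then convert that additive error into relative error using a lower bound on $\norm Y_1$.

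First, rewrite \eqref{eq:local-k-functional} as the conic program \eqref{eq:uniform-conic} with exact data $(\Omega_i,x_i)$. Write $u_i=R_i-S_i$ with $R_i,S_i\succeq0$, so that $\norm{u_i}_1=\min\Tr(R_i+S_i)$ over such splittings. Parametrize $v_i$ through $a_i$ by $v_i=\frac12(\Omega_i a_i+a_i\Omega_i)$. The constraint $\Tr(\Omega_i a_i)=0$ is exactly $\Tr v_i=0$, and $\Tr(v_ia_i)=\Tr(\Omega_i a_i^2)$. The final constraint can be written as a second-order cone, or via Schur complements as linear matrix inequalities of size $O(m)$ with an aggregating scalar. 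Tracelessness of $u_i$ then follows automatically from $\Tr x_i=0$. Hence $\operatorname{OPT}=K_{\mathrm{loc}}(x;\Omega)$, and \cref{thm:local-head-tail-bound,thm:degree-one-reverse-bound} give
\[
\norm Y_1\le\operatorname{OPT}\le\theta\norm Y_1 .
\]

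Second, I will solve this program to additive accuracy $\epsilon$ in $\poly(n,m,B,\log(1/\epsilon),\log(1/\gamma))$ time, using the ellipsoid method or an interior-point method with rational certificates. This needs explicit bounds on the feasible region.
\begin{itemize}
\item The choice $u_i=x_i$, $a_i=0$ is feasible, so $\operatorname{OPT}\le\sum_i\norm{x_i}_1\le 2^{\poly(n,m,B)}$.
\item For any near-optimal point, $\Tr(R_i+S_i)$ and $t$ are at most this value. Since $\Omega_i\succeq\gamma\Id$, we get $\Tr(\Omega_ia_i^2)\ge\gamma\norm{a_i}_2^2$, so $\norm{a_i}_2\le t/\sqrt\gamma$. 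The region can therefore be confined to a ball of radius $2^{\poly}/\sqrt\gamma$.
\item A strictly feasible point with explicit slack exists: take $R_i=x_i^++\Id$, $S_i=x_i^-+\Id$, $a_i=0$, and $t$ large.
\end{itemize}
These are the standard hypotheses for the ellipsoid method on convex programs with rational data, whether the semidefinite and second-order constraints are handled by a separation oracle or by a polynomial-precision barrier method. The output is a certified rational $\widetilde N$ with $\operatorname{OPT}\le\widetilde N\le\operatorname{OPT}+\epsilon$. To certify the upper side, I will round a near-optimal primal point to an exactly feasible one (or add the slack) and output its value.

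Third, the additive error must become relative, which requires a lower bound on $\norm Y_1$ in terms of $\lambda$. I will use the partial trace over all sites except $k$, where $k$ maximizes $\norm{x_k}_1$. Since $\Tr x_j=0$, every term with $j\neq k$ vanishes under this partial trace, leaving $x_k$. Contractivity of the trace norm under partial trace then gives
\[
\norm Y_1\ge\norm{x_k}_1\ge\lambda .
\]
Choosing $\epsilon=\beta\lambda$ yields $\widehat N=\widetilde N\le\theta\norm Y_1+\beta\norm Y_1$, which is \eqref{eq:degree-one-bound}. The $\log(1/\beta)$ and $\log(1/\lambda)$ dependence enters only through $\log(1/\epsilon)$.

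The main obstacle is the second step: rigorously certifying polynomial-time additive accuracy for a mixed semidefinite/second-order program with complex Hermitian variables in the bit model. Complex Hermitian matrices can be embedded as real symmetric matrices of double size. After that embedding, I would invoke the ellipsoid-method guarantee of Grötschel--Lovász--Schrijver for well-bounded convex sets with a weak separation oracle. Such an oracle for PSD constraints comes from an approximate eigendecomposition computed to polynomially many bits. The inner and outer radius bounds established above make the method applicable.
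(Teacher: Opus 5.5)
Your proposal is correct and follows essentially the same route as the paper. You identify the conic program's value with $K_{\mathrm{loc}}(x;\Omega)$, solve it to additive error by weak conic optimization (outer radius from $\Omega_i\succeq\gamma\Id$, inner radius from a strictly feasible point), and convert to relative error via $\Tr_{\ne k}Y=x_k$, which gives $\norm Y_1\ge\lambda$ and justifies taking $\epsilon=\beta\lambda$. One small fix: use a rational center such as $R_i=c\Id+x_i/2$, $S_i=c\Id-x_i/2$, $a_i=0$ for a suitable rational $c$, as the paper does, instead of $x_i^\pm+\Id$, whose entries are generally irrational and cannot serve as the explicit center the ellipsoid method needs.
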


\begin{proof}
Parameterize the tail and the Hermitian head as
\[
  v_i=\frac12(\Omega_i a_i+a_i\Omega_i),
  \qquad
  u_i=R_i-S_i.
\]
The identities
\[
  \Tr v_i=\Tr(\Omega_i a_i),\qquad
  \Tr(v_i a_i)=\Tr(\Omega_i a_i^2)
\]
turn \eqref{eq:local-k-functional} into the conic program
\begin{equation}
\begin{aligned}
 \text{minimize}\quad&
   \sum_i\Tr(R_i+S_i)+t,\\
 \text{subject to}\quad&
   x_i=R_i-S_i+\frac12(\Omega_i a_i+a_i\Omega_i),\\
 & \Tr(\Omega_i a_i)=0,\qquad
   R_i,S_i\succeq0,\\
 & \left(\sum_i\Tr(\Omega_i a_i^2)\right)^{1/2}\le t.
\end{aligned}
\end{equation}
For fixed $a_i$, the head is
\[
  u_i=x_i-\frac12(\Omega_i a_i+a_i\Omega_i).
\]
The minimum over $R_i,S_i\succeq0$ with $R_i-S_i=u_i$ is attained by the positive and negative parts
\[
  R_i=(u_i)_+=\frac{\abs{u_i}+u_i}{2},
  \qquad
  S_i=(u_i)_-=\frac{\abs{u_i}-u_i}{2},
\]
and equals $\norm{u_i}_1$.
Minimizing $t$ gives the square-root term in \eqref{eq:local-k-functional}.
Thus the value of \eqref{eq:uniform-conic} is $K_{\mathrm{loc}}(x;\Omega)$.

Use the standard rational real coordinates consisting of the diagonal
entries and the real and imaginary parts of the upper-triangular entries,
and let \(z\) concatenate the coordinates of the \(a_i\)'s.  These
coordinates satisfy
\[
  \norm z_2^2
  \le \sum_i\norm{a_i}_F^2
  \le 2\norm z_2^2,
\]
and all coefficients of the resulting quadratic form are rational.
The last constraint is the ellipsoidal cone
\[
  z^\mathsf TQz\le t^2,\qquad t\ge0,
\]
where
\[
\begin{aligned}
  z^\mathsf TQz
  &=
  \sum_i\Tr(\Omega_i a_i^2)
  \ge
  \gamma\sum_i\Tr(a_i^2)
  =
  \gamma\sum_i\norm{a_i}_F^2
  \geq
  \gamma\norm z_2^2.
\end{aligned}
\]
Here the inequality uses $\Omega_i\succeq\gamma\Id$ and $a_i^2\succeq0$.
Hence $Q\succeq\gamma\Id$, so in particular $\lambda_{\min}(Q)\ge\gamma$.
Its coefficients have polynomial bit length, so the cone has a polynomial-time weak separation oracle.
The program has $O(nm^2)$ variables, $2n$ positive-semidefinite blocks of size $m$, and one ellipsoidal cone of dimension $O(nm^2)$.

Choose a polynomial-bit $c\ge1+\max_i\norm{x_i}_\infty$.
Then
\[
  a_i=0,\qquad
  R_i=c\Id+x_i/2,\qquad
  S_i=c\Id-x_i/2,\qquad t=1
\]
is strictly feasible.
Choose a polynomial-bit $U$ larger than its objective and restrict the program to the sublevel $U$.
On this sublevel,
\[
  \sum_i\bigl(\norm{R_i}_F+\norm{S_i}_F\bigr)
  \le U,
  \qquad
  \sum_i\norm{a_i}_F^2
  \le\frac{U^2}{\gamma}.
\]
The first bound uses positivity; the second uses the ellipsoidal constraint and $\Omega_i\succeq\gamma\Id$.
The strict point and these bounds give polynomially bounded inner and outer radii.
Standard weak conic optimization therefore returns
\[
  K_{\mathrm{loc}}(x;\Omega)
  \le\widehat K
  \le K_{\mathrm{loc}}(x;\Omega)+\eta
\]
in time polynomial in $n,m,B,\log(1/\gamma)$, and $\log(1/\eta)$.

Because every $x_j$ is traceless, partial trace gives
\[
  \Tr_{\ne i}Y=x_i,
  \qquad
  \lambda\le\max_i\norm{x_i}_1\le\norm Y_1.
\]
Set $\eta=\beta\lambda$ and return $\widehat N=\widehat K$.
By \cref{thm:local-head-tail-bound,thm:degree-one-reverse-bound},
\[
  \norm Y_1
  \le\widehat N
  \le\theta\norm Y_1+\beta\lambda
  \le(\theta+\beta)\norm Y_1,
\]
which proves \eqref{eq:degree-one-bound}.
\end{proof}

\begin{theorem}
\label{thm:full-rank-first-order}
Let
\[
  \rho=\bigotimes_{i=1}^n\rho_i,
  \qquad
  \sigma=\bigotimes_{i=1}^n\sigma_i
\]
be rational product states.
Suppose
\[
  \rho_i,\sigma_i\succeq\mu\Id
  \quad\text{for every }i,
  \qquad
  0<\lambda\le\max_i\norm{\rho_i-\sigma_i}_1,
\]
where the rational inputs and parameters have bit length at most $B$.
Use the Uhlmann-optimal purification pairs furnished by \cref{lem:uniform-bures-data}, and let $X_1$ be their first-order part from \cref{def:first-order-part}.

For every rational $0<\beta<1$, there is a deterministic algorithm running in
\[
  \poly\left(
    n,m,B,\log\frac1\mu,\log\frac1\lambda,\log\frac1\beta
  \right)
\]
time that outputs a rational number $\widehat N_1$ satisfying
\begin{equation}
  \norm{X_1}_1
  \le\widehat N_1
  \le(1+\beta)(\theta+\beta)\norm{X_1}_1.
  \label{eq:full-rank-X1-interface}
\end{equation}
\end{theorem}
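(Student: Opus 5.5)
The plan is to reduce $X_1$ to a first-order product operator of the form \eqref{eq:degree-one-general}, times an explicit scalar, and then invoke \cref{thm:degree-one}.

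\textbf{Step 1: an explicit formula for $X_1$.} By \eqref{eq:product-purifications}, the component of $\ket\Psi$ in the range of $P_{\le1}$ is $\prod_j c_j\bigl(\ket{\mathbf 0}+\sum_i (s_i/c_i)\ket{1_i}\otimes\ket{\mathbf 0_{\ne i}}\bigr)$. The component of $\ket\Phi$ has the same form with the signs of the single-excitation terms flipped. In the difference of the two projectors, the vacuum--vacuum terms cancel, and so do the products of two single-excitation terms. Only the cross terms between the vacuum and the single excitations survive. Taking the partial trace, these give
\[
  X_1=\Bigl(\prod_j c_j^2\Bigr)\sum_i \Omega_{\ne i}\otimes \frac{2s_i}{c_i}\Tr_E\bigl(\ket{1_i}\!\bra{0_i}+\ket{0_i}\!\bra{1_i}\bigr).
\]
From \eqref{eq:bisector-basis}, $\ket{r_i}=c_i\ket{0_i}+s_i\ket{1_i}$ and $\ket{s_i}=c_i\ket{0_i}-s_i\ket{1_i}$. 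Hence $\rho_i-\sigma_i=2c_is_i\Tr_E(\ket{0_i}\!\bra{1_i}+\ket{1_i}\!\bra{0_i})$. So the local term equals $(\rho_i-\sigma_i)/c_i^2=x_i$, matching \eqref{eq:algorithm-local-data}, and therefore $X_1=g\,Y$ with $g=\prod_j c_j^2=\prod_j(1+f_j)/2$. When $f_i=1$ we have $x_i=0$, which is consistent. Each $x_i$ is Hermitian and traceless, and $\Omega_i$ is the density matrix from \cref{lem:uniform-bures-data}.

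\textbf{Step 2: verifying the hypotheses of \cref{thm:degree-one}.} By \cref{lem:uniform-bures-data}, $\Omega_i\succeq(\mu/4)\Id$, so we take $\gamma=\mu/4$. Since $c_i^2\le1$, we get $\max_i\norm{x_i}_1\ge\max_i\norm{\rho_i-\sigma_i}_1\ge\lambda$. The difficulty is that $\Omega_i$ and $x_i$ are irrational, because they involve square roots and fidelities, so \cref{thm:degree-one} cannot be applied verbatim.

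\textbf{Step 3: rational approximation.} I would compute dyadic $\widetilde\Omega_i,\widetilde x_i$ with \cref{lem:uniform-bures-data} at precision $2^{-p}$, for $p$ polynomial in $n,m,B,\log(1/\mu),\log(1/\lambda),\log(1/\beta)$. I would symmetrize them, renormalize $\widetilde\Omega_i$ to trace one, and remove the trace of $\widetilde x_i$. A telescoping bound over the tensor products, with $\norm{\widetilde\Omega_j}_1=1$, gives
\[
  \norm{\widetilde Y-Y}_1\le\sum_i\Bigl(\norm{\widetilde x_i-x_i}_1+\norm{x_i}_1\sum_{j\ne i}\norm{\widetilde\Omega_j-\Omega_j}_1\Bigr)\le \poly(n,m)\,2^{-p}.
\]
Call this bound $\eta$, and choose $p$ so that $\eta\le\beta\lambda/8$. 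For large $p$ the approximations still satisfy $\widetilde\Omega_i\succeq(\mu/8)\Id$ and $\max_i\norm{\widetilde x_i}_1\ge\lambda/2$.

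Apply \cref{thm:degree-one} to $\widetilde Y$ with accuracy $\beta/4$ to get $\widetilde N$ with $\norm{\widetilde Y}_1\le\widetilde N\le(\theta+\beta/4)\norm{\widetilde Y}_1$. Set $\widehat N_Y=\widetilde N+\eta$. Then $\norm{Y}_1\le\widehat N_Y$. For the upper side, $\widehat N_Y\le(\theta+\beta/4)(\norm Y_1+\eta)+\eta$. Using $\eta\le\beta\lambda/8\le(\beta/8)\norm Y_1$, which holds because $\Tr_{\ne i}Y=x_i$, this is at most $(\theta+\beta)\norm Y_1$ once the small constants are absorbed; shrinking $\beta$ by a constant factor handles this if needed.

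Finally compute the upper bound $\widehat g=\prod_j(1+\overline f_j)/2$, where each $\overline f_j$ is rounded upward. Then $g\le\widehat g\le g\prod_j\bigl(1+2^{-p}/c_j^2\bigr)\le g(1+2^{1-p})^n\le(1+\beta)g$, using $c_j^2\ge1/2$ and $p\ge\log_2(4n/\beta)$. Outputting $\widehat N_1=\widehat g\,\widehat N_Y$ yields \eqref{eq:full-rank-X1-interface}.

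The main obstacle I expect is Step 3. It requires showing that the perturbed $\widetilde Y$ with the exact local factors satisfies the preconditions of \cref{thm:degree-one}, namely a rational, positive, trace-one $\widetilde\Omega_i$ with a uniform eigenvalue gap, and that the additive error $\eta$ is dominated multiplicatively through $\lambda\le\norm Y_1$. Step 1 is routine bookkeeping.
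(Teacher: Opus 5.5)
Your proposal is correct and follows the paper's proof essentially step for step. The shared steps are the factorization $X_1=\bigl(\prod_j c_j^2\bigr)Y$, the choice of $\gamma$ and $\lambda$ for \cref{thm:degree-one}, rational density-matrix approximations with a telescoping bound on $\norm{\widetilde Y-Y}_1$, the additive correction $\widehat N_Y=\widetilde N+\eta$ absorbed via $\norm{Y}_1\ge\lambda$, and an upward-rounded enclosure of $\prod_j c_j^2$. One bookkeeping point: with $\eta\le\beta\lambda/8$ your upper bound on $\widehat N_Y$ exceeds $(\theta+\beta)\norm{Y}_1$ (the coefficient of $\beta$ is about $1.22$), so the shrinkage you mention is actually needed; the paper takes $\eta=\beta\lambda/40$, which gives $(\theta+\beta/2)\norm{Y}_1$.
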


\begin{proof}
For each $i$, let
\[
  f_i=F(\rho_i,\sigma_i),
  \qquad
  c_i^2=\frac{1+f_i}{2},
  \qquad
  x_i=\frac{\rho_i-\sigma_i}{c_i^2},
\]
and let $\Omega_i$ be the corresponding bisector state.
Define
\[
  Y=\sum_{i=1}^n\Omega_{\ne i}\otimes x_i,
  \qquad
  \Omega_{\ne i}=\bigotimes_{j\ne i}\Omega_j.
\]
Expanding \cref{def:first-order-part} gives
\begin{equation}
\begin{aligned}
  X_1
  &=
  \sum_{i=1}^n
  \left(\prod_{j\ne i}c_j^2\right)
  \Omega_{\ne i}\otimes(\rho_i-\sigma_i)\\
  &=
  \left(\prod_{i=1}^n c_i^2\right)Y.
\end{aligned}
  \label{eq:x1-product-form}
\end{equation}

By \cref{lem:uniform-bures-data},
\[
  \Omega_i\succeq\frac{\mu}{4}\Id.
\]
Moreover, $c_i^2\le1$, so
\[
  \max_i\norm{x_i}_1
  \ge
  \max_i\norm{\rho_i-\sigma_i}_1
  \ge\lambda.
\]
Since partial trace gives $\Tr_{\ne i}Y=x_i$, we also have
\[
  \norm Y_1\ge\max_i\norm{x_i}_1\ge\lambda.
\]

Set
\[
  \eta=\frac{\beta\lambda}{40}.
\]

Choose a rational
\[
  0<\varepsilon\le
  \min\left\{\frac{\mu}{8},\frac{\lambda}{8},
                   \frac{\eta}{4n^2}\right\}.
\]
Apply the constructive algorithm of \cref{lem:uniform-bures-data} with
sufficiently many additional working bits.  If $\widehat A_i,\widehat B_i$
are the resulting rational approximations, set
\[
  \widehat Z_i=\widehat A_i+\widehat B_i,
  \qquad
  \widetilde\Omega_i
  =\frac{\widehat Z_i\widehat Z_i^*}
         {\Tr(\widehat Z_i\widehat Z_i^*)}.
\]
Thus $\widetilde\Omega_i$ is an exact rational density matrix.  Compute a
dyadic $\widehat f_i$ with
$\abs{\widehat f_i-f_i}\le\varepsilon/4$, put
$f_i^+=\min\{1,\widehat f_i+\varepsilon/4\}$, and set
\[
  \widetilde x_i
  =\frac{\rho_i-\sigma_i}{(1+f_i^+)/2}.
\]
Then $\widetilde x_i$ is rational, Hermitian, and traceless.  Increasing the
working precision by $O(\log m+\log(1/\mu))$ bits, the estimates in the
constructive proof give
\[
  0\le f_i\le f_i^+\le1,
  \qquad
  \norm{\widetilde\Omega_i-\Omega_i}_1\le\varepsilon,
  \qquad
  \norm{\widetilde x_i-x_i}_1\le4\varepsilon,
  \qquad
  \norm{x_i}_1,\norm{\widetilde x_i}_1\le4.
\]
Consequently, the lower bound in \cref{lem:uniform-bures-data}, Weyl's
inequality, and tensor-product telescoping show that the rational operator
\[
  \widetilde Y
  =
  \sum_i\widetilde\Omega_{\ne i}\otimes\widetilde x_i
\]
satisfies
\[
  \widetilde\Omega_i\succeq\frac{\mu}{8}\Id,
  \qquad
  \max_i\norm{\widetilde x_i}_1\ge\frac{\lambda}{2},
  \qquad
  \norm{\widetilde Y-Y}_1\le\eta.
\]
Indeed, the telescoping error is at most $4n^2\varepsilon$.  Since
$\eta=\beta\lambda/40$, the required precision, all rational encoding
lengths, and the running time are polynomial in the stated parameters.

Apply \cref{thm:degree-one} to $\widetilde Y$ with
\[
  \gamma=\frac{\mu}{8},
  \qquad
  \lambda'=\frac{\lambda}{2},
  \qquad
  \beta'=\frac{\beta}{4}.
\]
It returns a rational $\widetilde N$ satisfying
\[
  \norm{\widetilde Y}_1
  \le\widetilde N
  \le
  \left(\theta+\frac{\beta}{4}\right)
  \norm{\widetilde Y}_1.
\]
Set
\[
  \widehat N_Y=\widetilde N+\eta.
\]
Since $\norm Y_1\ge\lambda$, $\theta<7$, and $\eta=\beta\lambda/40$, the preceding bounds give
\begin{equation}
  \norm Y_1
  \le\widehat N_Y
  \le
  \left(\theta+\frac{\beta}{2}\right)\norm Y_1.
  \label{eq:certified-Y-estimate}
\end{equation}

Finally, certified approximations to the local fidelities give a rational upper enclosure $\widehat c_\otimes$ satisfying
\begin{equation}
  \prod_i c_i^2
  \le\widehat c_\otimes
  \le
  \left(1+\frac{\beta}{2}\right)\prod_i c_i^2.
  \label{eq:certified-c-product}
\end{equation}
Return
\[
  \widehat N_1=\widehat c_\otimes\widehat N_Y.
\]
Using \eqref{eq:x1-product-form},
\[
\begin{aligned}
  \norm{X_1}_1
  &\le\widehat N_1\\
  &\le
  \left(1+\frac{\beta}{2}\right)
  \left(\theta+\frac{\beta}{2}\right)
  \norm{X_1}_1\\
  &\le
  (1+\beta)(\theta+\beta)\norm{X_1}_1.
\end{aligned}
\]
This proves \eqref{eq:full-rank-X1-interface}.
The required precision, the encoding lengths of the rational local data, and the running time are polynomial in the stated parameters.
\end{proof}

\subsection{Overall Guarantee of the Algorithm}
\label{sec:overall}

\begin{theorem}
\label{thm:main}
There is a deterministic algorithm that, on rational $m\times m$ local density matrices, runs in $\poly(n,m,B)$ time and outputs a rational $\widehat D$ satisfying
\begin{equation}
  \frac{1}{4.73}\D(\rho,\sigma)
  \le\widehat D
  \le 4.73\,\D(\rho,\sigma).
  \label{eq:main-factor}
\end{equation}
The output is zero exactly when $\rho=\sigma$.
\end{theorem}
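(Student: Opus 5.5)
We assemble the main theorem from the pieces already proved, applied to the regularized pair.

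\textbf{Zero case and scale.} If $\rho_i=\sigma_i$ for every $i$, \cref{lem:zero-scale} gives $\rho=\sigma$ and the algorithm returns $0$, as required. Otherwise \cref{lem:zero-scale} supplies a rational $L$ with $2^{-O(B)}\le L\le \D(\rho,\sigma)$. We set $\tau=L/(2000n)$, so \cref{lem:regularization} gives
\[
(1-10^{-3})\D(\rho,\sigma)\le \D(\bar\rho,\bar\sigma)\le \D(\rho,\sigma).
\]
It also gives $\bar\rho_i,\bar\sigma_i\succeq\mu\Id$ with $\mu=\tau/m$. The regularized differences are $(1-\tau)(\rho_i-\sigma_i)$, and their trace norms dominate $2\cdot$(entry size). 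Hence $\lambda:=L$ (up to the factor $1-\tau$) is a valid lower bound on $\max_i\norm{\bar\rho_i-\bar\sigma_i}_1$. All parameters $\log(1/\mu)$, $\log(1/\lambda)$, $p$ are $\poly(n,m,B)$, so every call to \cref{lem:uniform-bures-data}, \cref{thm:full-rank-first-order} and the conic program runs in polynomial time.

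\textbf{Two-sided bounds for $\widehat T$.} Write $\bar\delta_F=1-\prod_i f_i$ for the regularized instance. Because $\underline f_i\le f_i$ and $|\widetilde f_i-f_i|\le 2^{-p}$, a telescoping bound gives
\[
\bar\delta_F\le\widehat\delta\le\bar\delta_F+2n\cdot2^{-p}.
\]
The error $2n\cdot2^{-p}$ is at most $\beta L^2/8$, negligible relative to $L\le \D(\rho,\sigma)$. \Cref{thm:full-rank-first-order}, with the product enclosure $\widehat g$, gives
\[
\norm{X_1}_1\le\widehat N_1\le(1+\beta)(\theta+\beta)\norm{X_1}_1.
\]
Here the additive slacks $\eta$ and $\beta L/8$ are absorbed via $\norm{Y}_1\ge L$, and the $2^{-p}$ errors in $\widehat q_i$ via $\prod_i c_i^2\ge 1/2^n$. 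The last point needs care: the relative error of $\widehat g$ is $\prod_i(1+2^{-p}/c_i^2)\le(1+2^{1-p})^n$, which is fine.

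Now apply \cref{thm:first-order-bridge} to $(\bar\rho,\bar\sigma)$, writing $\bar D=\D(\bar\rho,\bar\sigma)$. The lower direction is
\[
2\bar D\le\norm{X_1}_1+4\bar\delta_F\le\widehat T.
\]
The upper direction is
\[
\widehat T\lesssim(\theta+O(\beta))\norm{X_1}_1+4\bar\delta_F.
\]
Since the first-order and fidelity parts are not independent, the naive upper bound is $3\theta\cdot2\bar D+2\cdot2\bar D$, which is too large. A sharper route is to use $\norm{X_1}_1\le 2\bar D+4\bar\delta_F$ together with $\bar\delta_F\le\bar D$. This gives
\[
\widehat T\le\theta'(2\bar D+4\bar\delta_F)+4\bar\delta_F\le(6\theta'+4)\bar D,
\]
where $\theta'=(1+\beta)(\theta+\beta)$.

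\textbf{Balancing the constant.} Combining the two directions gives
\[
2\bar D\le\widehat T\le(6\theta'+4)\bar D.
\]
Since $\theta=4+4\log2\approx6.77$, the ratio is $(6\theta'+4)/2=3\theta'+2\approx22.3$. Its square root is about $4.73$, matching the stated factor. The rescaling constant $0.10583$ is the geometric-mean choice $c=1/\sqrt{2(6\theta'+4)}$. With it, $\widehat D=c\widehat T$ lies in $[\bar D/4.72\ldots,\,4.72\ldots\bar D]$. The regularization factor $1-10^{-3}$ and the $O(\beta)$ slacks are then absorbed to reach $4.73$.

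\textbf{Main obstacle.} The main obstacle is this final numerical bookkeeping. We must verify that the exact slack budget ($\beta=10^{-3}$, the $2n\tau$ regularization loss, $\widehat\delta$ and $\widehat g$ errors) keeps $\sqrt{3\theta'+2}/(1-O(10^{-3}))$ below $4.73$. We must also verify the refinement step's certification: that $\widetilde\Omega_i\succeq(\mu/8)\Id$, $\max_i\norm{\widetilde x_i}_1\ge L/2$, and $\norm{\widetilde Y-Y}_1\le\eta$ are attained within polynomially many bits. We would check the former first, using the computed value of $\theta$.
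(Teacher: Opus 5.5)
Your proposal follows the paper's proof essentially step for step. It regularizes with $\tau=L/(2000n)$, applies \cref{thm:first-order-bridge} and \cref{thm:full-rank-first-order} to $(\bar\rho,\bar\sigma)$, obtains $2\bar D\le\widehat T\le(1+\beta)(4+6\theta+6\beta)\bar D$, and rescales by the geometric-mean constant $10583/100000$. The numerical check you defer does go through: the paper verifies that both ratios are below $4.7293$.

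One correction concerns your narrative. The ``naive'' bound from $\norm{X_1}_1\le3\norm{\bar\rho-\bar\sigma}_1$ and $\delta_F\le\frac12\norm{\bar\rho-\bar\sigma}_1$ is not too large. It gives the same $(6\theta'+4)\bar D$ as your ``sharper route'', and it is exactly the bound the paper uses.
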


\begin{proof}
We follow the algorithm construction and analysis roadmap at the beginning of this section.

\noindent \underline{\textit{Step 1: zero detection and regularization.}}
By \cref{lem:zero-scale}, if $\rho_i=\sigma_i$ for every $i$, return zero.
Otherwise, there exist a positive rational $L$ and a site $k$ such that
\[
  2^{-O(B)}
  \le L
  \le \D(\rho_k,\sigma_k)
  \le \D(\rho,\sigma).
\]
Set
\[
  \tau=\frac{L}{2000n},
\]
and define
\[
\begin{aligned}
  \bar\rho_i
  &=(1-\tau)\rho_i+\tau\frac{\Id}{m},
  &
  \bar\sigma_i
  &=(1-\tau)\sigma_i+\tau\frac{\Id}{m},\\
  \bar\rho
  &=\bigotimes_i\bar\rho_i,
  &
  \bar\sigma
  &=\bigotimes_i\bar\sigma_i.
\end{aligned}
\]
Since $2n\tau=L/1000\le\D(\rho,\sigma)/1000$,
\begin{equation}
  \frac{999}{1000}\D(\rho,\sigma)
  \le\D(\bar\rho,\bar\sigma)
  \le\D(\rho,\sigma).
  \label{eq:regularized-distance}
\end{equation}
Both bounds follow from \cref{lem:regularization}.

\noindent \underline{\textit{Step 2: local purification data and the fidelity estimate.}}
For each regularized local pair, use the Uhlmann-optimal purification pair furnished by \cref{lem:uniform-bures-data}, and let $X_1$ be the corresponding first-order part.
Set
\[
  f_i=F(\bar\rho_i,\bar\sigma_i),
  \qquad
  \delta_F=1-\prod_i f_i.
\]
Applying \cref{thm:first-order-bridge} to $(\bar\rho,\bar\sigma)$ gives
\[
\begin{aligned}
  \norm{\bar\rho-\bar\sigma}_1
  &\le\norm{X_1}_1+4\delta_F,\\
  \norm{X_1}_1
  &\le3\norm{\bar\rho-\bar\sigma}_1,\\
  \delta_F
  &\le\frac12\norm{\bar\rho-\bar\sigma}_1.
\end{aligned}
\]

We estimate the two terms in this combined estimator separately.
Fix $\beta=10^{-3}$.
We first estimate $\delta_F$.
Since
\[
  \D(\bar\rho_k,\bar\sigma_k)
  =(1-\tau)\D(\rho_k,\sigma_k)
  \ge\frac L2,
\]
\cref{lem:fvdg} gives
\[
  \delta_F
  \ge1-f_k
  \ge\frac{\D(\bar\rho_k,\bar\sigma_k)^2}{1+f_k}
  \ge\frac{L^2}{8}.
\]
Let
\[
  p=\left\lceil\log_2\frac{16n}{\beta L^2}\right\rceil.
\]
Using \cref{lem:uniform-bures-data}, compute dyadic rationals $\widetilde f_i$ satisfying $\abs{\widetilde f_i-f_i}\le2^{-p}$, and define
\[
  \underline f_i=\max\{0,\widetilde f_i-2^{-p}\},
  \qquad
  \widehat\delta=1-\prod_i\underline f_i.
\]
Then $0\le\underline f_i\le f_i$, and
\[
  \sum_i(f_i-\underline f_i)
  \le n2^{1-p}
  \le\frac{\beta L^2}{8}.
\]
Telescoping the two products therefore gives
\[
  \delta_F
  \le\widehat\delta
  \le\delta_F+\frac{\beta L^2}{8}
  \le(1+\beta)\delta_F.
\]

\noindent \underline{\textit{Step 3: the first-order part.}}
We next estimate $\norm{X_1}_1$.
By \cref{lem:regularization},
\[
  \bar\rho_i,\bar\sigma_i\succeq\frac{\tau}{m}\Id
  \quad\text{for every }i.
\]
Moreover, $L\le1$ implies $\tau<1/2$, and hence
\[
\begin{aligned}
  \max_i\norm{\bar\rho_i-\bar\sigma_i}_1
  &\ge \norm{\bar\rho_k-\bar\sigma_k}_1\\
  &=2(1-\tau)\D(\rho_k,\sigma_k)
  \ge2(1-\tau)L
  >L.
\end{aligned}
\]
Therefore \cref{thm:full-rank-first-order}, applied to $(\bar\rho,\bar\sigma)$ using the same purification pairs and with
\[
  \mu=\frac{\tau}{m},
  \qquad
  \lambda=L,
\]
returns a rational $\widehat N_1$ satisfying
\[
  \norm{X_1}_1
  \le\widehat N_1
  \le(1+\beta)(\theta+\beta)\norm{X_1}_1.
\]

\noindent \underline{\textit{Step 4: assembly.}}
Finally, set
\[
  \widehat T=4\widehat\delta+\widehat N_1.
\]
Then
\[
\begin{aligned}
  \norm{\bar\rho-\bar\sigma}_1
  &\le\norm{X_1}_1+4\delta_F
  \le\widehat T,\\
  \widehat T
  &\le
  4(1+\beta)\delta_F
  +(1+\beta)(\theta+\beta)\norm{X_1}_1\\
  &\le
  (1+\beta)(2+3\theta+3\beta)
  \norm{\bar\rho-\bar\sigma}_1.
\end{aligned}
\]
Together with \eqref{eq:regularized-distance}, this yields
\[
  2\frac{999}{1000}\D(\rho,\sigma)
  \le\widehat T
  \le2(1+\beta)(2+3\theta+3\beta)\D(\rho,\sigma).
\]

Return
\[
  \widehat D=\frac{10583}{100000}\widehat T.
\]
Using $\theta=4+4\log2$ and $\log2<0.693148$,
\[
\begin{aligned}
  \frac{1}{2(999/1000)(10583/100000)}
  &<4.7293<4.73,\\
  2(1+10^{-3})(2+3\theta+3\cdot10^{-3})
  \frac{10583}{100000}
  &<4.7293<4.73.
\end{aligned}
\]
Thus \eqref{eq:main-factor} holds.

Finally, the regularized local inputs and the parameters $\mu=\tau/m$, $\lambda=L$ have polynomial encoding length, and
\[
  \log\frac1\mu=O(B+\log n+\log m),
  \qquad
  \log\frac1\lambda=O(B),
  \qquad
  p=O(B+\log n).
\]
Hence \cref{lem:uniform-bures-data,thm:full-rank-first-order} make the entire computation run in $\poly(n,m,B)$ time.
The output is zero exactly in the initial case detected by \cref{lem:zero-scale}.
\end{proof}

\section{Exact Hardness}
\label{sec:hardness}

In this section, we prove that exactly computing $\D(\rho,\sigma)$ is $\#\mathsf P$-hard.
The following lemma shows that the trace distance between diagonal product states coincides with the total variation distance between the corresponding classical product distributions.

\begin{lemma}[Diagonal embedding]
\label{lem:diagonal-embedding}
Let $P=\bigotimes_iP_i$ and $Q=\bigotimes_iQ_i$ be product distributions on $[m]^n$, and define
\[
  \rho_i=\diag(P_i(1),\ldots,P_i(m)),\qquad
  \sigma_i=\diag(Q_i(1),\ldots,Q_i(m)).
\]
Then
\begin{equation}
  \D(\rho,\sigma)=\TV(P,Q).
  \label{eq:commuting-tv}
\end{equation}
\end{lemma}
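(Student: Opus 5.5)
The plan is to compute directly in the computational product basis, where everything is diagonal. Since each \(\rho_i\) and \(\sigma_i\) is diagonal in the standard basis \(\{\ket{1},\ldots,\ket{m}\}\), the Kronecker products \(\rho=\bigotimes_i\rho_i\) and \(\sigma=\bigotimes_i\sigma_i\) are diagonal in the product basis \(\{\ket{\mathbf x}=\ket{x_1}\otimes\cdots\otimes\ket{x_n}:\mathbf x\in[m]^n\}\). The first step is to record their diagonal entries:
\[
  \langle\mathbf x|\rho|\mathbf x\rangle=\prod_iP_i(x_i)=P(\mathbf x),
  \qquad
  \langle\mathbf x|\sigma|\mathbf x\rangle=\prod_iQ_i(x_i)=Q(\mathbf x).
\]
This uses only the multiplicativity of diagonal entries under the Kronecker product.

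Next, \(\rho-\sigma\) is a diagonal Hermitian matrix with entries \(P(\mathbf x)-Q(\mathbf x)\). Its eigenvalues are therefore exactly these entries, and its trace norm is the sum of their absolute values. Hence
\[
  \D(\rho,\sigma)=\frac12\sum_{\mathbf x\in[m]^n}\abs{P(\mathbf x)-Q(\mathbf x)}=\TV(P,Q),
\]
which is the standard \(\ell_1\) formula for total variation distance.

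No step here is a genuine obstacle. The only point needing care is to fix the convention \(\TV(P,Q)=\frac12\norm{P-Q}_1\), which is the one matching the classical product-distribution hardness result that will be invoked afterwards. One should also note that the local states are valid rational density matrices whenever the \(P_i,Q_i\) are rational distributions, so the reduction preserves the input model.
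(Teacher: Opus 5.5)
Your proof is correct and is essentially the paper's argument: in the product basis $\rho-\sigma$ is diagonal with entries $P(\mathbf x)-Q(\mathbf x)$, so half its trace norm is the $\ell_1$ formula for $\TV(P,Q)$. Your remark that rational distributions give rational diagonal density matrices, so the reduction stays within the input model, is a correct detail that the paper leaves implicit.
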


\begin{proof}
In the standard product basis, $\rho-\sigma$ is diagonal with entry $P(x)-Q(x)$ at $x\in[m]^n$.
Thus
\[
  \frac12\norm{\rho-\sigma}_1
  =\frac12\sum_x|P(x)-Q(x)|,
\]
which proves \eqref{eq:commuting-tv}.
\end{proof}

The following theorem establishes the $\#\mathsf P$-hardness of exactly computing $\D(\rho,\sigma)$ via a polynomial-time reduction.

\begin{theorem}[Exact computation is hard]
\label{thm:exact-hardness}
For every fixed $m\ge2$, exact computation of $\D(\rho,\sigma)$ is $\#\mathsf P$-hard, even when all local states are diagonal and rational.
\end{theorem}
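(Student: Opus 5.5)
The plan is a direct polynomial-time reduction from exact total variation distance between product distributions, which Bhattacharyya et al.~\cite{BhattacharyyaEtAl2025} show to be $\#\mathsf P$-hard, composed with \cref{lem:diagonal-embedding}. Given an instance of that problem, namely two product distributions $P=\bigotimes_iP_i$ and $Q=\bigotimes_iQ_i$ on $[m]^n$ with rational marginals, we form the diagonal rational density matrices
\[
  \rho_i=\diag(P_i(1),\ldots,P_i(m)),
  \qquad
  \sigma_i=\diag(Q_i(1),\ldots,Q_i(m)).
\]
These are legitimate inputs for our problem. Their entries are the same rationals as in the input distributions, so the encoding length is preserved up to a constant factor and the map is computable in polynomial time. \cref{lem:diagonal-embedding} then gives $\D(\rho,\sigma)=\TV(P,Q)$ exactly. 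Any oracle for exact trace distance therefore answers the TV query with a single call, and the reduction inherits whatever reduction type the source hardness result uses; in particular, polynomial-time Turing reductions suffice.

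The step that needs care is matching the hypotheses of the cited hardness theorem to the claim ``for every fixed $m\ge2$''. I expect the classical result to be stated for Boolean product distributions, i.e.\ $m=2$, where each marginal is a Bernoulli law with a rational parameter. I would first confirm that the source hardness holds with rational parameters of polynomial bit length. The embedded instances then consist of diagonal qubit states, which settles $m=2$.

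For general fixed $m>2$, I would pad each local distribution with zero-probability outcomes. Set $P_i'(x)=P_i(x)$ for $x\in\{1,2\}$ and $P_i'(x)=0$ otherwise, and define $Q_i'$ the same way. Padding leaves the set of global outcomes with positive mass unchanged and adds only zero entries, so $\TV(P',Q')=\TV(P,Q)$. On the quantum side, $\rho_i=\diag(P_i(1),P_i(2),0,\ldots,0)$ is still a rational diagonal density matrix. If the reader prefers full-rank inputs, an alternative is to embed the $m=2$ instance into a subspace, but zero padding is simpler and fully rational.

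The only genuine obstacle I anticipate is bookkeeping around the source theorem's precise statement, namely its input model, the type of reduction, and whether hardness holds for fixed $m$. The mathematical content is entirely \cref{lem:diagonal-embedding}.
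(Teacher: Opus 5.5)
Your proposal is correct and matches the paper's proof essentially step for step: it reduces from exact total variation distance between Boolean product distributions \cite{BhattacharyyaEtAl2025} through the diagonal embedding of \cref{lem:diagonal-embedding}, which yields rational diagonal qubit instances, and it handles $m>2$ by appending zero diagonal coordinates. Your aside about ``full-rank inputs via a subspace embedding'' is not needed, and a subspace embedding is just zero padding, so it would not produce full-rank states; it plays no role in the argument.
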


\begin{proof}
Let $P=\bigotimes_iP_i$ and $Q=\bigotimes_iQ_i$ be product distributions on $\{0,1\}^n$.
Define diagonal qubit states
\[
  \rho_i=\diag(P_i(0),P_i(1)),\qquad
  \sigma_i=\diag(Q_i(0),Q_i(1)).
\]
Then \eqref{eq:commuting-tv} gives
\[
  \D(\rho,\sigma)=\TV(P,Q).
\]
Exact total variation distance for product distributions is $\#\mathsf P$-hard \cite{BhattacharyyaEtAl2025}, so an exact trace-distance oracle solves that problem.
For $m>2$, append zero diagonal coordinates.
\end{proof}

\section{Conclusion}\label{sec-conclusion}

We gave a deterministic $4.73$-approximation for the trace distance between arbitrary rational product quantum states.
The running time is polynomial in the number of sites, the local dimension, and the input bit length.
The main analytic step is a dimension-free comparison between the trace norm of a first-order product operator and a local trace-norm/Jordan-square infimal convolution.
A common regularization makes the associated conic program uniformly solvable in polynomial time.
Exact computation remains $\#\mathsf P$-hard already for diagonal qubit inputs.
The main open question is whether arbitrary noncommuting product inputs admit a polynomial-time relative approximation scheme.

\section*{Acknowledgement}
The proof was obtained with the assistance of GPT-5.5, GPT-5.6 Sol, and Claude Opus 4.8.
GPT-5.6 Sol helped draft the initial version of the manuscript.
The authors posed the problem, guided the development, verified all steps, and accept full responsibility for the content.

\bibliographystyle{alpha}
\bibliography{references}

\clearpage
\appendix

\section{Missing Proofs in
  \texorpdfstring{\cref{sec:prelim}}{the preliminaries}}
\label{app:missing-proofs-prelim}

\subsection{Proof of
  \texorpdfstring{\cref{thm:centered-disjointification}}
    {the centered scalar disjointification theorem}}
\label{app:proof-centered-disjointification}

We first list the following version of the Khinchine inequality.
\begin{lemma}[Sharp $L_1$ Khintchine inequality~~\cite{Szarek1976}]
\label{lem:sharp-l1-khintchine}
Let $\varepsilon_1,\ldots,\varepsilon_n$ be independent Rademacher random variables.
Then, for every $x_1,\ldots,x_n\in\mathbb R$,
\begin{equation}
  \mathbb E_\varepsilon
  \left|\sum_{i=1}^n\varepsilon_i x_i\right|
  \ge
  \frac1{\sqrt2}
  \left(\sum_{i=1}^n x_i^2\right)^{1/2}.
  \label{eq:sharp-l1-khintchine}
\end{equation}
The constant $1/\sqrt2$ is optimal.
\end{lemma}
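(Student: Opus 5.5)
The plan is to go through symmetrization. Lemma~\ref{lem:sharp-l1-khintchine} with the optimal constant $1/\sqrt2$ is a classical result (Szarek), so I will not reprove it from scratch. I will only record the route and the one point that needs care.

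First, reduce to the $\mathbb E_\varepsilon|\cdot|$ of a finite sum over $\{\pm1\}^n$. Both sides are continuous and homogeneous in $x$, and the left side is invariant under permuting the $x_i$ and flipping their signs. So we may assume $x_1\ge x_2\ge\cdots\ge x_n\ge0$ and $\sum_i x_i^2=1$, and we must show $\mathbb E|\sum_i\varepsilon_ix_i|\ge 1/\sqrt2$.

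Second, split into two cases according to the size of the largest coefficient.

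In the large-coefficient case, $x_1\ge 1/\sqrt2$. By Jensen's inequality applied conditionally on $\varepsilon_1$ (the remaining sum is centered and independent of $\varepsilon_1$), we get
\[
\mathbb E|\varepsilon_1x_1+S'|\ge\mathbb E|\varepsilon_1x_1|=x_1\ge1/\sqrt2 .
\]

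In the small-coefficient case, all $x_i<1/\sqrt2$. Here I would follow Haagerup's Fourier-integral approach. The representation
\[
\mathbb E|S|=\frac{2}{\pi}\int_0^\infty\Bigl(1-\prod_i\cos(tx_i)\Bigr)\frac{dt}{t^2}
\]
reduces the claim to the inequality $\prod_i\cos(tx_i)$ "dominated" by suitable functions. By Hölder applied to the product, it suffices to show that the single-variable function
\[
F(s)=\frac{2}{\pi}\int_0^\infty\Bigl(1-|\cos(t/\sqrt s)|^{s}\Bigr)\frac{dt}{t^2}
\]
satisfies $F(s)\ge F(2)=1/\sqrt2$ for $s\ge2$. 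The normalization $s=1/x_i^2\ge 2$ is exactly what the case assumption gives.

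The main obstacle is this monotonicity, or minimization, of $F$ on $[2,\infty)$. It is the delicate calculus part of Haagerup's proof. Equality at $s=2$ corresponds to $x=(1,1)/\sqrt2$, where $\mathbb E|S|=1/\sqrt2$ exactly. That same example shows the constant is optimal.

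Rather than redo this estimate, I would cite \cite{Szarek1976} (and Haagerup) for it. The downstream use in Theorem~\ref{thm:centered-disjointification} only needs the stated inequality.
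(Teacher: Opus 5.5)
Your proposal is correct and matches the paper, which also states this lemma without proof and simply cites Szarek. Your outline of Haagerup's reduction is accurate: the step you call H\"older is really weighted AM--GM, $\prod_i y_i^{x_i^2}\le\sum_i x_i^2 y_i$ with $y_i=|\cos(tx_i)|^{1/x_i^2}$, which gives $\mathbb E|S|\ge\sum_i x_i^2F(1/x_i^2)$. Your equality example $x=(1,1)/\sqrt2$ is also right, and the one deep input, $F(s)\ge F(2)=1/\sqrt2$ for $s\ge2$, is exactly what both you and the paper delegate to the literature.
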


Then, we give the proof of \Cref{thm:centered-disjointification}.

\begin{proof}[Proof of \Cref{thm:centered-disjointification}] 
\smallskip\noindent\underline{\emph{(1) Symmetrization.}}
If $\mathbb E|\sum_i\xi_i|=\infty$, the claimed inequality is immediate, so assume $\mathbb E|\sum_i\xi_i|<\infty$.
Let $\xi_i'$ be mutually independent copies of the $\xi_i$'s, also independent of the original variables.
Set $X_i=\xi_i-\xi_i'$ and $A=\mathbb E|\sum_iX_i|$.
Each $X_i$ is symmetric because exchanging $\xi_i$ and $\xi_i'$ changes its sign without changing its distribution.

We first compare the two $\kappa$-quantities.
Fix a centered decomposition $X_i=u_i+v_i$, and define $a_i=\mathbb E[u_i\mid\xi_i]$ and $b_i=\mathbb E[v_i\mid\xi_i]$.
Since the variables are centered and $\xi_i'$ is independent of $\xi_i$,
\[
  a_i+b_i
  =\mathbb E[X_i\mid\xi_i]
  =\mathbb E[\xi_i-\xi_i'\mid\xi_i]
  =\xi_i-\mathbb E[\xi_i']
  =\xi_i.
\]
Thus $a_i+b_i$ is an admissible decomposition of $\xi_i$.
Conditional expectation preserves the zero means of $u_i,v_i$, and conditional Jensen gives
\[
  \mathbb E|a_i|
  =
  \mathbb E\left|\mathbb E[u_i\mid\xi_i]\right|
  \le
  \mathbb E\,\mathbb E[|u_i|\mid\xi_i]
  =
  \mathbb E|u_i|
\]
and
\[
  \mathbb E b_i^2
  =
  \mathbb E\left(\mathbb E[v_i\mid\xi_i]\right)^2
  \le
  \mathbb E\,\mathbb E[v_i^2\mid\xi_i]
  =
  \mathbb E v_i^2.
\]
In both displays, the inequality is Jensen's inequality for the convex functions $|z|$ and $z^2$, respectively, and the last equality is the tower property.
Therefore the cost of the decomposition $\xi_i=a_i+b_i$ is no larger than the cost of $X_i=u_i+v_i$.
Taking the infimum over all decompositions of the $X_i$'s gives $\kappa((\xi_i))\le\kappa((X_i))$.
Moreover,
\[
\begin{aligned}
  A
  &=\mathbb E\left|\sum_i\xi_i-\sum_i\xi_i'\right|\\
  &\le\mathbb E\left|\sum_i\xi_i\right|
      +\mathbb E\left|\sum_i\xi_i'\right|\\
  &=2\mathbb E\left|\sum_i\xi_i\right|.
\end{aligned}
\]
The inequality is the scalar triangle inequality, and the last equality uses that $\sum_i\xi_i'$ has the same distribution as $\sum_i\xi_i$.
We have proved
\begin{equation}
  \kappa((\xi_i))\le\kappa((X_i)),
  \qquad
  A\le2\mathbb E\left|\sum_i\xi_i\right|.
  \label{eq:js-symmetrization}
\end{equation}

\smallskip\noindent\underline{\emph{(2) Large coordinates.}}
Let $M=\max_i|X_i|$.
Conditional on the magnitudes $(|X_1|,\ldots,|X_n|)$, we may write $X_i=\varepsilon_i|X_i|$, where the $\varepsilon_i$'s are independent Rademacher signs.
Fix the magnitudes and all signs except the one attached to a largest magnitude $M$.
The two possible values of $\sum_iX_i$ under the remaining random sign differ by $2M$.
By the triangle inequality, the sum of their absolute values is at least $2M$, so their average is at least $M$.
Averaging over the remaining signs and then over the magnitudes yields
\begin{equation}
  \mathbb EM\le A.
  \label{eq:js-max}
\end{equation}
If $A=0$, then $\mathbb EM=0$, so $M=0$ almost surely and hence every $X_i=0$ almost surely.
In that case $\kappa((X_i))=0$ and the theorem is immediate.
We may therefore assume $A>0$ and set $t=2A$.

For every $s\ge t$, Markov's inequality and \eqref{eq:js-max} give $\Pr(M>s)\le\mathbb EM/s\le A/(2A)=1/2$.
Define $p_i(s)=\Pr(|X_i|>s)$ and $p(s)=\sum_ip_i(s)$.
Since the $X_i$'s are independent,
\[
  1-\Pr(M>s)
  =\Pr(|X_i|\le s\text{ for every }i)
  =\prod_i(1-p_i(s)).
\]
Using $1-u\le e^{-u}$ for $u\in[0,1]$, we further obtain
\[
  \prod_i(1-p_i(s))
  \le
  \prod_i e^{-p_i(s)}
  =
  e^{-p(s)}.
\]
On the other hand, $\Pr(M>s)\le1/2$ implies $\prod_i(1-p_i(s))\ge1/2$.
Combining the last two inequalities gives $e^{-p(s)}\ge1/2$, or $p(s)\le\log2$.
Also, $\Pr(M>s)=1-\prod_i(1-p_i(s))\ge1-e^{-p(s)}$.
By concavity, $1-e^{-u}$ lies above the chord joining its values at $0$ and $\log2$.
Hence $1-e^{-u}\ge u/(2\log2)$ on this interval, and therefore $p(s)\le2\log2\,\Pr(M>s)$ for $s\ge t$.

Now truncate each variable at level $t$ by setting $U_i=X_i\mathbf 1_{\{|X_i|>t\}}$ and $V_i=X_i\mathbf 1_{\{|X_i|\le t\}}$.
Then $X_i=U_i+V_i$.
Both pieces are centered because $X_i$ is symmetric and the truncation depends only on $|X_i|$.
For each $i$, the layer-cake identity applied directly to $|X_i|$ gives
\[
  \mathbb E|U_i|
  =
  \mathbb E\!\left[|X_i|\mathbf 1_{\{|X_i|>t\}}\right]
  =
  t\Pr(|X_i|>t)+\int_t^\infty\Pr(|X_i|>s)\,ds.
\]
This follows by writing
\[
  |X_i|\mathbf 1_{\{|X_i|>t\}}
  =
  t\mathbf 1_{\{|X_i|>t\}}
  +\int_t^\infty\mathbf 1_{\{|X_i|>s\}}\,ds
\]
and applying Tonelli's theorem to exchange the order of expectation and integration.
Summing over $i$ and using $p(s)\le2\log2\,\Pr(M>s)$ gives
\begin{equation}
\begin{aligned}
 \sum_i\mathbb E|U_i|
 &=t\,p(t)+\int_t^\infty p(s)\,ds\\
 &\le
  2\log2\left(
    t\Pr(M>t)+\int_t^\infty\Pr(M>s)\,ds
  \right)\\
 &=2\log2\,
   \mathbb E\!\left[M\mathbf 1_{\{M>t\}}\right]\\
 &\le2\log2\,\mathbb EM\\
 &\le2\log2\,A.
\end{aligned}
\label{eq:js-head}
\end{equation}

\smallskip\noindent\emph{\underline{(3) Bounded coordinates.}}
If $\sum_i\mathbb EV_i^2<t^2$, then $(\sum_i\mathbb EV_i^2)^{1/2}<t=2A$.

Suppose now that $\sum_i\mathbb EV_i^2\ge t^2$.
The variables $V_i$ are independent because each is a function of $X_i$.
Expanding the square and using independence for the cross terms gives
\[
\begin{aligned}
  \mathbb E\left(\sum_iV_i^2\right)^2
  &=\sum_i\mathbb EV_i^4
    +2\sum_{i<j}\mathbb EV_i^2\,\mathbb EV_j^2\\
  &=\left(\sum_i\mathbb EV_i^2\right)^2
    +\sum_i\left(
      \mathbb EV_i^4-(\mathbb EV_i^2)^2
    \right)\\
  &\le\left(\sum_i\mathbb EV_i^2\right)^2+\sum_i\mathbb EV_i^4\\
  &\le\left(\sum_i\mathbb EV_i^2\right)^2
      +t^2\sum_i\mathbb EV_i^2\\
  &\le2\left(\sum_i\mathbb EV_i^2\right)^2.
\end{aligned}
\]
The first inequality drops the nonpositive terms $-(\mathbb EV_i^2)^2$.
The second uses $|V_i|\le t$, which implies $V_i^4\le t^2V_i^2$ pointwise.
The last uses $t^2\le\sum_i\mathbb EV_i^2$.

H\"older's inequality and the preceding second-moment bound give
\begin{equation}
  \mathbb E\sqrt{\sum_iV_i^2}
  \ge
  \frac{\left(\sum_i\mathbb EV_i^2\right)^{3/2}}
       {\left(\mathbb E(\sum_iV_i^2)^2\right)^{1/2}}
  \ge
  \frac1{\sqrt2}\left(\sum_i\mathbb EV_i^2\right)^{1/2}.
\label{eq:js-holder}
\end{equation}
Conditioning on all magnitudes $(|X_i|)_i$, \cref{lem:sharp-l1-khintchine} gives
\[
\begin{aligned}
  A
  &=\mathbb E\left|\sum_iX_i\right|\ge
  \frac1{\sqrt2}\,
  \mathbb E\sqrt{\sum_iX_i^2}\ge
  \frac1{\sqrt2}\mathbb E\sqrt{\sum_iV_i^2}.
\end{aligned}
\]
The last inequality uses $V_i^2\le X_i^2$ pointwise.
Together with \eqref{eq:js-holder}, this gives $(\sum_i\mathbb EV_i^2)^{1/2}\le2A$.
The same bound was already proved when $\sum_i\mathbb EV_i^2<t^2$.

\smallskip\noindent\emph{\underline{(4) Putting the pieces together.}}
The decomposition $X_i=U_i+V_i$ is admissible in the definition of $\kappa((X_i))$.
Therefore
\[
\begin{aligned}
  \kappa((X_i))
  &\le
  \sum_i\mathbb E|U_i|
  +\left(\sum_i\mathbb EV_i^2\right)^{1/2}\\
  &\le(2+2\log2)A.
\end{aligned}
\]
Finally, \eqref{eq:js-symmetrization} gives
\[
  \kappa((\xi_i))
  \le\kappa((X_i))
  \le(2+2\log2)A
  \le\theta\,\mathbb E\left|\sum_i\xi_i\right|,
\]
as claimed.
\end{proof}

\section{Missing Proofs in
  \texorpdfstring{\cref{sec:constant}}
    {the constant-factor approximation section}}
\label{app:missing-proofs-constant}

\subsection{Proof of
  \texorpdfstring{\cref{lem:zero-scale}}{the zero-scale lemma}}
\label{app:proof-zero-scale}

\begin{proof}
Since every density matrix has trace one, taking the partial trace over all sites except $i$ gives
\[
  \Tr_{[n]\setminus\{i\}}(\rho)=\rho_i,
  \qquad
  \Tr_{[n]\setminus\{i\}}(\sigma)=\sigma_i.
\]
Consequently, if $\rho=\sigma$, then $\rho_i=\sigma_i$ for every $i$.
The converse is immediate: if all local pairs are equal, then their tensor products are equal.

Now suppose that $\rho\ne\sigma$.
By the choice of $k$, there are indices $a,b$ and $\chi\in\{\Re,\Im\}$ such that, with $\Delta_k=\rho_k-\sigma_k$,
\[
  L=\frac12\abs{\chi((\Delta_k)_{ab})}>0.
\]
The input entries are rational, so $L$ is rational and can be computed in polynomial time by scanning the local matrix entries.

Trace-distance contractivity under partial trace and the inequalities $\norm{\Delta_k}_1\ge\norm{\Delta_k}_{\op} \ge\abs{(\Delta_k)_{ab}}$ give
\begin{align}\label{eq-drhosigmagedrhoisimgai-gel}
  \D(\rho,\sigma)
  \ge
  \D(\rho_k,\sigma_k)
  =\frac12\norm{\Delta_k}_1
  \ge\frac12\abs{(\Delta_k)_{ab}}
  \ge\frac12\abs{\chi((\Delta_k)_{ab})}
  =L.
\end{align}

Finally, the nonzero rational component $\chi((\Delta_k)_{ab})$ is the difference of two $B$-bit rationals.
Putting them over a common denominator shows that its absolute value, and hence $L$, is at least $2^{-O(B)}$.
Together with \eqref{eq-drhosigmagedrhoisimgai-gel}, this proves \eqref{eq-twominusobleql-leqtd}.
\end{proof}

\subsection{Proof of
  \texorpdfstring{\cref{lem:regularization}}{the regularization lemma}}
\label{app:proof-regularization}

\begin{proof}
We first bound the effect of regularization at a single site.
From the definition,
\[
  \rho_i-\bar\rho_i
  =\tau\left(\rho_i-\frac{\Id}{m}\right).
\]
Consequently,
\[
\begin{aligned}
  \D(\rho_i,\bar\rho_i)
  &=\frac{\tau}{2}
    \left\|\rho_i-\frac{\Id}{m}\right\|_1
  \le\frac{\tau}{2}
    \left(\norm{\rho_i}_1+\left\|\frac{\Id}{m}\right\|_1\right)=\tau.
\end{aligned}
\]
The first equality uses homogeneity of the trace norm.
The inequality is the triangle inequality.
The last equality holds because both $\rho_i$ and $\Id/m$ are density matrices, so each has trace norm one.

We now pass from one site to the product state.
For $k=0,\ldots,n$, let
\[
  R_k
  =
  \left(\bigotimes_{j=1}^{k}\bar\rho_j\right)
  \otimes
  \left(\bigotimes_{j=k+1}^{n}\rho_j\right).
\]
Thus $R_0=\rho$ and $R_n=\bar\rho$.
Changing the $k$-th factor gives
\[
  R_{k-1}-R_k
  =
  \left(\bigotimes_{j<k}\bar\rho_j\right)
  \otimes
  \left(\rho_k-\bar\rho_k\right)
  \otimes
  \left(\bigotimes_{j>k}\rho_j\right).
\]
Since $R_0-R_n=\sum_{k=1}^n(R_{k-1}-R_k)$, we obtain
\[
\begin{aligned}
  \D(\rho,\bar\rho)
  &=\frac12\left\|
      \sum_{k=1}^n(R_{k-1}-R_k)
    \right\|_1 \le \frac12\sum_{k=1}^n\norm{R_{k-1}-R_k}_1 =\sum_{k=1}^n
    \frac12\norm{\rho_k-\bar\rho_k}_1\\
  &=\sum_{k=1}^n\D(\rho_k,\bar\rho_k)
  \le n\tau.
\end{aligned}
\]
Repeating the same argument with $\sigma_i$ in place of $\rho_i$ gives $\D(\sigma,\bar\sigma)\le n\tau$.

The triangle inequality and the two product-state bounds give
\[
  \D(\rho,\sigma)-\D(\bar\rho,\bar\sigma)
  \le
  \D(\rho,\bar\rho)+\D(\sigma,\bar\sigma)
  \le2n\tau.
\]
On the other hand, the channel
\[
  \mathcal E_\tau(X)
  =(1-\tau)X+\tau\Tr(X)\frac{\Id}{m}
\]
is CPTP and maps each original local factor to its barred counterpart.
Thus $\bar\rho=\mathcal E_\tau^{\otimes n}(\rho)$ and $\bar\sigma=\mathcal E_\tau^{\otimes n}(\sigma)$, so trace-distance contractivity gives
\[
  \D(\bar\rho,\bar\sigma)\le\D(\rho,\sigma).
\]
Together, these inequalities prove \eqref{eq:regularization-error}.

Finally,
\[
  \bar\rho_i-\frac{\tau}{m}\Id
  =(1-\tau)\rho_i\succeq0,
\]
because $1-\tau>0$ and $\rho_i\succeq0$.
Thus $\bar\rho_i\succeq(\tau/m)\Id$, and the same calculation applies to $\bar\sigma_i$.
\end{proof}

\subsection{Proof of
  \texorpdfstring{\cref{lem:uniform-bures-data}}
    {the uniform Bures data lemma}}
\label{app:proof-uniform-bures-data}

We first record two local numerical lemmas.

\begin{lemma}[Local matrix computations]
\label{lem:local-algebra}
Let $h$ be a rational positive-definite $m\times m$ matrix, and suppose that supplied rational bounds satisfy
\[
  \mu\Id\preceq h\preceq M\Id
  \qquad (0<\mu\le1).
\]
Given $p$, rational approximations to $h^{1/2}$, $h^{-1/2}$, and their products with a fixed number of rational matrices can be computed with operator-norm error at most $2^{-p}$ in time
\begin{equation}
  \poly\!\left(m,B,p,\log\frac1\mu,\log(1+M)\right).
  \label{eq:uniform-spectral-time}
\end{equation}
Here $B$ bounds the encoding lengths of all rational inputs.
\end{lemma}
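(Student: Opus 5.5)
The plan is to reduce all three computations to one primitive: a certified Hermitian eigendecomposition of the rational matrix $h$. Once approximate eigenpairs are available, each target is a scalar function applied to the eigenvalues.

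\emph{Step 1: eigenvalues.} First compute the characteristic polynomial of $h$ exactly in rational arithmetic, for instance by Berkowitz's division-free algorithm. This takes polynomial time and gives polynomial bit length. Its roots are real and lie in $[\mu,M]$, and standard real-root isolation (Sturm sequences or Descartes' rule together with bisection) approximates every eigenvalue to any precision $2^{-q}$ in time $\poly(m,B,q,\log(1+M))$.

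\emph{Step 2: avoid separate eigenvectors.} Eigenvectors are ill-conditioned when eigenvalues cluster, so I would not compute them individually. Instead I would evaluate $h^{\pm1/2}$ as polynomials or rational functions of $h$ itself. The first choice is the Newton--Schulz or Denman--Beavers iteration applied to the scaled matrix $h/M$, whose spectrum lies in $[\mu/M,1]$. Its error contracts quadratically after $O(\log(M/\mu))$ linear-convergence steps, so $O(\log(M/\mu)+\log p)$ iterations suffice. Alternatively one can use Chebyshev approximation of $\sqrt{x}$ and $1/\sqrt{x}$ on $[\mu,M]$, with degree $O(\sqrt{M/\mu}\,p)$. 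That degree is not polynomial in $\log(1/\mu)$, so Newton's method is the right choice.

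\emph{Step 3: rounding in the iteration.} To keep bit lengths polynomial, round every iterate to $q=p+O(\log m+\log(M/\mu))$ fractional bits. Because the iteration is self-correcting, a perturbation that stays inside the basin of convergence, which the coupled Denman--Beavers form guarantees, is damped rather than amplified. The accumulated error is therefore at most $2^{-q}$ times polynomially many steps, each amplified by at most $\poly(M/\mu)$. Products with a fixed number of rational matrices $C_j$ then cost one extra error factor of $\prod_j\norm{C_j}_{\op}\le 2^{O(mB)}$, which is absorbed by adding $O(mB)$ guard bits.

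\emph{Main obstacle.} The hard step is the forward-stability bound for the rounded Newton iteration, specifically showing that rounding errors do not push the iterate out of the region where it stays Hermitian and commuting with $h$. If that analysis becomes messy, the fallback is a posteriori certification. Compute a candidate $\widetilde S$ and symmetrize it. Check rationally that $\widetilde S\succ0$, for example with an exact $LDL^*$ factorization. Then bound $\norm{\widetilde S^2-h}_{\op}$ by its Frobenius norm. The perturbation bound $\norm{\widetilde S-h^{1/2}}_{\op}\le\norm{\widetilde S^2-h}_{\op}/(\lambda_{\min}(\widetilde S)+\sqrt\mu)$ converts this residual into the required error certificate. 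The same approach handles $h^{-1/2}=(h^{1/2})^{-1}$, using $\norm{h^{-1/2}}_{\op}\le\mu^{-1/2}$.
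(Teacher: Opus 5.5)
You follow the paper's route: a scaled inverse Newton--Schulz-type iteration with dyadic rounding of the iterates. However, the step you flag as the main obstacle is left unproved, and the justification you offer for it is not correct.

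The map $\Phi(X)=\frac12X(3\Id-hX^2)$ is not self-correcting with respect to non-commuting perturbations. Its linearization at $h^{-1/2}$ is $E\mapsto\frac12\bigl(E-h^{1/2}Eh^{-1/2}\bigr)$. In an eigenbasis of $h$ this multiplies the $(i,j)$ entry of $E$ by $\frac12\bigl(1-\sqrt{\lambda_i/\lambda_j}\bigr)$, a factor of order $\sqrt{M/\mu}$, and it stays of that order after Hermitianization. So for large $M/\mu$, rounding errors injected near convergence grow at every step.

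For the coupled Denman--Beavers form, ``stability'' is an asymptotic statement about the Fr\'echet derivative at the fixed point. The global forward bound you need is not established: every injected error amplified by at most $\poly(M/\mu)$ over the whole run. Hence your precision $q=p+O(\log m+\log(M/\mu))$ has no support.

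The a posteriori certificate is a valid checker, but it only detects failure. It cannot yield the $\poly(m,B,p,\log\frac1\mu,\log(1+M))$ running time unless you already know that some polynomial precision produces a candidate that passes. That is exactly the missing forward analysis. (Your Step~1 on the characteristic polynomial is never used.)

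The missing idea is that no stability is needed, because only logarithmically many iterations are run.
\begin{itemize}
\item Start from $X_0=R^{-1/2}\Id$ with $R\ge\max\{1,M\}$ a power of $4$. The residual $E_k=\Id-hX_k^2$ then satisfies $E_{k+1}=\frac14E_k^2(3\Id+E_k)$. This gives $\norm{E_k}_{\op}\le(1-\mu/R)^{2^k}$, so $K=O\bigl(\log\frac{R}{\mu}+\log(p+\log\frac{R}{\mu})\bigr)$ exact steps suffice.
\item On the ball $\norm{X}_{\op}\le2\mu^{-1/2}$, the map $\Phi$ is Lipschitz with the crude constant $L=2+6M/\mu$.
\item Compare the rounded, Hermitianized iterates with the exact (commuting) ones by induction. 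This bounds the total rounding error by $2m2^{-q}L^K$, and the same induction keeps the rounded iterates inside the ball.
\end{itemize}
Thus $q=p+O(K\log L+\log m)$ bits suffice. Here $L^K$ is superpolynomial in $M/\mu$, but its logarithm is polynomial in the input size.

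This also removes your worry about rounded iterates ceasing to commute with $h$. They never need to commute; they only need to stay close to the exact iterates.
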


\begin{proof}
We use a rounded inverse Newton--Schulz iteration.  The underlying exact
iteration is classical; see, for example, \cite{GuoHigham2006}.  Let $s$ be
the least nonnegative integer for which
$R=4^s\ge\max\{1,M\}$, set $X_0=2^{-s}\Id$, and define
\[
  X_{k+1}=\Phi(X_k),
  \qquad
  \Phi(X)=\frac12X(3\Id-hX^2).
\]
Every exact iterate is a polynomial in $h$.  Thus it commutes with $h$, and,
for $E_k=\Id-hX_k^2$, direct multiplication gives
\begin{equation}
  E_{k+1}=\frac14E_k^2(3\Id+E_k).
  \label{eq:newton-residual}
\end{equation}
Since $0\preceq E_0\preceq(1-\mu/R)\Id$, induction in a common eigenbasis
yields
\[
  0\preceq E_k\preceq\Id,
  \qquad
  \norm{E_k}_{\op}
  \le(1-\mu/R)^{2^k}
  \le\exp(-2^k\mu/R).
\]
Moreover $0\preceq X_k\preceq h^{-1/2}$ and
\begin{equation}
  \norm{X_k-h^{-1/2}}_{\op}
  \le\mu^{-1/2}\norm{E_k}_{\op}.
  \label{eq:newton-exact-error}
\end{equation}

We next control rounding.  Let $\mathcal D_q(Y)$ first replace $Y$ by
$(Y+Y^*)/2$, then round the real and imaginary parts of its upper-triangular
entries to $q$ binary fractional places, and finally reflect them across the
diagonal.  Then
\begin{equation}
  \norm{\mathcal D_q(Y)-(Y+Y^*)/2}_{\op}
  \le2m2^{-q}.
  \label{eq:dyadic-rounding}
\end{equation}
Put $M_0=\max\{1,M\}$ and choose
\[
  K=
  \left\lceil\log_2\frac R\mu\right\rceil
  +\left\lceil\log_2\!\left(
     p+\left\lceil\log_2\frac1\mu\right\rceil
      +\left\lceil\log_2M_0\right\rceil+4
   \right)\right\rceil+2.
\]
The preceding exact estimate gives
\[
  \norm{X_K-h^{-1/2}}_{\op}\le\varepsilon,
  \qquad
  \varepsilon=\frac{2^{-p-3}}{M_0}.
\]
On the ball $\norm X_{\op}\le2\mu^{-1/2}$, expanding
$UhU^2-VhV^2$ shows that
\begin{equation}
  \norm{\Phi(U)-\Phi(V)}_{\op}
  \le L\norm{U-V}_{\op},
  \qquad
  L=2+\frac{6M}{\mu}.
  \label{eq:newton-lipschitz}
\end{equation}
Choose $q$ so that $2m2^{-q}L^K\le\varepsilon$, and compute
\[
  \widehat X_0=X_0,
  \qquad
  \widehat X_{k+1}=\mathcal D_q(\Phi(\widehat X_k)).
\]
Hermitianization is an operator-norm contraction relative to the Hermitian
exact iterate.  Hence \eqref{eq:dyadic-rounding} and
\eqref{eq:newton-lipschitz} give
\[
  \norm{\widehat X_k-X_k}_{\op}
  \le2m2^{-q}\sum_{j=0}^{k-1}L^j
  \le\varepsilon
  \qquad(0\le k\le K).
\]
The same induction keeps the rounded iterates in the ball on which the
Lipschitz estimate holds.  Therefore
$\widetilde T=\widehat X_K$ approximates $h^{-1/2}$ with error less than
$2^{-p}$.

To obtain the square root, choose
\[
  q_S=p+3+\left\lceil\log_2m\right\rceil
\]
and set $\widetilde S=\mathcal D_{q_S}(h\widetilde T)$.  Since
$hh^{-1/2}=h^{1/2}$, the pre-rounding error is at most
$2M\varepsilon\le2^{-p-2}$, and the final rounding error is at most
$2m2^{-q_S}\le2^{-p-2}$.

Finally,
\[
  K=O\!\left(\log\frac{1+M}{\mu}
       +\log(p+\log(1/\mu))\right),
  \qquad
  q=p+O(K\log L+\log m+\log M_0).
\]
Each iteration uses a constant number of rational matrix products followed by
dyadic rounding, so all intermediate encoding lengths and the running time
satisfy \eqref{eq:uniform-spectral-time}.  Products with any fixed number of
rational input matrices are handled by increasing the working precision by a
polynomial number of bits, multiplying exactly, and rounding once at the end;
submultiplicativity gives the same $2^{-p}$ error bound.
\end{proof}

\begin{lemma}[Weyl's eigenvalue perturbation bound, Corollary~III.2.6 in \cite{Bhatia1997}]
\label{lem:appendix-weyl-perturbation}
Let $H,K\in\C^{m\times m}$ be Hermitian, and order their eigenvalues as
\[
  \lambda_1(H)\le\cdots\le\lambda_m(H),
  \qquad
  \lambda_1(K)\le\cdots\le\lambda_m(K).
\]
Then, for every $j=1,\ldots,m$,
\[
  \abs{\lambda_j(H)-\lambda_j(K)}
  \le \norm{H-K}_{\op}.
\]
In particular,
\[
  \lambda_{\min}(K)
  \ge
  \lambda_{\min}(H)-\norm{H-K}_{\op}.
\]
\end{lemma}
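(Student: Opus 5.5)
The plan is to derive the bound from the Courant--Fischer min--max characterization, combined with a Loewner-order sandwich of $K$ between two scalar shifts of $H$. Set $E=H-K$, which is Hermitian, and $\varepsilon=\norm{E}_{\op}$. Since the eigenvalues of $E$ lie in $[-\varepsilon,\varepsilon]$, we have $-\varepsilon\Id\preceq E\preceq\varepsilon\Id$. Hence
\[
  H-\varepsilon\Id\preceq K\preceq H+\varepsilon\Id .
\]

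Next, recall the Courant--Fischer formula for the increasingly ordered eigenvalues:
\[
  \lambda_j(X)=\min_{\dim V=j}\ \max_{v\in V,\ \norm v=1}\langle v|X|v\rangle .
\]
From this I will read off two facts.

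First, the eigenvalues are monotone in the Loewner order. If $X\preceq Z$, then $\langle v|X|v\rangle\le\langle v|Z|v\rangle$ for every unit $v$. The inequality survives the inner maximum and then the outer minimum, so $\lambda_j(X)\le\lambda_j(Z)$ for every $j$.

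Second, adding a scalar shifts every eigenvalue by exactly that scalar: $\lambda_j(X+c\Id)=\lambda_j(X)+c$. This is immediate either from the formula or from the fact that $X$ and $X+c\Id$ share an eigenbasis with ordering preserved.

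Applying both facts to the sandwich gives
\[
  \lambda_j(H)-\varepsilon\le\lambda_j(K)\le\lambda_j(H)+\varepsilon ,
\]
that is, $\abs{\lambda_j(H)-\lambda_j(K)}\le\norm{H-K}_{\op}$ for every $j$. The "in particular" statement is the case $j=1$, using $\lambda_1=\lambda_{\min}$ and keeping only the lower inequality.

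No step is a real obstacle. The only point needing care is stating the min--max formula with the index convention matching the increasing order used in the lemma: minimize over $j$-dimensional subspaces the maximal Rayleigh quotient. Once that convention is fixed, the Loewner-monotonicity step is a one-line pointwise comparison of quadratic forms.
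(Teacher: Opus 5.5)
Your proof is correct and complete: the sandwich $H-\varepsilon\Id\preceq K\preceq H+\varepsilon\Id$ combined with Loewner monotonicity of the Courant--Fischer eigenvalues (your index convention is the right one) gives $\abs{\lambda_j(H)-\lambda_j(K)}\le\norm{H-K}_{\op}$ for every $j$. The paper does not prove the lemma and only cites Bhatia's Corollary~III.2.6; the standard proof there also comes from the minimax principle, via the Weyl inequalities $\lambda_j(H)+\lambda_{\min}(E)\le\lambda_j(H+E)\le\lambda_j(H)+\lambda_{\max}(E)$, so your argument is essentially that one.
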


\begin{proof}[Proof of \Cref{lem:uniform-bures-data}]
Because $a$ and $b$ are density matrices, all their eigenvalues lie in $[0,1]$.
Together with the assumption of the lemma, this gives $\mu\Id\preceq a,b\preceq\Id$.
Define
\[
  C=a^{1/2}ba^{1/2},\qquad
  T=a^{-1/2}C^{1/2}a^{-1/2},
  \qquad
  A=a^{1/2},\quad B=a^{-1/2}C^{1/2}=Ta^{1/2}.
\]
Congruencing the inequality $b\succeq\mu\Id$ by $a^{1/2}$ gives
\[
  C=a^{1/2}ba^{1/2}
  \succeq\mu a
  \succeq\mu^2\Id.
\]
Therefore both $a^{-1/2}$ and $C^{1/2}$ exist.
Moreover,
\[
  T=a^{-1/2}C^{1/2}a^{-1/2}\succeq0.
\]

The matrices $A$ and $B$ are purification matrices for $a$ and $b$ because
\[
  AA^*=a^{1/2}a^{1/2}=a.
\]
\[
\begin{aligned}
  BB^*
  &=a^{-1/2}C^{1/2}C^{1/2}a^{-1/2}
    =a^{-1/2}Ca^{-1/2}=b.
\end{aligned}
\]
Their overlap is positive:
\[
\begin{aligned}
  A^*B
  &=a^{1/2}a^{-1/2}C^{1/2}
    =C^{1/2}\succeq 0.
\end{aligned}
\]
Consequently,
\[
  \Tr(A^*B)
  =\Tr(C^{1/2})
  =\norm{a^{1/2}b^{1/2}}_1
  =F(a,b)=f.
\]
The second equality follows from $\norm X_1=\Tr\sqrt{XX^*}$, applied to
$X=a^{1/2}b^{1/2}$, for which $XX^*=C$.
Thus the two amplitudes attain the fidelity overlap and are optimal by \Cref{lem:uhlmann-matrix}.

The normalization in the definition of $\Omega$ is also explicit:
\[
\begin{aligned}
  \Tr\!\left((A+B)(A+B)^*\right)
  &=\Tr(AA^*)+\Tr(BB^*)+2\Re\Tr(A^*B)\\
  &=1+1+2f=2(1+f).
\end{aligned}
\]
Here $\Tr(AA^*)=\Tr a=1$, $\Tr(BB^*)=\Tr b=1$, and
$\Tr(A^*B)=f$ is real because $A^*B\succeq0$.
Hence $\Omega$ has trace one.

Since $B=Ta^{1/2}$, $A+B=(\Id+T)a^{1/2}$, and therefore
\[
\begin{aligned}
  (A+B)(A+B)^*
  &=(\Id+T)a(\Id+T)\\
  &\succeq\mu(\Id+T)^2\\
  &\succeq\mu\Id.
\end{aligned}
\]
The first inequality is obtained by congruencing $a\succeq\mu\Id$ with
$\Id+T$; the second follows from $T\succeq0$.

Finally, by the Cauchy--Schwarz inequality,
\[
  f=\norm{a^{1/2}b^{1/2}}_1
  \le\norm{a^{1/2}}_2\norm{b^{1/2}}_2
  =\sqrt{\Tr a}\sqrt{\Tr b}
  =1.
\]
Therefore,
\[
  \Omega
  \succeq
  \frac{\mu}{2(1+f)}\Id
  \succeq
  \frac{\mu}{4}\Id,
\]
which proves \eqref{eq:bisector-gap}.

It remains to construct the approximations.  The upper bounds
$a,b\preceq\Id$ imply
\[
  \mu^2\Id\preceq C=a^{1/2}ba^{1/2}\preceq a\preceq\Id.
\]
In particular,
$\norm{a^{-1/2}}_{\op}\le\mu^{-1/2}$ and
$\norm{C^{1/2}}_{\op}\le1$.

Fix $p\ge1$, put $\delta=2^{-p}$, and choose a dyadic
$\varepsilon=2^{-r}$ such that
\begin{equation}
  0<\varepsilon
  \le\frac{\delta\mu^2}{2^{12}m}.
  \label{eq:local-internal-precision}
\end{equation}
The least such $r$ satisfies
$r=p+O(\log m+\log(1/\mu))$ and is found by exact rational comparisons.
Use the explicit construction in the proof of \Cref{lem:local-algebra} for
$a$ to obtain dyadic Hermitian matrices
$\widehat A,\widehat R$ with
\[
  \norm{\widehat A-a^{1/2}}_{\op}\le\varepsilon,
  \qquad
  \norm{\widehat R-a^{-1/2}}_{\op}\le\varepsilon.
\]
Form the exact rational Hermitian positive-semidefinite matrix
\[
  C_0=\widehat A b\widehat A.
\]
Because $\norm{\widehat A}_{\op}\le2$, multiplication and
\eqref{eq:local-internal-precision} give
\begin{equation}
  \norm{C_0-C}_{\op}
  \le3\varepsilon
  \le\frac{\mu^2}{4}.
  \label{eq:C0-error}
\end{equation}
Hence \Cref{lem:appendix-weyl-perturbation} gives
\[
  \frac{3\mu^2}{4}\Id\preceq C_0\preceq4\Id.
\]
We may therefore use the same construction for $C_0$ and obtain a dyadic
Hermitian $\widehat S$ satisfying
$\norm{\widehat S-C_0^{1/2}}_{\op}\le\varepsilon$.

For positive-definite $H,K$, the square-root perturbation bound
\begin{equation}
  \norm{H^{1/2}-K^{1/2}}_{\op}
  \le
  \frac{\norm{H-K}_{\op}}
       {\sqrt{\lambda_{\min}(H)}+\sqrt{\lambda_{\min}(K)}}
  \label{eq:square-root-perturbation}
\end{equation}
is standard; see \cite{Schmitt1992}.  Using \eqref{eq:C0-error} gives
\[
  \norm{\widehat S-C^{1/2}}_{\op}
  \le\varepsilon+\frac{3\varepsilon}{\mu}
  \le\frac{4\varepsilon}{\mu}.
\]
Define the dyadic matrices and scalar
\[
  \widehat B=\widehat R\widehat S,
  \qquad
  \widehat Z=\widehat A+\widehat B,
  \qquad
  \widehat f=\Tr\widehat S.
\]
Since $\norm{\widehat S}_{\op}\le3$, submultiplicativity gives
\begin{equation}
  \norm{\widehat B-B}_{\op}
  \le3\varepsilon
     +\mu^{-1/2}\frac{4\varepsilon}{\mu}
  \le7\varepsilon\mu^{-3/2}.
  \label{eq:Bhat-error}
\end{equation}
Also
\[
  \abs{\widehat f-f}
  \le m\norm{\widehat S-C^{1/2}}_{\op}
  \le\frac{4m\varepsilon}{\mu}.
\]

To approximate the bisector state while retaining an exact rational density
matrix for later use, put
\[
  \widehat G=\widehat Z\widehat Z^*,
  \qquad
  \widehat t=\Tr\widehat G,
  \qquad
  \Omega^{\mathrm{rat}}=\frac{\widehat G}{\widehat t}.
\]
Let $u=8\sqrt m\,\varepsilon\mu^{-3/2}$.  Then
\eqref{eq:Bhat-error} implies
$\norm{\widehat Z-(A+B)}_F\le u$.  Moreover,
$\norm{A+B}_F^2=2(1+f)\in[2,4]$, so
$\norm{\widehat Z}_F\le3$ and the Schatten H\"older inequality gives
\begin{equation}
  \norm{\widehat G-(A+B)(A+B)^*}_1
  \le5u.
  \label{eq:Gram-error}
\end{equation}
The choice \eqref{eq:local-internal-precision} makes $5u<1$; hence
$\widehat t\ge1$, and normalization gives
\begin{equation}
  \norm{\Omega^{\mathrm{rat}}-\Omega}_1
  \le10u<\frac\delta2.
  \label{eq:rational-bisector-error}
\end{equation}
Finally, choose
\[
  q_\Omega=p+3+\left\lceil\log_2m\right\rceil
\]
and set
$\widetilde\Omega=\mathcal D_{q_\Omega}(\Omega^{\mathrm{rat}})$.
Then \eqref{eq:dyadic-rounding} and
\eqref{eq:rational-bisector-error} give
$\norm{\widetilde\Omega-\Omega}_{\op}\le\delta$.

Set
\[
  \widetilde A=\widehat A,
  \qquad
  \widetilde B=\widehat B,
  \qquad
  \widetilde f=\widehat f.
\]
The bounds above and \eqref{eq:local-internal-precision} imply
\[
  \max\left\{
    \norm{\widetilde A-A}_{\op},
    \norm{\widetilde B-B}_{\op},
    \abs{\widetilde f-f},
    \norm{\widetilde\Omega-\Omega}_{\op}
  \right\}
  \le\delta.
\]
All calls to \Cref{lem:local-algebra} use precision
$p+O(\log m+\log(1/\mu))$ and rational spectral bounds of polynomial
encoding length.  All remaining operations are rational matrix products,
traces, normalization, and dyadic rounding.  This proves the stated running
time and completes the proof.
\end{proof}

\end{document}